\newif\ifFULL
\newif\ifACM
\newif\ifIEEE
\newif\ifLNCS
\newif\ifJOUR
\newif\ifCONF
\newif\ifrebuttal\rebuttalfalse
\pgfplotsset{width=8cm,compat=1.9}
\def\BibTeX{{\rm B\kern-.05em{\sc  i\kern-.025em b}\kern-.08em
    T\kern-.1667em\lower.7ex\hbox{E}\kern-.125emX}}
\tikzset{
    moon colour/.style={
        moon fill/.style={
            fill=#1
        }
    },
    sky colour/.style={
        sky draw/.style={
            draw=#1
        },
        sky fill/.style={
            fill=#1
        }
    },
    southern hemisphere/.style={
        rotate=180
    }
}
\def\synodicmonth{29.530588853}
\newcommand{\moon}[2][]{%
    \edef\checkfordate{\noexpand\in@{-}{#2}}%
    \checkfordate%
    \ifin@%
        \pgfcalendardatetojulian{#2}{\c@pgf@countb}%
        \pgfkeys{/pgf/fpu=true,/pgf/fpu/output format=fixed}%
        \pgfmathsetmacro\dayssincenewmoon{\the\c@pgf@countb-\the\c@pgf@counta-(7/24+11/(24*60))}%
        \pgfmathsetmacro\lunarage{mod(\dayssincenewmoon,\synodicmonth)}
        \pgfkeys{/pgf/fpu=false}
    \else%
        \def\lunarage{#2}%
    \fi%
    \pgfmathsetmacro\leftside{ifthenelse(\lunarage<=\synodicmonth/2,cos(360*(\lunarage/\synodicmonth)),1)}%
    \pgfmathsetmacro\rightside{ifthenelse(\lunarage<=\synodicmonth/2,-1,-cos(360*(\lunarage/\synodicmonth))}%
    \tikz [moon colour=white,sky colour=black,#1]{
        \draw [moon fill, sky draw] (0,0) circle [radius=1ex];
        \draw [sky draw, sky fill] (0,1ex)
            arc (90:-90:\rightside ex and 1ex)
            arc (-90:90:\leftside ex and 1ex)
            -- cycle;
    }%
}
\pgfplotsset{compat=newest} 
\pgfplotsset{plot coordinates/math parser=false}
\renewenvironment{description}{\begin{LaTeXdescription}}{\end{LaTeXdescription}}
\newtheorem{definition}{Definition}
\newtheorem{theorem}{Theorem}
\newtheorem{lemma}{Lemma}
\newtheorem*{remark}{Remark}
\newtheorem{definition}{Definition}
\newtheorem{theorem}{Theorem}
\newtheorem{lemma}{Lemma}
\newcommand*\emptycirc[1][0.8ex]{\moon{14}} 
\newcommand*\halfcirc[1][0.8ex]{\moon{8}}
\newcommand*\fullcirc[1][0.8ex]{\moon{0}} 
\newcounter{reviewer}
\newcounter{comment}[reviewer]
\newcounter{shepherd}
\newcounter{scomment}[shepherd]
\renewcommand{\thereviewer}{\Alph{reviewer}}
\renewcommand{\thecomment}{\thereviewer.\arabic{comment}}
\def\reviewer{%
	\refstepcounter{reviewer}%
	\section*{\textsf{Comments from Reviewer }\thereviewer}}
\def\comment{%
	\refstepcounter{comment}%
	\noindent\subsection*{\textbf{Comment}~\thecomment}\ignorespaces}
\def\reply{{\textbf{Reply }}}
\def\reviewnote#1{\ifrebuttal\expandafter\marginpar{\ref{#1}}\fi}
\def\reviewnotemulti#1{\ifrebuttal\expandafter\marginpar{#1}\fi}
\newenvironment{note}%
{\ifrebuttal%
	\begin{color}{red}
		\else  
		\begin{color}{black}
			\fi}
		{\end{color}}
	\def\rem#1{}
\def\add#1{\begin{note}{#1}\end{note}}
\newcommand{\Func}{\mathcal{F}}
\newcommand{\Sim}{\mathsf{S}}
\newcommand{\advZ}{\mathsf{Z}}
\newcommand{\alg}{\mathsf{A}}
\newcommand{\bin}{\{0,1\}}
\newcommand{\num}{n}
\newcommand{\thr}{t}
\newcommand{\secpar}{\lambda}
\newcommand{\Prob}[1]{\mathrm{Pr}\left[#1\right]}
\newcommand{\advD}{\mathsf{D}}
\newcommand{\advA}{\mathsf{A}}
\newcommand{\advS}{\mathsf{S}}
\newcommand{\getsr}{{\:{\leftarrow{\hspace*{-3pt}\raisebox{.75pt}{$\scriptscriptstyle\$$}}}\:}}
\newcommand{\REAL}{\mathbf{REAL}}
\newcommand{\IDEAL}{\mathbf{IDEAL}}
\newcommand{\HYBRID}{\mathbf{HYBRID}}
\newcommand{\cind}{\overset{c}{\approx}}
\newcommand{\drate}{\delta}
\newcommand{\FuncCR}{\mathcal{F}_\mathsf{CR}^*}
\newcommand{\wit}{\ensuremath{w}}
\newcommand{\coins}{\ensuremath{\mathsf{coins}}}
\newcommand{\Commit}{\ensuremath{\mathsf{Commit}}}
\newcommand{\reconstruct}{\ensuremath{\mathsf{Recon}}}
\newcommand{\share}{\ensuremath{\mathsf{Share}}}
\newcommand{\FuncML}{\ensuremath{\Func^*_{\sf ML}}}
\newcommand{\lock}{\ensuremath{\mathtt{lock}}}
\newcommand{\locked}{\ensuremath{\mathtt{locked}}}
\newcommand{\redeem}{\ensuremath{\mathtt{redeem}}}
\newcommand{\payout}{\ensuremath{\mathtt{payout}}}
\newcommand{\abort}{\ensuremath{\mathtt{abort}}}
\newcommand{\cominput}{\ensuremath{\mathtt{input}}}
\newcommand{\comoutput}{\ensuremath{\mathtt{output}}}
\newcommand{\cont}{\ensuremath{\mathtt{continue}}}
	\newcommand{\npv}{\eta}
\newcommand{\hyb}{\ensuremath{\mathbf{Hyb}}}
\newcommand{\Enc}{\ensuremath{\mathsf{Enc}}}
\newcommand{\Dec}{\ensuremath{\mathsf{Dec}}}
\newcommand{\dist}{\ensuremath{\mathcal{D}}}
\newcommand{\payback}{\ensuremath{\mathsf{payback}}}
\newcommand{\NN}{\mathbb{N}}
\newcommand{\abs}[1]{\lvert#1\rvert}
\newcommand{\cX}{\mathcal{X}}
\newcommand{\rv}[1]{\mathbf{#1}}
\newcommand{\aux}{z}
\newcommand{\func}{f}
\newcommand{\gunc}{g}
\newcommand{\party}{\mathsf{P}}
\newcommand{\calI}{\mathcal{I}}
\newcommand{\calC}{\mathcal{C}}
\newcommand{\calH}{\mathcal{H}}
\newcommand{\deposit}{\ensuremath{d}}
\newcommand{\return}{\ensuremath{r}}
\newcommand{\rate}{\ensuremath{\delta}}
\newcommand{\calD}{\mathcal{D}}
\newcommand{\calR}{\mathcal{R}}
\newcommand{\calU}{\mathcal{U}}
\newcommand{\Cost}{\ensuremath{\chi}}
\newcommand{\Tx}{\mathtt{tx}}
\newcommand{\ctx}{c}
\newcommand{\msg}{\mu}
\newcommand{\com}{\mathit{\gamma}}
\newcommand{\rndcom}{\alpha}
\newcommand{\calM}{\mathcal{M}}
\newcommand{\calS}{\mathcal{S}}
\newcommand{\shares}{\sigma}
\newcommand{\calK}{\mathcal{K}}
\newcommand{\key}{\kappa}
\newcommand{\crh}[1]{\mathsf{H}(#1)}
\def\bitcoinB{\leavevmode
	{\setbox0=\hbox{\textsf{B}}%
		\dimen0\ht0 \advance\dimen0 0.2ex
		\ooalign{\hfil \box0\hfil\cr
			\hfil\vrule height \dimen0 depth.2ex\hfil\cr
		}%
	}%
}
\newcommand{\btc}{\bitcoinB}
\newcommand{\btchtx}{\ensuremath{\mathsf{htx}}}
\newcommand{\btctx}{\ensuremath{\mathsf{tx}}}
\newcommand{\btcidx}{\ensuremath{\mathsf{idx}}}
\newcommand{\btctxsimp}{\ensuremath{\mathsf{tx{simp}}}}
\newcommand{\btcscript}{\ensuremath{\phi}}
\newcommand{\btcinscript}{\ensuremath{w}}
\newcommand{\btcs}{\ensuremath{S}}
\newcommand{\btcr}{\ensuremath{R}}
\newcommand{\hk}{\ensuremath{h}}
\newcommand{\kk}{\key}
\newcommand{\kt}{\ensuremath{\tau}}
\newcommand{\pk}{\ensuremath{pk}}
\newcommand{\sk}{\ensuremath{sk}}
\newcommand{\sig}{\ensuremath{\mathsf{Sign}}}
\newcommand{\ver}{\ensuremath{\mathsf{Vrfy}}}
\newcommand{\setN}{\mathbb{N}}
\newcommand{\LMech}{\texttt{L}}
\newcommand{\LLMech}{\texttt{LL}}
\newcommand{\CLMech}{\texttt{CL}}
\newcommand{\PLMech}{\texttt{PL}}
\newcommand{\CPLMech}{\texttt{CPL}}
\newcommand{\ALMech}{\texttt{AL}}
\newcommand{\MLMech}{\texttt{ML}}
\newcommand{\CMLMech}{\texttt{CML}}
\newcommand{\IMPC}{\texttt{IMPC}}
\newcommand{\Ladder}{\texttt{Ladder}}
\newcommand{\LockedLadder}{\texttt{Locked Ladder}}
\newcommand{\CompactLadder}{\texttt{Compact Ladder}}
\newcommand{\PlantedLadder}{\texttt{Planted Ladder}}
\newcommand{\AmortizedLadder}{\texttt{Amortized Ladder}}
\newcommand{\MultiLock}{\texttt{Multi-Lock}}
\newcommand{\CompactML}{\texttt{Compact ML}}
\newcommand{\CompactPL}{\texttt{Compact PL}}
\newcommand{\InsuredMPC}{\texttt{Insured MPC}}
\newcommand{\inp}{x}
\newcommand{\out}{y}
\newcommand{\round}{\ensuremath{t}}
\newcommand{\timeout}{\ensuremath{\tau}}
\newcommand{\ignore}[1]{}
\newcolumntype{R}[2]{%
    >{\adjustbox{angle=#1,lap=\width-(#2)}\bgroup}%
    l%
    <{\egroup}%
}
\newcommand*\rot{\multicolumn{1}{R{30}{1em}}}
\begin{document}
	
	\ifrebuttal
	\clearpage \onecolumn
\begin{center}
	{\LARGE \bfseries Revised Version}
	
	\vspace{2\baselineskip}
	
	{\Large Cryptographic and Financial Fairness}
\end{center}

\vspace{2\baselineskip}

\section*{\textsf{Comments from the EiC Acceptance Letter}}

Since the paper is constrained to be 16 pages in total and the original submitted article was 15 pages plus supplemental material, we have moved all supplemental materials after page 15 to an ArxiV report xxxx and referenced it into the paper. 

The ArXiv report does not mention yet the acceptance on TIFS according to the IEEE Copyright rules since this is still provisional. We will add this information after the final acceptance letter.

\reviewer \label{rA} 

\comment\label{A:summary}
Recommendation: A - Publish Unaltered

Comments:
The authors have satisfactorily addressed the issues raised in the previous reviews.

Additional Questions:
1. Is the topic appropriate for publication in these transactions?: Excellent Match

1. Is the paper technically sound?: Yes

2. How would you rate the technical novelty of the paper?: Very Novel

Explain: The paper introduces new notions of financial fairness and offers new insights on the relationship between previous works that achieved similar notions.

3. Is the contribution significant?: Significant

4. Is the coverage of the topic sufficiently comprehensive and balanced?: Yes

5. Rate the Bibliography: Satisfactory

1. How would you rate the overall organization of the paper?: Satisfactory

2. Are the title and abstract satisfactory?: Yes

3. Is the length of the paper appropriate? If not, recommend what should be added or eliminated.: Yes

4. Are symbols, terms, and concepts adequately defined?: Yes

5. How do you rate the English usage?: Satisfactory

\reply We thank the reviewer.

\reviewer \label{rB}

Recommendation: AQ - Publish With Minor, Required Changes

Comments:

\comment\label{B:c0}
I consider the current version of the paper quite improved, more accurate, and easier to read than the previous.

However, Section IV.C still seems somewhat superficial to me. The authors present CML, a new penalty protocol that illustrates the trade-offs "that a protocol must face when looking into an efficient and financially fair penalty protocol". It is not explicitly stated anywhere what the trade-offs are, let alone proved. CML is claimed "both cryptographically and financially fair", but this is not proved.

\reply We have modified section IV.C since our security analysis is in Section V.A and added there a proof sketch. The full proof with the necessary definition is in the arXiv material \cite{arxiv}.


\comment\label{B:c1}

It is also mentioned that CML "fails to achieve UC security", but there is no impossibility proof for this, and there is not proof that the scheme would satisfy UC security if we allowed it to be "less" fair. Hence, the trade-off discussion is not scientifically backed.

\reply 
The CML is more about efficiency vs security while keeping fairness rather then fairness vs security. We argued in Section III that the three properties of efficiency, security and financial fairness are independent dimensions so one might satisfy one and not the other. However, we agree that trade-off is misleading so we have entirely removed it and now mention alternative design choices. 

Added a discussion on UC security as a remark at the end of IV.C


\comment\label{B:c2}
It is also not clear why the word "New" appears in parentheses in the title. A protocol is either new or not new.

\reply Removed the parenthesis.

\comment\label{B:c3}
Although the rest of the paper is a significant contribution, I believe Section IV.C should be made more formal and prevalent in the paper, because it is a basic result for future work / protocol developers.

\reply We would like to keep the design of the paper as it is because we think that the financial fairness is the real contribution as highlighted by this reviewer and the reviewer. Moving section IV.C in a more preminent place will obscure the contribution.

\comment\label{B:c4}
2. How would you rate the technical novelty of the paper?: Novel Enough for Publication

Explain: The authors present "financial fairness", a new fairness property for MPC protocols that overcomes limitations in previous fairness definitions. Then they analyse existing protocols based on financial fairness.
I believe this is a novel and significant enough contribution.

\reply Thank you see above point.

\reviewer \label{rC} 

Recommendation: AQ - Publish With Minor, Required Changes

Comments:
The majority of critical comments were addressed by the authors. However, the degrees of satisfiability across the issues do differ. For the sake of improving paper quality, I recommend taking into account the following comments.   

\comment\label{C:c0}

“C.2” -- I would strongly encourage authors to explicitly reason about academic novelty: include a paragraph about the novelty at the beginning of the paper. Please, explain why (in your opinion) ‘financial fairness’ has not been addressed in MPC in the past.

\reply We have added a paragraph immediately after the research question to stress that.
\begin{quotation}
\add{Remark that, the problem of judging whether an MPC protocol has never surfaced in the past until the recent trend of achieving cryptographic fairness of a protocol via monetary penalty protocols that require locking and then releasing the fund of the MPC participants.}
\end{quotation}

\comment\label{C:c1}

“C.8” – Reply “We do not discuss utilities across different protocols…” is quite confusing. What is the meaning of ‘Fig. 7. Total Amount of Deposit per Party and Protocol’ then? It remains unclear why authors refuse that the ‘the base unit used for penalization’ – which is $q$ (see p. 11, line 30, left column) – may be different for different protocols. For instance, Italian bonds are cheaper (yield more) than German bonds due to the higher risk of default. In a similar way, “MPC Market” should adjust values of $q_1$, $q_2$, $q_3$ … for L, LL, PL …, respectively. These differences should be used to reason about the MEAN of the cost in different algorithms.

\reply The confusion here is due to the missing distinction between a player and a protocol. The very same player might run a CML protocol for blackjack in US Dollar or ML for Poker in Euro and LL for dog races in RMN in the same way that a stocktrader might invest in portfolio of junk bonds and stellar bonds. However, this is immaterial to the evaluation of the protocols as such. In Figure 7 we compare protocols assuming they are run in the same notational system for the same final outcome as this is the only way to make a sound design decision. So MPC markets could definietly adjust values 
of $q$s as $q$s are in Euro or Dollars. But this is true of any multi-agent trading systems and is not the subject of the present paper and could be applied to any temporal investment strategy. We have added a footnote in Section VI.D to clarify so.

\begin{quotation}
\add{We assume the protocols are run in the same notational system for the same final outcome as this is the only way to make a sound design decision. So MPC markets could definietly adjust values 
of $q$s as $q$s are in currency units such as Euro or Dollars. But the same phenomenon is true of any multi-agent trading systems and is not the subject of the present paper and could be applied to any temporal investment strategy.}
\end{quotation}

\comment\label{C:c2}

[-] The authors are strongly encouraged to reconsider the paragraph about Expected Utility Theory (p.1, line 36, left). Statement (p.2, line 10, right) ``In penalty protocols there is no stochastic uncertainty in the returns of the deposits or the withdrawals are entirely deterministic <ADD HERE for a given protocol>'' is deductively unsound. To show this: 1) we recall ``The number of parties could be 55 in an average trading day...'' (p.11, line 17, left) meaning that this number may vary; 2) later, in Fig. 5 we find that execution time grows with the number of parties; 3) we finally establish that execution time affects the cost through `Discount rate'  – see eq. (1). Hence, the cost may vary.

\reply 

The cost may vary in time for a GIVEN protocol (THIS IS THE WHOLE POINT OF THE PAPER). But for every given penalty protocol, once a deposit is done in q-units, the same q-units are withdrawn, albeit at a later amount of time.  The example with bonds is illustrative here as one may buy shares and when one sells them the price might be different because one converts among different units. The same would be of course true here if $q$ were Euro and later we would convert $q$s into US Dollars on a exchange. But this is immaterial to the present discussion and it is true for every notational system.

We have added a footnote to reflect this issue adding ``for any given protocol''. 
\begin{quotation}
\add{Obviosuly if $q$ are traded for other different currencies (e.g. one intially $q$s are euro and later one would like to cash withdrawals in US dollars, there might be stochastics uncertainties. But this is due to the fact that we change the notational money mid-way and this would apply to any system in presence of exchange rates among goods. This issue is immaterial to the present paper as for every given penalty protocol, once a deposit is done in q-units, the same q-units are withdrawn, albeit at a later amount of time.}
\end{quotation}

\comment\label{C:c3}
[-] Please, explicitly mention units along the abscissa on Figs. 4-6.

\reply We mentioned units on Figs 4-6.

Additional Questions:
1. Is the topic appropriate for publication in these transactions?: Adequate Match

1. Is the paper technically sound?: Yes

2. How would you rate the technical novelty of the paper?: Novel Enough for Publication

Explain: Please, treat this in the context of my comments from the previous iteration.

\comment\label{C:c4}

``C.2'' -- I would strongly encourage authors to explicitly reason about academic novelty: include a paragraph about the novelty at the beginning of the paper. Please, explain why (in your opinion) ‘financial fairness’ has not been addressed in MPC in the past.

\reply Please see answer to comment \ref{C:c0}.

\comment\label{C:rest}

3. Is the contribution significant?: Incremental

4. Is the coverage of the topic sufficiently comprehensive and balanced?: Treatment somewhat unbalanced, but not seriously so

5. Rate the Bibliography: Satisfactory

1. How would you rate the overall organization of the paper?: Could be improved

2. Are the title and abstract satisfactory?: Yes

3. Is the length of the paper appropriate? If not, recommend what should be added or eliminated.: Yes

4. Are symbols, terms, and concepts adequately defined?: Yes

5. How do you rate the English usage?: Satisfactory

\reply All comments have been addressed.

\section*{\textsf{Comments from Associate Editor}}

Based on the enclosed set of reviews, your manuscript requires MINOR REVISIONS before acceptance (AQ).

\vspace{2\baselineskip}

\reply We would like to thank the Reviewers and the Editor for taking the time to
review our paper submitted to T-IFS. 
In this submission you will find a detailed response to the decision letter and the comments of the reviewers 
as well as the revised manuscript.
\begin{enumerate}
	\item In the first part we provide a detailed reply for each major comment. 
	To ease readability, and make the rebuttal as self-contained as possible
	we have also reported the text  modified (or added) in the paper
	\item The revised paper where we marked
	\add{major changes in the revised paper in dark red}
	and added  a marginal
	note with reference to each reviewer's comments by using the following format:
	$X.y$, where $X$ is Reviewer's identifier (\ref{rA}, \ref{rB}, \ref{rC}) and
	$y$ is the number of Comment of Reviewer $X$ in the response. For example,
	``\ref{C:c0}'' marks the changes aiming to address Comment \ref{C:c0} by Reviewer \ref{rC}.
	\item Minor comments have been directly implemented. In several places we have also slightly
	rephrased sentences to save an orphan or eliminate a widow. Both types of
	changes have not been marked for sake of readability.
\end{enumerate}

\noindent Thank you for your attention,

\hfill The authors \hfill\,

\clearpage
\twocolumn
\setcounter{page}{1}
	\fi

\ifJOUR
\title{Cryptographic and Financial Fairness}
\fi

\ifCONF
\title{Cryptographic and Financial Fairness \\ 
\large Proofs, Experimental Data and Supplementary Material}
\fi

\ifCONF
\author{\IEEEauthorblockN{Daniele Friolo}
\IEEEauthorblockA{\emph{Sapienza University of Rome}\\
friolo@di.uniroma1.it}
\and
\IEEEauthorblockN{Fabio Massacci}
\IEEEauthorblockA{\emph{University of Trento} and\\
\emph{Vrije Universiteit Amsterdam}\\
fabio.massacci@ieee.org}
\and
\IEEEauthorblockN{Chan Nam Ngo\textsuperscript{*}\thanks{\textsuperscript{*}The majority of the work presented here was performed while the author was working for the University of Trento, IT.}}
\IEEEauthorblockA{\emph{Kyber Network}\\
nam.ngo@kyber.network}
\and
\IEEEauthorblockN{Daniele Venturi}
\IEEEauthorblockA{\emph{Sapienza University of Rome}\\
venturi@di.uniroma1.it}
}
\fi
\ifJOUR
\author{Daniele Friolo, Fabio Massacci, \IEEEmembership{Member,~IEEE}, Chan Nam Ngo, \IEEEmembership{Member,~IEEE}, and Daniele Venturi,~ \IEEEmembership{Senior Member,~IEEE}
\thanks{Daniele Friolo (friolo@di.uniroma1.it) and Daniele Venturi (venturi@di.uniroma1.it) are with \emph{Sapienza University of Rome, IT}.}
\thanks{Fabio Massacci (fabio.massacci@ieee.org) is with \emph{University of Trento, IT} and
\emph{Vrije Universiteit Amsterdam, NL}.}
\thanks{Chan Nam Ngo (nam.ngo@kyber.network) is with \emph{Kyber Network}. The majority of the work presented here was performed while the author was working for the University of Trento, IT.}
}
\fi

\maketitle

\begin{abstract}
A recent trend in  multi-party computation is to achieve 
{\em cryptographic fairness} via monetary penalties, 
{\em i.e.}\ each honest player  either obtains the output 
or receives a compensation in the form of a 
cryptocurrency.
We pioneer another type of fairness, {\em financial 
fairness}, that is closer to the real-world valuation of 
financial transactions. Intuitively, a penalty protocol is 
financially fair if the \emph{net present cost of 
participation} (the total value of cash inflows less cash 
outflows, weighted by the relative discount rate) is the 
same for all honest participants, even when some parties 
cheat.

We formally define the notion, show 
several impossibility results based on game theory,
and analyze the practical effects of (lack of) financial
fairness if one was to run the protocols for
real on Bitcoin using Bloomberg's dark pool trading.

For example, we show that the ladder protocol 
(CRYPTO'14), and its variants (CCS'15 and CCS'16), fail 
to achieve financial fairness both in theory and in 
practice, while the penalty protocols of Kumaresan and Bentov (CCS'14) and Baum, David and Dowsley (FC'20) are financially fair. 
\ifCONF
This version contains formal definitions, detailed security proofs, demos and experimental data in the appendix.
\fi
\end{abstract}

\ifIEEE
\begin{IEEEkeywords}
multi-party computation, fairness, penalties
\end{IEEEkeywords}
\fi

\section{Introduction}\label{sec:intro}
\ifJOUR \IEEEPARstart{I}{n} \fi \ifCONF In \fi multi-party computation (MPC), a set of $n$ players wishes to evaluate a 
joint function $f$ of their private inputs in such a way that nothing beyond the function's 
output is revealed~\cite{LindellP08a}. An important property in MPC is the so-called {\em 
cryptographic fairness}, which intuitively says that corrupted parties learn the output 
only if honest parties learn it as well. Unfortunately, without assuming {\em honest majority} ({\em e.g.},\ for two parties)
there are concrete examples of functions for which 
cryptographic fairness is impossible to achieve~\cite{Cleve86}.

To circumvent this impossibility, several solutions have been proposed:
restricted functionalities~\cite{GordonHKL11}, partial 
fairness~\cite{GordonK12,MoranNS16}, gradual release 
protocols~\cite{GarayMPY11}, 
optimistic models~\cite{CachinC00}, and incentivized 
computation~\cite{AsharovLZ13}.
A recent trend
is to guarantee  
cryptographic fairness 
via
monetary compensation (a.k.a.\ {\em cryptographic fairness with penalties}).

This approach gained momentum as decentralized payment systems ({\em e.g.},\ Bitcoin and Ethereum)
offer a convenient way to realize such {\em penalty 
protocols}~\cite{AndrychowiczDMM14,BentovK14,KumaresanB14,kumaresan2015use,kumaresan2016improvements,KumaresanB16,KiayiasZZ16,BentovKM17,david2017kaleidoscope}. The main idea is 
that each party can publish a transaction containing a time-locked deposit 
which can be redeemed by honest players in case of malicious aborts during a protocol run. On the other hand, if no 
abort happens, a deposit owner can redeem the corresponding transaction by showing 
evidence of having completed the protocol.




A concern that is \emph{not} discussed in penalty 
protocols  is the amount of money that should be put into 
escrow nor the time it should stay there. The assumption is 
that it does not matter because all parties would {eventually get their money back.}
While true when the deposit $\deposit$ is a symbol in a 
crypto paper, things differ when $\deposit$ is a noticeable 
amount in a bank account. 

In fact, empirical studies show that
people have a strong preference for immediate payments, 
and receiving the same amount of money later than others 
is often not acceptable~\cite{brown2015empirical}. Even 
for the wealthy, there is the opportunity cost of not 
investing it in better 
endeavors~\cite{lee2016myopic}. For example, in a 
classical experimental 
study~\cite{benzion1989discount}, individuals asked to 
choose between immediate 
delivery of money and a deferred payment (for amounts 
ranging from \$40 to \$5000 )
exhibited a discount rate close to the official borrowing rate. 
These results are consistent across countries ({\em e.g.},\ \cite{benzion1989discount} in the US 
and~\cite{ahlbrecht1997empirical} in Germany). 
Individuals and companies exhibit 
varying degree of risk 
aversion~\cite{brown2015empirical,lee2016myopic}, but 
they all agree that money paid or received ``now'' has a 
greater value than the same 
amount received or paid 
``later''~\cite{angeletos2001hyperbolic}, and that 
small deposits are always preferable to large deposits.
In FinTech, where the base chip $\deposit$ to 
play is a million US\$~\cite{hatch2009reforming,massacci2018fmex}, 
the timings and amounts of deposits can make a huge difference in practice.

Indeed, as shown later in our experimental evaluation, a party could suffer up to 0.49\% loss as devaluation by locking deposit in a long protocol execution time.
Given the observations above, the following research question arises:
\begin{quote}
\emph{How ``fair'' is a cryptographically fair protocol with penalties from a ``financial'' point of view?}
\end{quote}

\add{Remark that, the problem of judging whether an MPC protocol has never surfaced in the past until the recent trend of achieving cryptographic fairness of a protocol via monetary penalty protocols that require locking and then releasing the fund of the MPC participants.\ifrebuttal\reviewnotemulti{\ref{C:c0}}\fi}

\textbf{Our Contribution.}
As a useful guide to the extant and future literature, in this 
paper we recall two traditional criteria and provide a new one
for characterizing and evaluating penalty protocols: security models and assumptions, protocol efficiency and, for the first time, \emph{financial fairness}, defined and motivated below.

\emph{Traditional Criterion \#1: Security Models and Assumptions.}
Following the principles of modern cryptography, a secure protocol should be accompanied with a formal proof of security in a well defined framework. The standard definitions for MPC (with and without penalties) follow the simulation-based paradigm and are reviewed in \S\ref{sec:background-security}
, along with the main assumptions required for proving security.

\emph{Traditional Criterion \#2: Protocol Efficiency.}
The efficiency of penalty protocols over blockchains is typically measured w.r.t.: 
(i) the number of transactions sent to the public ledger (relative to the total transaction fees);
(ii) number of interaction rounds with the public ledger; and
(iii) the script complexity, that intuitively corresponds to the public ledger miners' verification load and the space occupied on the public ledger.\footnote{Recent advances in off-chain execution such as RollUp~\cite{kalodner2018arbitrum} can indeed circumvent the efficiency issue. However, the opportunity cost upon locking the deposits into the penalty protocol on-chain when boostraping the off-chain protocol is still an issue.}
We elaborate more on these efficiency criteria in \S\ref{sec:eff_cri}.

\emph{New Criterion: Financial Fairness.} In a nutshell, a penalty protocol is financially fair if the difference between the total \emph{discounted} value of cash inflows and the total \emph{discounted} value of cash outflows of honest parties at the end of the protocol is the same (even when some parties cheat). 
In \S\ref{sec:ffair}, we discuss the principle of financial fairness at a level of abstraction that captures a large class of penalty protocols implementing monetary compensation via any kind of currency (with or without smart contracts).\footnote{
From a technical perspective all participants to a protocol are bound to it, otherwise it is technically impossible. We do not discuss utilities across different protocols, but utilities within runs of the same protocol. The same apply to crypto-fairness which is a property of a protocol. 
    As such, we are not comparing rewards of different protocols, e.g. we are not comparing participant A using protocol 1 and participant B using protocol 2, because they can \emph{not} do it. All participants in a protocol do have the same protocol! A protocol must be agreed by all participants and therefore it must be fair for all participants. 
}
\footnote{Remark that financial fairness is a criteria and not a methodology. \add{Our aim is to show the alternatives between financial fairness, efficiency and security.}}

\textbf{Implications for Practice.}
We argue that the lack of attention to financial aspects is also one of the causes behind lack of adoption of the aforementioned penalty protocols. Our focus is at first on designers.\footnote{
For example our ``illustrative protocol'' (c.f.~\ref{subsec:newprot}) guides designers into the alternative choices where parallel runs of protocols could increase the speed of wealth accumulation. If the complexity of locally computed share is negligible, and the tasks are in abundance (an infinite number of tasks can be run in parallel) then all financially fair MPC protocols would be equally attractive as they might  be constrained by the funds (for deposits) available to each participant. However it might not be possible, as running the illustrative protocol in parallel might be insecure and vulnerable to attacks.} Users can use the model and the software we have used for simulations and our software to check whether the protocol meets their interests.\footnote{
Available at \url{https://github.com/namnc/financial_fairness}.}
Our scenario with Bitcoin and Bloomberg Tradebook is an illustrative example, specific timing determines absolute differences between expected utilities in different protocols. While the protocol preference order may not change when time-discounting varies, the effect might also vary. We chose those because Bitcoin is the most popular decentralized network and Bloomberg because it is a practical industrial application where security is essential (as cryptographic and financial fairness are both important: who would use a protocol that one can scott-free abort or one has to pay more to achieve the same purpose).
In the fully general scenario the value of coins might also change during time (e.g. fluctuations in currencies rate), and this can be considered by in the evaluation function.  
There are both financial (e.g. currency options) or technical (e.g. stable coins) instruments to shield oneself from fluctuations.

\textbf{Relation to Expected Utility Theory.}
Expected Utility Theory (EUT) and Mean-Variance (MV) analysis are not relevant to Financial Fairness and our protocols analysis as they have to do with a portfolio evaluation in presence of (typically) stochastic uncertainty of returns. 
For example one uses MV or EUT to establish that an investment in assets X is ``better'' than an investment in assets Y because X might return 10\% of the initial investments and do so with very low volatility (i.e. variance) while Y might return 20\% but do so with a very high volatility or even generate a loss. Depending on the risk aversion of the user,  A and B might therefore be better than X and Y.
In penalty protocols there is no stochastic uncertainty in the returns of the deposits or the withdrawals are entirely deterministic. There is only an epistemic uncertainty as a player might decide to stop the protocol beforehand.
The only possible stochastic uncertainty is the final evaluation of the overall ``outerprotcol'' if such outcome have ones (e.g. the result of the actual poker hand). However this is immaterial to the penalty protocol to arrive to the final state.\footnote{\add{Obviosuly if $q$ are traded for other different currencies (e.g. one intially $q$s are euro and later one would like to cash withdrawals in US dollars, there might be stochastics uncertainties. But this is due to the fact that we change the notational money mid-way and this would apply to any system in presence of exchange rates among goods. This issue is immaterial to the present paper as for every given penalty protocol, once a deposit is done in q-units, the same q-units are withdrawn, albeit at a later amount of time.}}\ifrebuttal\reviewnotemulti{\ref{C:c2}}\fi\footnote{
One can of course generalize the section on the Utility to use a utility function and use the exponential discount with risk neutral evaluation as an example for the calculation. However for the sake of simplicity we use the simplest discount function that is sufficient for illustrative purpose.
}

\textbf{Summary of Results.}
We first review the main state-of-the-art penalty protocols in \S\ref{sec:protocols}, namely \Ladder~\cite{BentovK14},
\MultiLock~\cite{KumaresanB14}, \InsuredMPC~\cite{BaumDD18},  \CompactLadder~\cite{kumaresan2016improvements}, \LockedLadder~\cite{kumaresan2015use},  \AmortizedLadder~\cite{KumaresanB16},  \PlantedLadder~\cite{kumaresan2016improvements}, and \CompactPL~\cite{kumaresan2016improvements}, then we do an exhausting comparison using the above criteria. In \S\ref{subsec:newprot} we introduce a new protocol, namely \CompactML, to illustrate \add{the alternative design that meet \ifrebuttal\reviewnotemulti{\ref{B:c1}}\fi} financial fairness but does not achieve UC-security.
A comprehensive summary of the comparisons done throughout the paper is given in Tab.~\ref{fig:summary}.

In \S\ref{sec:compsec}, we evaluate and compare penalty protocols in terms of security model and assumptions (\S\ref{subsec:assumptions}) and asymptotic protocol efficiency (\S\ref{sec:eff_sim}).
In \S\ref{sec:thfin} we introduce a theoretical analysis of financial fairness in terms of possibility and impossibility results.
In \S\ref{sec:negative}, we  study financial fairness via empirical simulations on the differences in deposits and net present values as the number of parties increases to a level that would be needed for
realistic FinTech applications like the Bloomberg Tradebook.

\begin{table*}[!t] 
	\centering
	\caption{Comparing Penalty Protocols}
	\label{fig:summary}
	\begin{minipage}{1.8\columnwidth}
	We compare state-of-the-art penalty protocols under the criteria of security models and assumptions, efficiency, and financial fairness. 
	The last four protocols are for multi-stage functionalities, i.e. functionalities maintaining a persistent state for multiple rounds of interaction, while the first five protocol realize non-reactive functionalities, i.e. functionalities who are reset after the function has been computed. Regarding efficiency, a more detailed discussion with a concrete analysis tailored to Bitcoin implementations is given in \S\ref{sec:compsec}. 
\end{minipage}

	\begin{tabular}{llcc}
			\centering
		\textbf{Criteria} & \textbf{Description} & 
		\begin{tabular}{lllll}
		\rot{\Ladder~\cite{BentovK14}}  &
	\rot{\MultiLock~\cite{KumaresanB14}} &
	\rot{\InsuredMPC~\cite{BaumDD18}} &
	\rot{\CompactLadder~\cite{kumaresan2016improvements}} &
	\rot{\CompactML\ (\S\ref{subsec:newprot})}
	\end{tabular}
					&
							\begin{tabular}{llll}
		\rot{\LockedLadder~\cite{kumaresan2015use}}  &
	\rot{\AmortizedLadder~\cite{KumaresanB16}} &
	\rot{\PlantedLadder~\cite{kumaresan2016improvements}} &
	\rot{\CompactPL~\cite{kumaresan2016improvements}} 
	\end{tabular} 
		\\
		\hline
		\#1 \emph{Security (model)} &
		
		\begin{tabular}{ll}
		 \fullcirc & Universal Composability \\ 
		 \halfcirc & Sequential Composability \\
		 \emptycirc & Not Provably Secure
		 
			\end{tabular}
			&
			 	\begin{tabular}{lllll}
					\fullcirc\  &
					\fullcirc &
					\fullcirc &
					\emptycirc &
					\halfcirc
				\end{tabular}	
				& 
				\begin{tabular}{llll}
					\fullcirc\  &
					\fullcirc &
					\fullcirc &
					\emptycirc
				\end{tabular}
				\\
				\hline
				\#1 \emph{Security (assumptions)} &
					\begin{tabular}{ll}
		 \fullcirc & Plain Model \\ 
		 \halfcirc\ & Random Oracle Model \\
		 \emptycirc & Not Provably Secure
			\end{tabular} &
				\begin{tabular}{lllll}
					\fullcirc  &
					\fullcirc &
					\moon{8} &
					\emptycirc &
					\fullcirc
				\end{tabular}
						& 
				\begin{tabular}{llll}
					\fullcirc  &
					\fullcirc &
					\fullcirc &
					\emptycirc
				\end{tabular}
			\\ \hline
		\#2 \emph{Efficiency} (rounds) &
			\begin{tabular}{ll}
			\fullcirc & Constant \\
			\halfcirc & Linear in the num. of parties \\
			\emptycirc & Quadratic in the num. of parties
			\end{tabular}
		& 
		\begin{tabular}{lllll}
					\halfcirc  &
					\fullcirc &
					\fullcirc &
					\halfcirc &
					\fullcirc
				\end{tabular}
					& 
				\begin{tabular}{llll}
					\emptycirc  &
					\emptycirc &
					\halfcirc &
					\halfcirc
				\end{tabular}
		\\

			\hline
					\#2 \emph{Efficiency} (transactions) &
			\begin{tabular}{ll}
			\fullcirc & Linear in the num. of parties \\
			\emptycirc & Quadratic in the num. of parties
			\end{tabular}
		& 
		\begin{tabular}{lllll}
					\fullcirc  &
					\emptycirc &
					\fullcirc &
					\fullcirc &
					\emptycirc
				\end{tabular}
					& 
				\begin{tabular}{llll}
					\emptycirc  &
					\emptycirc &
					\fullcirc &
					\fullcirc
				\end{tabular}
		\\

			\hline
			\#2 \emph{Efficiency} (complexity) & 
						\begin{tabular}{ll}
			\fullcirc & Output independent \\
			\emptycirc & Output dependent
			\end{tabular}
			&
					\begin{tabular}{lllll}
					\emptycirc  &
					\emptycirc &
					\emptycirc &
					\fullcirc &
					\fullcirc
				\end{tabular}
							& 
				\begin{tabular}{llll}
					\emptycirc  &
					\emptycirc &
					\emptycirc & 
					\fullcirc
				\end{tabular}
			\\ \hline
		\#3 \emph{Financial Fairness} &
		\begin{tabular}{lll}
			\fullcirc & Financially fair \\
			\emptycirc & Financially unfair
			\end{tabular} &
				\begin{tabular}{lllll}
					\emptycirc  &
					\fullcirc &
					\fullcirc &
					\emptycirc &
					\fullcirc
				\end{tabular}
							& 
				\begin{tabular}{llll}
					\emptycirc  &
					\emptycirc &
					\emptycirc &
					\emptycirc
				\end{tabular}
					\\\hline

	\end{tabular}
\end{table*}

In particular, in \S\ref{sec:optunfair} we show  that the protocols 
in~\cite{BentovK14,kumaresan2015use,KumaresanB16,kumaresan2016improvements} 
are only financially 
viable for the ``big guy'' with deep pockets beyond the 
money at stake in the 
protocol. ``Small guys'' must rush to be first, or 
participating would be out of reach.   
Furthermore, the latter happens even in the practical case 
of \emph{optimistic computation} for honest 
parties~\cite{KumaresanB14}: Playing first or last can yield 
a gap of several basis points (the units for
discount rates of financial institutions).

Another surprising finding is that the CCS'15 
\LockedLadder~\cite{kumaresan2015use} is better than its ``improved'' version, the 
CCS'16 \PlantedLadder~\cite{kumaresan2016improvements}, in terms of financial fairness.
Importantly, these negative results hold regardless of which 
technology is used in order to implement the penalty 
protocol (be it simple transactions, or smart contracts). 

One may wonder whether financial fairness of the above 
protocols can be saved by using a small collateral, or 
by running a financially unfair protocol in a round-robin 
fashion. Interestingly, in \S\ref{sec:colla}, we show using 
game theory that these approaches are deemed to fail for 
all practical purposes. 

Finally, in \S\ref{sec:conclusions}, we conclude the paper with lessons learned and by listing open problems for further research.

Additionally, in Fig. \ref{fig:symbols} we provided a table of the mathematical symbols used throughout the paper.

\section{Security and efficiency}\label{sec:background}
\begin{figure}
\begin{center}
	\begin{tabular}{lr}
			\hline
			\textbf{Symbol} & \textbf{Meaning} \\
			\hline
			$\Func_\func$ & Ideal Functionality for computing $\func$\\ 
			$\Func_\func^*$ & Ideal Functionality for $\func$  with coins\\
			$\pi_f$ & Protocol for computing  $f$ \\
			$x_i$ & Input of $\party_i$ to $\Func_\func$ \\
			$y$ & Output of the function $f$\\

			$\Cost_i(\tau)$ & Net Present Cost of party $i$ after $\tau$ rounds\\
			$\deposit_{i,\round}$ & Deposit of party $\party_i$ at round $\round$ \\
	       	$r_{i,\round}$ & Refund to party $\party_i$ at round $\round$\\
			$\com_i$ & Commitment of party $\party_i$ \\
			$\shares_i$ & Output share of party $\party_i$\\
			$\rndcom_i$ & Commitment opening of $\party_i$\\
			  	$\phi_{i,j}$ & Predicate bounding $\party_i$ and $\party_j$ (\LMech)\\
			$\npv(t)$ & Discount rate at round $t$ \\
			$\kappa_i$ & Secret key of party $\party_i$ (\CLMech, \CMLMech)\\
			\hline
		\end{tabular}
				\end{center}
		\caption{Symbols used throughout the paper}
		\label{fig:symbols}
		\end{figure}
\subsection{Security Models and the Real-Ideal Paradigm.}\label{sec:background-security}
To define security of an $n$-party protocol $\pi$ for computing a function $\func$, we compare an execution of $\pi$ in the real world with an ideal process where the parties simply send their inputs to an ideal functionality $\Func_\func$ that evaluates $\func$ on behalf of the players.
$\Func_\func$ (acting as a trusted party) takes all the parties' inputs $(x_i)_{i \in [n]}$  privately, and outputs the value $f_i(x_1,\ldots,x_n)$ to each party $i \in [n]$.\footnote{Output privacy must be ensured even when the functions $f_i$ are different since they can depend on other parties' inputs.}
In the real world, where parties directly exchange messages between themselves, such trusted party does not exist. Therefore it cannot be used by the real honest parties to privately evaluate the function $f_i$. A protocol is said to be secure if the two worlds are (computationally) indistinguishable.

An important feature of simulation-based security is {\em composability}. Intuitively, this property refers to the guarantee that an MPC protocol securely realizing an ideal functionality, continues to do so even if used as a sub-protocol in a larger protocol, which greatly simplifies the design and security analysis of MPC protocols.
The most basic form of composition is known as {\em sequential composability}, which roughly corresponds to the assumption that each sub-protocol is run sequentially and in isolation. 
A much stronger flavor of composition is the so-called {\em universal composability} (UC)~\cite{Canetti20,Canetti01,canetti2002universally}, which instead corresponds to the more realistic scenario where many secure protocols are executed together. In UC, both the real and ideal world are coordinated by an environment $\mathcal{Z}$ that can run multiple interleaved executions of different protocols. We say that $\pi$ $\thr$-securely computes $\Func_\func$ if for all PPT (Probabilistic Polynomial-Time) adversaries $\advA$ corrupting at most $t$ parties in the real execution, there exists an efficient simulator $\advS$ in the ideal execution, such that no efficient environment $\advZ$ interacting with the adversary in both worlds can tell apart the output of $\advA$ in the real world from the output of $\advA$ when its view is simulated by $\advS$. 

In the case of sequential composition, $\advZ$ is replaced by a distinguisher $\advD$ handling inputs to the parties and waiting to receive the output and an arbitrary function of $\advA$'s view at some point.\footnote{$\advZ$ actually defines an environment of protocols running altogether, whilst D can distinguish only single executions.} The latter allows the simulator to internally control the adversary, {\em e.g.} by rewinding it.
Since in the UC setting the interaction between $\advZ$ and $\advA$ can be arbitrary, eventual rewinds of the adversary from the simulator can be spotted by the environment, and thus input extraction techniques adopted by the simulator cannot be based on rewinding (this is usually called \emph{straight-line simulation}). 

\subsection{Hybrid Model, Cryptographic Fairness with Penalties.}\label{subsec:hybrid}
Composability can be formalized using the so-called {\em hybrid model}. In the $\Func_\gunc$-hybrid model a
protocol $\pi$ is augmented with an ideal functionality for securely computing 
$\gunc:(\bin^*)^n \rightarrow (\bin^*)^n$. 
The trusted party for $\Func_\gunc$ may be used a number of times 
throughout the execution of $\pi$; in the case of universal composability, each functionality uses different session ids $sid$ and sub-session ids $ssid$ to keep track of concurrent executions.
The UC theorem states that if $\pi_g$ securely computes $\Func_\gunc$, and $\pi_f$ securely computes some functionality $\Func_\func$ in
the $\Func_\gunc$-hybrid model, then $\pi_f^{\pi_g}$ ({\em i.e.},\ the protocol where each call to $\Func_\gunc$ is replaced with an independent run of protocol $\pi_g$) securely computes $\Func_\func$.
In the case of sequential composability, the latter only holds under the assumption that honest parties send their inputs to the trusted party corresponding to the hybrid ideal functionality in the same 
round, and do not send other messages until they receive the output. 
Since those functionalities can be accessed globally, $\advZ$, as well as $\advS$ and $\advA$ in the ideal world and the honest parties and $\advA$ in the real world can interact with it by only sending queries.

As shown by Cleve~\cite{Cleve86}, the standard MPC definition can not be achieved 
for certain functionalities. 
Indeed a party may abort the computation,
and, in case there is no honest majority, it might 
irreversibly block the protocol. 
In particular, an attacker could violate cryptographic fairness by learning the output whilst no honest party does.
Following~\cite{BentovK14}, we extend MPC to 
the setting of ``MPC with coins''\footnote{In \cite{BentovK14}, the authors did not formally prove the composition theorem of their augmented model. However, as the authors point out, in principle the UC composition theorem should be allowed analogously to  \cite{Canetti01}.}, where each party is provided with his
own wallet and safe\footnote{To ensure indistinguishaibility between real and ideal world it is crucial that the environment is only allowed to access and modify the wallet of each party, but not the safe. Precise details about the model can be found in \cite{BentovK14}. }. We use $\coins(x)$ to represent a coin of value $x$, and
denote special functionalities dealing with coins with the apex $^*$. If a party owns
$\coins(x)$ and deposits (resp.\ receives) $\coins(\deposit)$ (resp.\ $\coins(\return)$), it will own $\coins(x-\deposit)$ (resp.\ $\coins(x+\return)$). 
To define fairness with penalties, we modify the ideal world using the following ideal
functionality $\Func_\func^*$: 
	(i) At the outset, $\Func_\func^*$ receives 
	the inputs and a deposit from each party; the coins deposited by the malicious 
	parties must be enough to compensate all honest players in case of abort. 
	(ii) Then, in 
	the output phase, the functionality returns the deposit to the honest parties; if the 
	adversary deposited enough coins, it is given the chance to look at the output, and finally decides whether to continue delivering the output to the 
	honest players, or to abort, in which case all honest players are compensated using 
	the penalty amount deposited during the input phase.
In this setting, we say that a protocol $\pi$ $\thr$-securely computes $\Func_\func^*$ with penalties (in the hybrid model).
\footnote{Since the blockchain environment admits interleaved execution of protocols, UC security is a strict requirement when part of a protocol realizing an MPC with coins functionality is run on-chain.}

\subsection{Security Assumptions.} \label{subsec:additional}
In this paper, when we mention security we implicitly mean that security holds in the plain model ({\em i.e.}, without assuming trusted setup or idealized primitives) and under standard assumptions ({\emph{e.g.}, DDH). 
In the Random Oracle Model (ROM), all the parties  have access to a truly-random hash function. In particular, when a value $v$ is given as an input from a party to the RO (Random Oracle), the latter samples a random answer, stores the pair $(v,r)$, and outputs $r$ to the party. If the RO is queried on the same value $v$ multiple times, the same answer $r$ is output. In the ROM, the simulator needs to further simulate the interaction between the parties and the RO. While doing so, the simulator may program the output of the RO at specific inputs to particularly convenient random-looking values. This powerful feature is known as random-oracle programmability.
In the setting of generalized UC, the RO is defined as a global ideal functionality $\mathcal{G}_\mathsf{RO}$. In this case, the simulator can only interact with the RO by sending queries to it, which severely limits random-oracle programming.
Security proofs in the ROM only guarantee heuristic security. This is because ROs do not exist in the real world, and thus a security proof in the ROM only guarantees that the protocol remains secure so long as the hash function is close enough to behave as a truly-random function. 
Even worse, there exist (albeit contrived) cryptoschemes that are secure in the ROM but become always insecure for any possible instantiation of the RO with a real-world hash function~\cite{DCanettiGH98,CanettiGH04}.
While the above may look controversial, security proofs in the ROM are generally considered useful as they  guarantee that any security vulnerability can only depend on the hash function. 
Imagine now a cryptographic primitive, {\em e.g.} an encryption scheme, using a hash function modeled as a RO.
Consider an MPC protocol for evaluating the encryption function in a distributed setting where the secret key and the message is somehow shared between the players. The latter typically requires to implement the encryption function as a circuit; however, this is not possible because the encryption algorithm needs to invoke the random oracle which cannot be implemented as a circuit.
Unfortunately, some of the state-of-the-art protocols, for example \cite{kumaresan2016improvements}, rely on the assumption that a random oracle can be implemented as a circuit\footnote{The efficient instantiation of \LLMech\ \cite{kumaresan2015use} (which we decided to not analyze in this work) has the same issue, since the underlying statistical binding commitment scheme that will be used inside the MPC circuit is instantiated with a random oracle. However, since statistical binding commitments can be instantiated also from standard assumptions, \LLMech\ still retains UC security.}. Thus, they fail to achieve any kind of provable security.

\subsection{On-chain and Off-chain Efficiency.}\label{sec:eff_cri}
The efficiency of a penalty protocol can be broken down into two parts: off-chain and on-chain efficiency.
The former refers to traditional MPC efficiency in terms of: the number of communication rounds, the required bandwidth, and the computational complexity;
the latter refers to efficiency in terms of the interaction between the blockchain and the miners in terms of: the number of transactions, the number of round executed on-chain, and the script complexity.
On-chain efficiency in a penalty protocol is much more important compared to off-chain efficiency, as:
	i) the number of transactions determine the transaction fee that a penalty protocol incurs;
	ii) the number of rounds executed on-chain determine how long the protocol runs, as a round executed on-chain requires for a transaction to be confirmed which corresponds to, {\em e.g.},  6 blocks ({\em i.e.,} 1 hour) in Bitcoin;
	iii) the script complexity needs to be multiplied with the number of miners, which could be more than 100K.\footnote{As of Dec 2020, the Bitcoin mining pool called Slushpool (\url{https://slushpool.com/stats/?c=btc}) has 116157 active miners.}
	and iv) off-chain complexity is not dependent of blockchain's block generation rate and transaction throughput.
 Transaction, round and script complexity can asymptotically depend on the security parameter $\secpar$, the number of players $n$, the size of the output of the function $m = |f|$, and the number of stages (for multistage protocols).

\section{Financial Fairness}\label{sec:ffair}

\subsection{Economics Principles.}\label{sec:escrow_overview}
To capture financial fairness, economists introduced the concept of \emph{net 
	present value} and 
\emph{discount rate}. The former tells us how much an amount of money 
received (or paid) later (at time $\round$) is discounted w.r.t.\ the same amount of
money received (or paid) 
now (at time $\round=0$). The difference in value between two adjacent instances  
is captured by the discount rate.\footnote{In general, the discount rate may depend on the 
	risk aversions of the 
	players~\cite{lee2016myopic}, or the confidence in the certainty of future 
	payments~\cite{brown2015empirical}. 
	The net present value may also have different functional forms 
	({\em e.g.}\ exponential, hyperbolic, etc.) or different values for
	borrowing or receiving money~\cite{angeletos2001hyperbolic}.}

\emph{The Cost of Participation.}
Let $\npv_i(\round)$
be the function representing the net
present value at the beginning of the protocol ({\em i.e.},\ at time $0$) of a unit coin that is 
transacted at a later round\footnote{For simplicity, we think of a round as a single time unit.} ({\em i.e.},\ at round \round), according to the $i$-th party's own discount rate.
Let $\deposit_{i,\round}$ be the 
coins put into escrow by player $i$ during 
round $\round$, and let $\return_{i,\round}$ be the coins that the same player 
receives  at round $\round$ (possibly including  
compensating penalties extracted from 
misbehaving parties). 
Given a sequence of deposits $\deposit_{i,\round}$ 
and refunds $\return_{i,\round}$ made by $\party_i$ at rounds $\round \in [0,\timeout]$ of a 
protocol running up to time $\timeout$,
the \emph{net present cost of participation} for $\party_i$ is then
\begin{equation}\label{eq:npv}
\Cost_i(\timeout):= \sum\nolimits_{\round\in[0,\timeout]} (\deposit_{i,\round} - \return_{i,\round})\cdot\npv_i(\round).
\end{equation}

	
	The intuition behind the net present value calculated using Eq.~\eqref{eq:npv} is that 
	money received at a later round $t'$ is \emph{less valued} than the money received at the current round $t$.
	A real world analogy would be the money that is needed to buy a property, e.g a car or a house, now,
	is always not enough, to buy the very same property in 2 years, e.g. due to inflation.

	Our model is agnostic to the rates of the parties. We use a global $\eta$ for sake of simplicity. Furthermore, most economic models assume market conditions in equilibrium as otherwise arbitration will happen. Indeed, the parties might have different utility functions and one can generalize it but it is immaterial for our discussion as we focus on the opportunity cost upon locking deposits. Some specific users might be risk averse, others on the other hand might be myiopic and it is unknown to the protocol designer. As such one cannot base a design on the assumption that the protocol only works if there is a risk-seeker participant or else.
	}
	
	Let us consider a toy example as follows. 
	Suppose the discount rate is 50\% (hourly rate);
	at the beginning of the protocol, a party deposits 100\$ into the blockchain,
	after 1 hour, the party withdraws 50\$, and after 2 hours, the party withdraws the remaining 50\$.
	The net present value for the party in this case, would be 
	\begin{enumerate}
		\item 50/(1+0.5) = 33.3\$ (first withdrawal, 
		$\return(1) = -50$ and $\npv(1) = \frac{1}{(1+0.5)^1}$
	)
		\item 50/[(1+0.5)*(1+0.5)] = 22.2\$ (second withdrawal,  $\return(2) = -50$ and 
		$\npv(2) = \frac{1}{(1+0.5)^2}$
	)
	\end{enumerate}
	which equals 55.5\$.
	The cost of participation will thus be
	\begin{enumerate}
		\item 100\$ (deposit, $\deposit(0) = 100$ and
		$\npv(0) = 1$
		)
		\item -55.5\$ (net present value at the end)
	\end{enumerate}
	which equals 44.5\$.

\emph{The Payment Interest.} The basic fixed interest rate model (used for home mortgages) is sufficient to 
show the marked financial 
unfairness of some protocols. 
Tab.~\ref{tab:overnight}
reports the December 2019 rates used by US depository institutions, measured in \emph{basis points} (bps, 1/100th of 1\%). Those are the 
rates at which depository institutions ({\em e.g.}\ commercial banks) can deposit 
money in each other (in the US) to adjust 
their capital requirements. 
A quantity of money paid or 
received by a party $\party_i$ after a time $\round$ is cumulatively 
discounted at a constant discount rate $\rate$, {\em i.e.} $\npv_i(\round)$ can be computed
using a standard algorithm that we report in the appendix.

\begin{table}[t]
	\centering 
	\caption{US depository inst. rates in 2019 (per annum).}\label{tab:overnight}
	\begin{footnotesize}
		\begin{tabular}{lrrrrr}
			\hline
			\textbf{Fund} & \textbf{Max} & \textbf{Min} & \textbf{Median} & \textbf{Median Hour} & \textbf{Median Minute} \\
			\hline
			EFFR & 245 & 155 & 238 & 0.0272 & 0.0005 \\ 
			SOFR & 525 & 152 & 238 & 0.0272 & 0.0005 \\
			\hline
		\end{tabular}
	\end{footnotesize}
	\begin{minipage}{0.9\columnwidth}\footnotesize
		A basis point, bps, is 1/100th of 1\%. It is the standard unit of 
		measure for interest rates at which depository institutions can deposit 
		money in each other to adjust their capital requirements.
		Median Hour/Minute are those of Hourly and Minute Rate converted from Median Yearly Interest Rate.
	\end{minipage}
	\vspace*{-\baselineskip}
	\label{table:usdep}
\end{table}


Notice that we deliberately do not consider the value that parties might give to protocol's 
outputs ({\em i.e.},\ obtaining the output may be significantly more valuable to party $i$ 
than party $j$). This issue is definitely relevant  from the viewpoint of 
\emph{protocol participants} to decide whether the whole MPC hassle (with or without penalties) is worth the bother.
However, the outcome's valuation should be at least fair from the viewpoint of a \emph{protocol designer}: all 
parties being equal the construction should be fair for them all, and they should not 
be discriminated by going first, last, or third. 
In a formal model this could be simply achieved using an utility 
function so that instead of $(-\deposit_{i,\round} + \return_{i,\round})\cdot\npv_i(\round)$ 
we have ${\mathcal U}((-\deposit_{i,\round} + \return_{i,\round})\cdot\npv_i(\round))$.


\subsection{The Escrow Functionality.}\label{sec:fescrow}

The functionality $\Func_{\sf escrow}^*$  (Fig.~\ref{fig:escrow}) captures inter-temporal economic choices ({\em i.e.}\ a 
party can abort or continue the protocol), and formalizes a notion of  
fairness grounded in the economic literature.
The experimental evidence about inter-temporal economic choices 
\cite{benzion1989discount,ahlbrecht1997empirical,brown2015empirical,lee2016myopic} 
is that money 
paid or received ``now'' has a greater value than the same amount of money 
received or paid  ``later''.  
At the end all parties could still {receive their money back}, 
but whoever was forced by the protocol to pay into escrow at \emph{noticeably 
	different 
	times}, or held deposits of \emph{noticeably different sizes}, would clearly {feel that is not playing a fair game.}
\footnote{\label{fn:landlord}One could argue that these deposits are comparable to security deposits, as those used in the U.S.\ for interest-bearing accounts, with the interest accrued to the depositor's benefit. That is not true for the deposits used in penalty protocols based on cryptocurrencies: once the deposits are locked, they cannot be used, and therefore no interest is accrued to the depositor.}

Deposits/refunds can appear in an arbitrary order; we only 
keep track of the round in which those are made. 
Apart from these commands, and the impossibility of 
creating money from nothing, the behavior of $\Func_{\sf 
escrow}^*$ is unspecified.

Our functionality  is meant to capture any $n$-party protocol in 
the hybrid model with a so-called {\em escrow} ideal functionality in which: (i) Each player $\party_i$ can 
deposit a certain number of coins $\deposit_{i,t}$ in the escrow (possibly multiple times and on different rounds $\round$);
(ii) At some point the functionality might pay $\party_i$ with some coins from the escrow. In concrete 
instantiations, case (ii) can happen either because $\party_i$ claims back a previous deposit, or as a refund 
corresponding to some event is triggered by another party ({\em e.g.},\ in case of aborts). 

Fix an execution of any protocol $\pi$ in the $\Func_{\sf escrow}^*$-hybrid 
model. For each message $(\mathtt{deposit},sid,ssid,\allowbreak \coins(\deposit_{i,\round}),*)$ sent by $
\party_i$ during the execution of the subsession $ssid$
 we add an entry $(\round,\deposit_{i,
	\round})$ into an array $\calD_i$ to book keep all deposits $\party_i$ made.
For the commands $(\mathtt{refund},sid,ssid,\round,\return_{i,\round},*)$ received 
by $
\party_i$, we maintain an array $\calR_i$ of entries $(\round,\return_{i,
	\round})$ 
keeping track of all claims/refunds $\party_i$ received.

\begin{figure}[!t]
	\begin{framed}\footnotesize
			\vspace*{-0.5\baselineskip}
		\noindent The \textbf{Escrow Functionality $\Func_{\sf escrow}^*$} runs with security parameter $1^\secpar$, parties $\party_1,\ldots,\party_n$, and adversary $\Sim$ corrupting a subset of parties. {Let $\deposit_{i,\round}$ be the 
coins put into escrow by player $i$ during 
round $\round$, and let $\return_{i,\round}$ be the coins that the same player 
receives  at round $\round$.}
		The behavior of $\Sim$ is unspecified, except for the following:
		\begin{itemize}
			\item Upon input $(\mathtt{deposit},sid,ssid,\coins(\deposit_{i,\round}),*)$ from an honest party $\party_i$ at round $\round$, record $(\mathtt{deposit},sid,ssid,\allowbreak i,\allowbreak \round,\deposit_{i,\round},*)$ and send it to all parties.
			\item During round $\round$ the functionality might send $(\mathtt{refund},sid,ssid,\allowbreak \round,\allowbreak\coins(\return_{i,\round}),\allowbreak*)$ to party $\party_i$ and $(\mathtt{refund},sid,ssid,\allowbreak \round,\return_{i,\round},*)$ to all parties.
			\item The functionality is not allowed to create coins, {\em i.e.} at any round $\round$ the following invariant is maintained:
			\[
			\sum\nolimits_{i \in [n], \round' \le \round} \return_{i,\round'} \leq \sum\nolimits_{i \in [n], \round' \le \round} \deposit_{i,\round'}
			\]
		\end{itemize}
		\vspace*{-\baselineskip}
	\end{framed}
	\vspace*{-1\baselineskip}
	\caption{The family of escrow functionalities.}
			\vspace*{-\baselineskip}
			\label{fig:escrow}
\end{figure}

Common ideal functionalities used in cryptographically fair 
MPC with penalties are of the escrow type. Two instances of $\Func_{\sf escrow}^*$ are commonly used for designing state-of-the-art penalty protocols:
the Claim-Or-Refund functionality $\FuncCR$ and 
the Multi-Lock functionality $\FuncML$. Both functionalities can be implemented using both the Bitcoin network and Ethereum smart contracts. {We report the detailed functionality in the appendix}.

For example, $\FuncCR$~\cite{BentovK14}, 
allows a sender $\party_i$ to \emph{conditionally} send 
coins to a receiver $\party_j$, where the condition is formalized as a circuit $
\phi_{i,j}$ with time-lock $\tau$: $\party_j$ can obtain $\party_i$'s 
deposit by providing a satisfying assignment $w$ within time $\tau$, otherwise $
\party_i$ can have his deposit refunded at time $\tau+1$.
$\FuncML$~\cite{KumaresanB14}, instead,  
allows $n$ parties to atomically agree on a timeout $
\timeout$, circuits $\phi_1,\ldots,\phi_n$, and a deposit $\deposit$. Hence, if $
\party_i$ within round $\timeout$ reveals to everyone a valid witness $\wit_i$ for $
\phi_i$, it can claim its deposit back; otherwise, at round $\timeout + 1$, the 
deposit of $\party_i$ is split among all other players.


\subsection{Financial Fairness.}
Financial fairness then says that, even in a 
run of $\pi$ with possibly corrupted parties, the net present cost of 
participation $\Cost_i$ associated to each {\em honest player} 
is the same. 
Here, we make no assumption on $\npv_i$, but one may limit fairness to specific, 
empirical, forms of $\npv_i$ ({\em e.g.},\ known to hold for poker players).
\begin{definition}[Financial fairness] \label{def:financial_fairness}
	Consider an $n$-party protocol $\pi$
	in the $\Func^*_{\sf escrow}$-hybrid model, 
	and let $(\calD_i,\calR_i)_{i\in[n]}$ be as described above.
	We say that $\pi$ is {\em financially fair} if for every possible 
	discount rate function $\npv(\round)\in[0,1]$, 
	for all transcripts resulting from an arbitrary execution of $\pi$ 
	(with 
	possibly corrupted parties), and for all $i,j\in[n]$ such that $\party_i$ and $\party_j$ 
	are honest, it holds that $\Cost_i = \Cost_j$ where the {\em net present cost of participation} $\Cost_i$ is defined in Eq.~\eqref{eq:npv}.
\end{definition}
Financial fairness may be trivial to achieve {\em in isolation}. However, the end goal is to design protocols that are both cryptographically and financially fair.
Also, observe that Def.~\ref{def:financial_fairness} could be weakened by considering  specific discount rates $\npv(\round)$ ({\em e.g.}\ financial fairness with hyperbolic discount).


\section{Penalty Protocols}\label{sec:protocols}

%


\subsection{Protocols Descriptions.}\label{subsec:protdesc}

The idea of guaranteeing cryptographic fairness through monetary compensation  was originally studied in the setting of e-cash or central bank 
systems~\cite{BelenkiyCEJKLR07,Lindell09,KupcuL12}, and implemented  using Bitcoin 
by Andrychowicz {\em et al.}~\cite{AndrychowiczDMM14}.
Other penalty protocols also exist for the concrete case of cryptographic 
lotteries~\cite{AndrychowiczDMM14,BentovK14,miller2017zero}.
A different type of penalty protocol is the one introduced in Hawk~\cite[Appendix G, \S
B]{kosba2016hawk}. This construction follows the blueprint of \MLMech\ except that
it employs a \emph{semi-trusted manager} in order to enforce a correct cash distribution.
For further discussions 
see~\cite{AzouviHM18}.

 A detailed description of the \MultiLock\ (\MLMech)~\cite{KumaresanB14} and \Ladder\ (\LMech)~\cite{BentovK14} protocols are summarized in the following. In Table \ref{table:protocols} we provided a brief description of all the protocols compared in this paper.
 
Let $\func$ be the function being 
computed, and 
$\inp_i$ be the private input of party $\party_i$. At the beginning, the players run a 
cryptographically 
{\em unfair}, off-chain, MPC protocol for a derived function $\tilde\func$ that: 
(i) computes the output $\out = \func(\inp_1,\ldots,\inp_n)$;
(ii) divides $\out$ into $n$ shares\footnote{The secret sharing scheme ensures that an attacker corrupting up to 
	$n-1$ players 
	obtains no information on the output $\out$ at the end of this phase.} $\shares_1,\ldots,\shares_n$;
(iii) computes a commitment $\com_i$ (with opening $\rndcom_i$) to each share 
$\shares_i$, and 
gives $(\shares_i,\rndcom_i)$ to the $i$-th party and $(\com_j)_{j\in[n]}$ to every 
player.

From this point, the functioning of \LMech\ and \MLMech\ differs.
In \MLMech\ all the players that are willing to complete the protocol engage in the $\FuncML$ functionality. 
During the Lock Phase, each party conditionally sends (locks) the same amount of coins to $\FuncML$. Each transaction is parametrized by the values $\com_i$ for all $i \in [n]$. Each party $\party_i$ shall then reveal the opening $\rndcom_i$ together with its share $\shares_i$ to $\FuncML$ before a fixed timeout (Redeem Phase). If not done, his deposit will be redistributed to the honest parties during the Compensation Phase.

More in details, during the Lock Phase, each party $\party_i$ sends to $\FuncML$ all the commitments $\com_i$ for all $i \in [n]$ received as an output of $\tilde\func$, together with a deposit $d$. If all the players sent the same commitment values and the same deposit amount before a fixed timeout $\timeout_1$, then the functionality moves to the Redeem Phase.

During the Redeem Phase, each party $\party_i$ shall send his share $\shares_i$ together with the opening $\rndcom_i$ before a fixed timeout $\timeout_2$. If all the parties provided a pair $(\shares_i,\rndcom_i)$ that is also a valid opening of $\com_i$ (i.e. such that $\Commit(\shares_i;\rndcom_i)=\com_i$), then all the deposits are given back to their owners. Now each party can use $(\shares_1,\ldots,\shares_n)$ to compute the  output of $\func$.

If at least one party $\party_i$ did not send  $(\shares_i,\rndcom_i)$ to $\FuncML$ before $\timeout_2$, his deposit $d$ will be used to compensate the other parties during the Compensation Phase. 

The \InsuredMPC\ (\IMPC)~\cite{BaumDD18} protocol can be see as a more efficient version of \MLMech\ achieving lower script complexity.

\Ladder\ works differently. In \MLMech\ the players engage in a sequence of ``claim-or-refund'' 
transactions divided into two phases.
During the Deposit Phase, each player conditionally sends some coins to another party via $\FuncCR$. These transactions are 
parameterized by the 
values $\com_i$, and require the receiving player to reveal the pair $(\shares_i,\rndcom_i)$ before a fixed timeout, during the Claim Phase, in order to ``claim'' the reward (thus compensating honest players),\footnote{More precisely, in  $\LMech$ the condition requires the recipient $i$ to publish the pair $(\shares_{j},\rndcom_j)$ such that $\Commit(\shares_j;\rndcom_j) = \com_j$ for each $j \leq i$.} which otherwise will be refunded to the sender who will lose the coins sent to the honest parties without being able to redeem the coins received from them ({\em i.e.},\ a penalty to the 
dishonest 
player). 
Finally, every party either reconstructs 
the output or receives a monetary 
compensation.

More in details, the Deposit Phase of Protocol $\LMech$ consists of Roof/Ladder Deposits, as illustrated below for $n=4$:
\begin{center}	\footnotesize
	\textbf{ROOF}:
	~$\party_j \xrightarrow[d,\tau_4]{\hspace*{1cm} \phi_{j,4} \hspace*{1cm}} \party_4$ (for $j \in \{1,2,3\}$) \\
	\hspace*{-1.9cm}\textbf{LADDER}:
	~$\party_4 \xrightarrow[3 \deposit,\tau_3]{\hspace*{1cm} \phi_{4,3}
		\hspace*{1cm}} \party_3$ \\
	$\party_3 \xrightarrow[2 \deposit,\tau_2]{\hspace*{1cm} \phi_{3,2} \hspace*{1cm}} \party_2$ \\
	$\party_2 \xrightarrow[\deposit,\tau_1]{\hspace*{1cm} \phi_{2,1} \hspace*{1cm}} \party_1$ 
\end{center}
where $\tiny \party_i \xrightarrow[\deposit,\tau]{\hspace*{1cm} \phi_{i,j} \hspace*{1cm}} \party_j$ indicates that $\party_i$ deposits $\deposit$ coins that can be claimed by $\party_j$ before time $\tau$, as long as $\party_j$ sends to $\FuncCR$ a valid witness $w$ for the predicate $\phi_{i,j}$. 
Importantly, the protocol requires that the claims happen in reverse order w.r.t.\ the deposits.
Assume that $\party_3$ is malicious and aborts the protocol during the Claim Phase. In such a case, $\party_1$ would claim $\deposit$ coins from $\party_2$ at round $\tau_1$, whereas $\party_2$ would claim $2\deposit$ coins from $\party_3$ at round $\tau_2$. If $\party_3$ aborts (and thus it does not provide a valid witness), $\party_4$ is refunded $3\deposit$ coins at round $\tau_4$. After that, at round $\tau_5>\tau_4$, each $\party_{i\le 3}$ is refunded $\deposit$ coins (from the roof deposits).
Thus, $\party_3$ loses $2\deposit$ coins, while each $\party_{i\le 2}$ is compensated with $\deposit$ coins. Player $\party_4$ has not moved, and thus it is not compensated.

%


\subsection{Illustration of Financial Unfairness.}
\begin{figure*}[t]
	\centering
	\resizebox*{11cm}{!}{\begin{tikzpicture}
\begin{groupplot}[
group style={group size=4 by 1,vertical sep = 1cm},
height=3cm,width=4cm, ymin = -3, 
ymax = 0]
\nextgroupplot[title=$\party_1$,xlabel=Time (\add{Rounds}),ylabel=(Multiples of $\deposit$),legend to name = curves2]
\addplot[mark = *,red] coordinates { 
	(0,0)
	(1,-1)
	(2,-1)
	(3,-1)
	(4,-1)
};
\addlegendentry{Deposits}
\addplot[mark = *,blue] coordinates { 
	(4,-1)
	(5,0)
	(6,0)
	(7,0)
	(8,0)
};
\addlegendentry{Withdrawals}
\coordinate (top1) at (axis cs:4,\pgfkeysvalueof{/pgfplots/ymax});
\coordinate (bot1) at (axis cs:4,\pgfkeysvalueof{/pgfplots/ymin});
\nextgroupplot[title=$\party_2$]
\addplot[mark = *,red] coordinates{ 
	(0,0)
	(1,-1)
	(2,-1)
	(3,-1)
	(4,-2)
};
\addplot[mark = *,blue] coordinates{
	(4,-2)
	(5,-2)
	(6,0)
	(7,0)
	(8,0)
};
\coordinate (top) at (rel axis cs:0,1);
\coordinate (top2) at (axis cs:4,\pgfkeysvalueof{/pgfplots/ymax});
\coordinate (bot2) at (axis cs:4,\pgfkeysvalueof{/pgfplots/ymin});

\nextgroupplot[title=$\party_3$]
\addplot[mark = *,red] coordinates { 
	(0,0)
	(1,-1)
	(2,-1)
	(3,-3)
	(4,-3)
};
\addplot[mark = *,blue] coordinates {
	(4,-3)
	(5,-3)
	(6,-3)
	(7,0)
	(8,0)
};

\coordinate (top3) at (axis cs:4,\pgfkeysvalueof{/pgfplots/ymax});
\coordinate (bot3) at (axis cs:4,\pgfkeysvalueof{/pgfplots/ymin});
\nextgroupplot[title=$\party_4$](b)
\addplot[mark = *,red] coordinates { 
	(0,0)
	(1,0)
	(2,-3)
	(3,-3)
	(4,-3)
};
\addplot[mark = *,blue] coordinates {
	(4,-3)
	(5,-3)
	(6,-3)
	(7,-3)
	(8,0)
};
\coordinate (bot) at (rel axis cs:1,0);
\coordinate (top4) at (axis cs:4,\pgfkeysvalueof{/pgfplots/ymax});
\coordinate (bot4) at (axis cs:4,\pgfkeysvalueof{/pgfplots/ymin});
\end{groupplot}
\path (top)--(bot) coordinate[midway] (group center);
\node[above,rotate=90] at (group center -| current bounding box.west) {Coins};
\draw [dashed] (bot1) -- (top1);
\draw [dashed] (bot2) -- (top2);
\draw [dashed] (bot3) -- (top3);
\draw [dashed] (bot4) -- (top4);
\end{tikzpicture}}
	\vspace*{-\baselineskip}
	\caption{Coins locked in a run of the 4-party Ladder Protocol during the Deposit Phase (in red) and the Claim Phase (in blue).}\label{fig:ladder4}
\end{figure*}
The amount of deposited coins for each player in the Ladder Protocol is illustrated in Fig.~\ref{fig:ladder4} (for the 4-party case).
Observe that $\party_1$ has to deposit only $\deposit$ coins, while $\party_4$ needs to deposit $3\deposit$ coins.
Furthermore, $\party_4$ has to lock its coins very early ({\em i.e.},\ at the 2nd round), but can only claim its coins very late ({\em i.e.},\ at the last round).
Hence, this protocol is financially \emph{un}fair in the following sense: 
(i) The amount of deposits are different for each player ({\em e.g.},\ $\party_1$ deposits $\deposit$ coins while $\party_n$ deposits $(n-1)\deposit$ coins);
and (ii) some players deposit early but can only claim late in the protocol ({\em e.g.},\ $\party_4$ in Fig.~\ref{fig:ladder4}).

While financial \emph{un}fairness is easy to notice (by pure observation) in simple protocols such as Ladder,
it tends to be more difficult to judge whether a penalty protocol is financially fair or not when it yields a more complicated sequence of deposit and claim transactions~\cite{kumaresan2015use,KumaresanB16,kumaresan2016improvements}.

\begin{table*}
\caption{Penalty Protocols}\label{table:protocols}
\footnotesize
\begin{tabular}{p{0.15\textwidth}p{0.8\textwidth}}
	\hline
\emph{\Ladder\ (\LMech)~\cite{BentovK14}} & Parties run an off-chain cryptographically unfair MPC in which each party learns a share $\shares_i$ of the output and all the commitments $(\com_j)_{j \in n]}$ of the other players' shares. After that, they deposit an amount $d$ of coins inside a smart contract (Deposit Phase) together with the description of a circuit, that they can reclaim later (Claim Phase) if and only if they provide the correct witness for the circuit (e.g. the opening $\shares_i$ of their commitment $\com_i$). The name \emph{Ladder} stems from the fact that parties make their deposit and reclaims in an asymmetric ladder-style fashion. 
\\
\hline
\emph{\LockedLadder\ (\LLMech)~\cite{kumaresan2015use}} & This protocol is specifically tailored for playing distributed poker. To support multiple stages, a locking mechanism is designed to penalize aborting players in each phase of computation (in \LMech\ players are penalized only during the Claim Phase).
Moreover,  \LLMech\ yields a more complicated deposit sequence, as it requires additional deposits (called Bootstrap and Chain Deposits) to force the first party to start the new stage of the poker ideal functionality. 
\\ \hline
\emph{\CompactLadder\ (\CLMech)~\cite{kumaresan2016improvements}} &
To prevent the explosion in script complexity (in \LMech\ the witness in the last round is $n$ times larger than the witness in the first round), protocol \CLMech\ uses a trick that makes the size of the witness independent of the number of players. 
The basic idea is to replace $(\shares_1,\ldots,\shares_n)$ with a secret sharing $(\key_1,\ldots,\key_n)$ of the 
secret key $\key$ for a symmetric encryption scheme, and to reveal to every party 
an encryption $\ctx$ of the output $\out$.
\\ \hline
\emph{\PlantedLadder\ (\PLMech)~\cite{kumaresan2016improvements}.}	 &
This protocol extends \LMech\ to reactive functionalities by stacking multiple instances of \LMech, {\em i.e.}\ an $n$-party $r$-stage functionality is handled as a run of $\LMech$ with $r \cdot n$ parties, using additional deposits to force the next stage to start (the so-called Underground Deposits).
As a result, \PLMech\ requires more transactions and very high deposits from each player. 
To improve efficiency, one can replace \CLMech\ with \LMech; we denote the resulting protocol as \CPLMech.
\\ \hline
\emph{\AmortizedLadder\ (\ALMech)~\cite{KumaresanB16}} & This protocol aims at performing multiple MPCs using  a single instance of a penalty protocol. The sequence of deposits/claims is the same as in \LLMech, except that all the deposits/claims happen in parallel. 
\\ \hline
\emph{\MultiLock\ (\MLMech)~\cite{KumaresanB14}} & This protocol relies on the ideal functionality $\FuncML$, instead of $\FuncCR$, in order to realize the ``claim-or-refund'' transactions. The latter allows to manage multiple deposits/claims in an atomic fashion, thus resulting
in an improved round complexity. 
\\ \hline
\emph{\InsuredMPC\ (\IMPC) \cite{BaumDD18}} & This protocol follows the same blueprint of \MLMech, {\em i.e.} \IMPC\ manages multiple deposits/claims in an atomic fashion. However, the protocol further improves the efficiency in the evaluation of the commitments in the off-chain MPC and the on-chain reconstruction phase using    publicly-verifiable  additively homomorphic commitments.
\\ \hline
\end{tabular}
\end{table*}

	

	\subsection{An Illustrative New Protocol.} \label{subsec:newprot}
	\add{To illustrate alternative design choices \ifrebuttal\reviewnotemulti{\ref{B:c1}}\fi} 
 when looking into an efficient and financially fair penalty protocol, we provide a simple example of 
	another protocol that is provably secure and achieves the same efficiency of the \CLMech\ protocol by Kumaresan and Bentov~\cite{KumaresanB14}, but still fails to achieve UC security.
	Namely, we can design a new penalty protocol that combines ideas from~\cite{GordonIMOS10} and~\cite{KumaresanB14,KumaresanB16}, to obtain a constant-round penalty protocol with $O(n)$ transactions and script complexity $O(n\secpar)$. 
	This protocol is both cryptographically and financially fair; however, security only holds in the sense of sequential composition (although in the plain model and under standard assumptions).
	
\emph{Compact Multi-Lock (\CMLMech).}
We rely on some standard cryptographic primitives:
(i) an $n$-party secret sharing scheme $(\share,\reconstruct)$ with message space $\bin^\secpar$ and share space $\bin^k$;
(ii) a secret-key encryption scheme $(\Enc,\Dec)$ with secret keys in $\bin^\secpar$ and message space $\bin^{m}$, where $m$ is the output size of the function $\func$; and
(iii) a non-interactive commitment $\Commit$ with message space $\bin^k$.

Let $\func$ be the function to be computed.
The protocol proceeds in two phases.
In the first phase, the parties run a cryptographically unfair MPC for a derived function $\tilde\func$ that samples a random key $\key$, secret shares $\key$ into shares $\key_1,\ldots,\key_n$, commits to each share $\key_i$ individually obtaining a commitment $\com_i$, and finally encrypts the output $\func(\inp_1,\ldots,\inp_n)$ using the key $\key$ yielding a ciphertext $\ctx$. The output of $\party_i$ is $((\com_j)_{j\in[n]},\ctx,\allowbreak\key_i,\rndcom_i)$, where $\rndcom_i$ is the randomness used to generate  $\com_i$. The $\tilde\func$ can be safely run by an unfair protocol, since an adversary can not learn the final output $y$ at the end of its execution.
During the second phase, the players use $\FuncML$ to reveal the shares of the key $\key$ in a fair manner, thus ensuring that every player can reconstruct the key and decrypt the ciphertext $\ctx$ to obtain the output of the function.

\begin{note}
\emph{Analysis of Cryptographic Fairness.}\label{sec:security}
The theorem below states that our protocol also achieves cryptographic fairness in the sequential composability setting.
As we will argue in Sec. \ref{sec:compsec}, we need
\emph{non-committing encryption} to make the proof go through in the UC setting. Unfortunately, this would remove the compactness requirement on the script complexity unless unrealizable assumptions like the one of \MLMech\ is used.
It may be possible that compactness in UC might be achieved by removing the financial fairness requirement. However, we identify financial fairness as a main requirement for penalty protocols.
We show the proof sketch below and the detailed proof can be found in
\ifJOUR the full version \cite{arxiv} \else Appendix~\ref{app:multilock_security}\fi.\ifrebuttal\reviewnotemulti{\ref{B:c0}}\fi
\begin{theorem}\label{thm:multilock_security}
	Let $\func$ be any $n$-party function, and $\tilde\func$ be as above. 
	Assume $(\share,\reconstruct)$ is an $(n,n)$-threshold secret sharing scheme, $(\Enc,\Dec)$ is semantically secure, and $\Commit$ is perfectly binding and computationally hiding.
	Then, the protocol described above
	$(n-1)$-securely computes $\Func_\func^*$ with penalties in the $(\Func_{\tilde f},\FuncML)$-hybrid model.
\end{theorem}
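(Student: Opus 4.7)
The plan is to exhibit a straight-line simulator $\Sim$ that, given any real-world adversary $\advA$ corrupting a set $\calI\subseteq[n]$ with $|\calI|\le n-1$, interacts with $\Func_\func^*$ and produces a view for $\advA$ statistically indistinguishable from its view in the $(\Func_{\tilde\func},\FuncML)$-hybrid execution. Concretely, $\Sim$ plays the role of both hybrid functionalities: it intercepts the inputs $(\inp_i)_{i\in\calI}$ that $\advA$ submits to $\Func_{\tilde\func}$ together with the associated deposits, forwards them to $\Func_\func^*$ and obtains the output $\out$, and then samples a fresh key $\key$, produces shares $(\key_1,\ldots,\key_n)\gets\share(\key)$, commitments $\com_j=\Commit(\key_j;\rndcom_j)$ for every $j\in[n]$, and a ciphertext $\ctx\gets\Enc_\key(\out)$. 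It hands $\advA$ exactly the tuple $((\com_j)_{j\in[n]},\ctx,\{\key_i,\rndcom_i\}_{i\in\calI})$ that the genuine $\Func_{\tilde\func}$ would have produced. In Phase~2, $\Sim$ emulates $\FuncML$ faithfully, simulating the lock and reveal messages of the honest parties; if some corrupt party fails to lock or to provide a valid opening in time, $\Sim$ tells $\Func_\func^*$ to abort (so the penalty coins compensate the honest players, mirroring $\FuncML$'s compensation logic), otherwise $\Sim$ tells $\Func_\func^*$ to deliver $\out$.

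Indistinguishability follows from a short hybrid sequence. The central observation is that, inside the $(\Func_{\tilde\func},\FuncML)$-hybrid model, $\Sim$ reproduces the \emph{exact} joint distribution the hybrid functionality would have produced, because $\Sim$ invokes $\share$, $\Commit$, and $\Enc$ with fresh randomness on the correct output $\out$ received from $\Func_\func^*$. The monetary flows inside $\FuncML$ match by construction. The primitive assumptions enter to certify that, along the way, $\advA$ cannot learn $\out$ prematurely and thereby influence its abort/continue choice in a way that $\Func_\func^*$ could not account for: $(n,n)$-threshold privacy of $(\share,\reconstruct)$ keeps $\key$ uniformly distributed given any $n-1$ shares held by $\advA$; computational hiding of $\Commit$ conceals the remaining honest share(s) behind $(\com_j)_{j\in[n]}$; and semantic security of $(\Enc,\Dec)$ makes $\ctx$ indistinguishable from an encryption of $0^m$. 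Combining these, $\advA$'s decision in the simulated Phase~2 has essentially no more information about $\out$ than in the ideal world, so $\Sim$'s corresponding call to the continue/abort interface of $\Func_\func^*$ is sound.

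The main obstacle I foresee is handling the temporal asymmetry between the two worlds: $\Func_\func^*$ may expose $\out$ to the ideal adversary before it commits to continue or abort, whereas in the real protocol $\advA$ can only decrypt $\ctx$ after every reveal has gone through. Reconciling the two requires confining $\Sim$'s use of $\out$ to the offline construction of the commitments and the ciphertext, and then invoking the hiding and semantic-security assumptions above to argue that no information about $\out$ leaks to $\advA$ before the reveal phase terminates. Precisely this step is what prevents the argument from going through in the UC setting: an adaptive environment could force $\Sim$ to publish $\ctx$ and $(\com_j)_{j\in[n]}$ before knowing which $\out$ will eventually need to be decryptable, requiring an equivocation mechanism such as non-committing encryption, which in turn would destroy the $O(n\secpar)$ script compactness $\CMLMech$ is designed for.
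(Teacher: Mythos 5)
There is a genuine gap, and it sits exactly where you placed your main design decision: the simulator cannot be straight-line in this model, and your specific straight-line construction is unsound because of the penalty-coin accounting. Your $\Sim$ forwards the corrupted inputs to $\Func_\func^*$ and obtains $\out$ \emph{before} emulating the lock phase of $\FuncML$. But entering the input phase of $\Func_\func^*$ requires $\Sim$ to deposit $\coins(hq)$, and from that point the functionality offers only two exits: $\mathtt{continue}$ (which delivers the output $\out$ to every honest party) or $\mathtt{abort}$ (which pays each honest party a penalty of $\coins(q)$ out of $\Sim$'s deposit). Now consider an adversary that inspects the first-phase transcript and simply refuses to lock. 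In the real world the honest parties are refunded by $\FuncML$, output $\bot$, and receive no compensation. In your ideal world, $\Sim$ has already committed: choosing $\mathtt{abort}$ gives every honest party $+q$ coins they should not have; choosing $\mathtt{continue}$ (even with $\calH_{\sf out}=\emptyset$) delivers $\out$ to honest parties who should output $\bot$. Either way the distinguisher, which sees honest outputs and coin holdings, separates the two worlds. Note also that your appeal to hiding/semantic security is misplaced in your own simulation: since you embed the true $\out$ and true shares from the start, the simulated view is distributed exactly as in the real execution and the assumptions certify nothing; they are needed only if the first-phase transcript is produced \emph{without} $\out$.

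That last observation is precisely the paper's resolution, and it is why the theorem is stated for sequential (standalone) rather than UC security. The paper's simulator first runs the lock phase on a \emph{fake} transcript generated without querying $\Func_\func^*$: commitments $\com_i \getsr \Commit(0^k)$ for honest parties, shares of an independent key $\tilde\key$, and $\ctx \getsr \Enc(\key,0^m)$. Only if the adversary actually locks does $\Sim$ send the inputs and $\coins(hq)$ to $\Func_\func^*$, learn $\out$, \emph{rewind} $\advA$ to the start of the lock phase, and re-run it with the genuinely distributed transcript (repeating with fresh randomness on abort, with the Goldreich--Kahan adjustment to keep expected polynomial running time). The hybrid chain $\hyb_0,\ldots,\hyb_4$ (commitment hiding, secret-sharing privacy, semantic security, then perfect binding plus correctness) exists exactly to show the fake first run preserves the adversary's lock/abort behavior up to negligible error. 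Ironically, your closing paragraph correctly identifies that a straight-line simulator would need an equivocation mechanism such as non-committing encryption in the UC setting --- but the same obstruction already defeats your standalone simulator; rewinding is not an optional convenience here but the mechanism that lets $\Sim$ defer its irreversible interaction with $\Func_\func^*$ until after the adversary's lock decision is known.
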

\begin{proof}[Proof sketch.]
	We prove the above theorem by following the Real/Ideal world paradigm. We start by describing the simulator $\Sim$ in the 
ideal world. Let $\calI$ be the set of corrupted parties
and $\calH = [n] \setminus \calI$ be the set of honest 
parties (with $h = |\calH|$).
\begin{enumerate}
	\item\label{step:simsetup} Acting as $\Func_{\tilde\func}$, wait to receive inputs $\{\inp_i\}_{i \in \calI}$ from $\advA$.
	Hence, sample $\key,\tilde\key\getsr\bin^\secpar$, let $(\key_1,\allowbreak\ldots,\key_n) \getsr \share(\tilde\key)$, $\ctx \getsr \Enc(\key,0^m)$ and $\com_i \allowbreak\getsr\allowbreak \Commit(0^k)$ for all $i \in \calH$, and send $((\com_j)_{j\in[n]},\allowbreak\ctx,\allowbreak \key_i,\rndcom_i)_{i\in\calI}$ to $\advA$.
	\item\label{step:unfair} Acting as $\FuncML$, wait to receive the lock message together with the circuits $\phi(\com_i;\cdot)_{i \in  [n]}$ from $\advA$ on behalf of the corrupted parties.
 If for some $i\in\calI$ the message was not received, or the adversarial circuits do not match, send $\coins(d)$ back to each corrupted party, abort and terminate the simulation. 
	Else, notify $\advA$ that each corrupted party correctly locked its coins.

	\item Send $\{\inp_i\}_{i \in\calI}$ together with $\coins(hq)$ to $\Func_f^*$, receiving $\out$ back. Hence, rewind the execution of $\advA$ to the beginning of the lock phase, change the distribution of
	$((\com_j)_{j\in[n]},\allowbreak\ctx,\key_i,\rndcom_i)_{i\in\calI}$ to that of the real protocol $\pi$, and repeat step~\ref{step:unfair} of the simulation, except that, in case $\advA$ now aborts, the rewinding is repeated with fresh randomness and step~\ref{step:unfair} is run again.
	\item At round $\timeout$, acting as $\FuncML$, send the send the redeem message to $\advA$ for each $i\in\calH$. Set $\ell = 0$. Hence,
	upon receiving the redeem message from $\advA$ containing $\key_i'$ (on behalf of each corrupted $\party_i$): if $\ell < n-h$, check that $\phi(\com_i;(\key'_i,\rndcom'_i))=1$ and $\key'_i = \key_i$; if the check passes send $\coins(d)$ back to $\party_i$ and set $\ell \gets \ell + 1$; If $\ell = n-h$, receive $\coins(hq)$ from $\Func^*_f$, and terminate.
	\item At round $\timeout+1$, if $\ell < n-h$, send
	$\coins(\tfrac{\deposit}{n-1}))$ to each corrupted $\party_j \ne \party_i$ on behalf of each 
	corrupted player $\party_i$ that did not redeem its witness in the previous step of 
	the simulation. Hence, abort and terminate the 
	simulation.
\end{enumerate}

To conclude the proof, we consider a sequence of hybrid experiments and show that each pair of hybrid distributions derived from the experiments are computationally close.
\begin{description}
	\item[$\hyb_0(\secpar)$:] Identical to the real world experiment, except during the fair 
	reconstruction phase, in case the attacker does not provoke an abort during the lock phase, we rewind the adversary to the 
	share distribution phase and re-run the entire protocol.
	\item[$\hyb_1(\secpar)$:] As above except during 
	the {\em first run} of the share reconstruction phase (before the rewinding) we 
	switch the distribution of the commitments to $\com_i \getsr \Commit(0^k)$ for each $i \in \calH$. 
	During the {\em second run} (after the rewind) of the share distribution phase, if any, 
	the honest commitments are reset to the original distribution $\com_i \getsr \Commit(\key_i)$.
	\item[$\hyb_2(\secpar):$] As above, except during the {\em first run} of the share reconstruction (before the rewinding) we switch the distribution of the shares to $(\key_1,\ldots,\key_n) \getsr \allowbreak \share(\tilde\key)$ for random $\tilde\key$ 
	independent from $\key$.
	\item[$\hyb_3(\secpar):$] As above, except during the {\em first run} of the share reconstruction (before the rewinding) we switch the distribution of the ciphertext to $\ctx \getsr \allowbreak \Enc(\key,0^m)$.
	\item[$\hyb_4(\secpar):$] Identical to the ideal world for the above defined simulator $\Sim$.
\end{description}
In particular we argue that $\{\hyb_0(\secpar)\}_{\secpar \in \setN} \cind \{\hyb_1(\secpar)\}_{\secpar \in \setN}$ for the hiding property of the underlying commitment scheme, 	$\{\hyb_1(\secpar)\}_{\secpar \in \setN} \cind \{\hyb_2(\secpar)\}_{\secpar \in \setN}$ by privacy of the underlying threshold secret sharing scheme, $\{\hyb_2(\secpar)\}_{\secpar \in \setN} \cind \{\hyb_3(\secpar)\}_{\secpar \in \setN}$ by semantic security of the underlying secret key encryption 
scheme and $\{\hyb_3(\secpar)\}_{\secpar \in \setN} \equiv \{\hyb_4(\secpar)\}_{\secpar \in \setN}$ by perfect binding of the commitments and perfect correctness of the encryption scheme. The theorem  follows by combining the above lemmas.
\end{proof}
\end{note}
\emph{Further considerations.} If the complexity of locally computed share is negligible, and the tasks are in abundance (an infinite number of tasks can be run in parallel) then all financially fair MPC protocols would be equally attractive as they might  be constrained by the funds (for deposits) available to each participant. However, running our illustrative protocol might be insecure and thus vulnerable to attacks violating cryptographic fairness or the privacy of the parties' inputs.

\section{Comparison over Security and Efficiency} \label{sec:compsec}


\subsection{Security Assumptions.} \label{subsec:assumptions}
Protocols \LMech, \MLMech, \LLMech, \ALMech\ and \PLMech\ satisfy UC security, and \IMPC\ satisfies (G)UC security with the global RO functionality.
The situation is different for $\CLMech$ (and $\CPLMech$ as well). Recall that the players start by engaging in an off-chain, unfair, MPC protocol whose output for party $\party_i$ includes $(\ctx,\allowbreak\key_i)$, where $\ctx$ is an encryption of the output $y$ under a symmetric key $\key$, and $\key_i$ is a share of the key.

Unfortunately, the encryption must be ``non-committing''~\cite{CanettiFGN96} for the security proof to go through: the simulator first 
must send the adversary a bogus ciphertext $\ctx$ (say an encryption of the all-zero string), 
and when it learns the correct output $\out$, if the adversary did not abort, must explain 
$\ctx$ as an 
encryption of $\out$ instead.
In the plain model, such encryption  inherently 
requires keys  as long as the 
plaintext~\cite{Nielsen02}, which would void any efficiency improvement 
w.r.t.\ the 
original \LMech\ Protocol. To circumvent this 
problem,~\cite{kumaresan2016improvements} 
builds the encryption $\ctx$ in the ROM, 
essentially setting 
$\ctx = \mathsf{Hash}(\key_1 \oplus \ldots \oplus \key_n) \oplus \out$.

A considerable drawback of the hash-based \CLMech\ Protocol is that  
it is {\em not} provably secure in the ROM because one cannot assume that $\mathsf{Hash}$ is a 
random oracle: to run an MPC protocol that computes $\ctx$ we must represent the very hashing algorithm as a circuit.

Protocol $\CMLMech$ follows the same blueprint as \MLMech, but intuitively replaces the ideal functionality $\Func^*_{\sf CR}$ with $\Func^*_{\sf ML}$ in order to improve the round complexity and achieve financial fairness. As a consequence, its security analysis faces a similar issue as the one discussed above.
Here, we propose an alternative solution that allows to obtain provable security in the plain model by focusing on standalone, rather than UC, security (which in turn implies security under sequential composition).
This weakening allows us to replace the non-committing encryption scheme with any semantically secure one, and to solve the issue of 
equivocating the ciphertext in the security proof by rewinding the adversary (which is not 
allowed in the UC setting). A similar solution was considered in~\cite{GordonIMOS10} 
for fair MPC without coins.

Rewinding in our setting essentially means that the simulator have the ability to reverse transactions on the blockchain, whereas distributed ledgers are typically immutable. However, there already exist certain blockchains where blocks can be redacted given a secret trapdoor~\cite{AtenieseMVA17} (and are immutable otherwise). In our case, such a trapdoor would not exist in the real world, but rather it would be sampled by the simulator in the security proof, in a way very similar to standard proofs of security in the common reference string model~\cite{BlumSMP91,CanettiF01}.
We further note that previous work also used limited forms of rewinding in the setting of MPC protocols with blockchains. For instance, Choudhuri {\em et al.}~\cite{ChoudhuriGJ19} construct black-box zero-knowledge protocols in a blockchain-hybrid model where the simulator is allowed to rewind only during certain slots.

\vspace{0.2\baselineskip}

\subsection{Asymptotic Efficiency.}\label{sec:eff_sim}
\begin{table}[!tp]
	\centering
	\caption{Efficiency of State-of-the-art penalty protocols}
	\label{fig:eff}
	\begin{minipage}{0.9\columnwidth} \footnotesize
		Comparing penalty protocols in terms of round complexity, number of transactions, script complexity/capability, and fairness. We denote by $n$ the number of parties, by $m$ the output size of the function being computed, by $r$ the number of stages in the reactive/multistage setting, and by $\secpar$ the security parameter.
	\end{minipage}
	\begin{tabular}{lccc}
		\hline
		\textbf{Protocol} & \textbf{\#Rounds} & \textbf{\#TXs} & \textbf{Script Complexity} 
		\\
		\hline
		\LMech &\small $O(n)$ &\small $O(n)$ & \small $O(n^2m)$ 
		\\
		\CLMech  &\small $O(n)$ &\small $O(n)$ & \small $O(n\secpar)$
		\\
		\MLMech~ &\small $O(1)$ &\small $O(n^2)$ & \small $O(n^2m)$
				\\
		\CMLMech~ &\small $O(1)$ &\small $O(n^2)$ & \small $O(n\secpar)$
		\\

		\IMPC &\small $O(1)$ &\small $O(n)$ & \small $O(nm)$
		\\ \hline
		\LLMech & \small $O(n)$ & \small $O(n^2)$ & $O(n^2mr)$ \\
		\ALMech & \small $O(n)$ & \small $O(n^2)$ & $O(n^2m\secpar)$ \\
		\PLMech & \small $O(n)$ & \small $O(n)$ & $O(n^2mr)$ \\
		\CPLMech & \small $O(n)$ & \small $O(n)$ & \small $O(nr\secpar )$ \\
		\bottomrule
	\end{tabular}
		\vspace*{-1\baselineskip}
\end{table}
In this section we compare the penalty protocols w.r.t their on-chain efficiency in both an asymptotic and an empirical way (assuming their execution is on a Bitcoin network). 
In Table \ref{fig:eff} we can notice that the script complexity of \CLMech, \CPLMech\ and \CMLMech\  does not depend on the size of the output function, but only on the number of parties $n$ and the security parameter $\secpar$, thus leading to a significant efficiency speed-up. As it can be noticed also in Table~\ref{fig:summary}, \CLMech\ and \CPLMech\ are not provably secure, whilst our \CMLMech\ is secure only under standard (rather than universal) composition. 
Because of this limitation, \IMPC\ can be considered the best protocol among the presented ones.

\section{Theoretical Analysis of Financial Fairness.}\label{sec:thfin}
\subsection{(Un)fairness of penalty protocols.}
We formally prove that the family of Ladder Protocols does not meet financial fairness as per our definition. The latter is achieved by interpreting \LMech, \LLMech, \CLMech, \PLMech\ and \ALMech\ as MPC protocols in the $\Func^*_{\sf escrow}$-hybrid model, and by carefully analyzing the sequences of deposits/refunds made/received by each participant.
\begin{theorem}\label{thm:kumaresan}
	For any $n\ge 2$, and penalty amount $q>0$, the following holds for
	the $n$-party Ladder protocol $\pi_{\sf L}$ from~\cite{BentovK14}: 
	\begin{itemize}
		\item If $\npv = 1$, the protocol is financially fair.
		\item If $\npv \ne 1$, the protocol is {\em not} financially fair.
	\end{itemize}
\end{theorem}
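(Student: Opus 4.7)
The plan is to enumerate the deposits and refunds of each party in an honest execution of the $n$-party Ladder protocol and to evaluate the cost formula~\eqref{eq:npv} under the two regimes separately. Let $\round_R$ denote the round of the (simultaneous) roof deposits, $\round_{L,i}$ the round of $\party_i$'s ladder deposit for $i\in\{2,\ldots,n\}$, $\round_{C,j}$ the round of $\party_j$'s ladder claim for $j\in\{1,\ldots,n-1\}$, and $\round_{CR}$ the round at which $\party_n$ sweeps all the roof deposits. From the protocol's structure: $\party_1$ deposits $\deposit$ at $\round_R$ and is refunded $\deposit$ at $\round_{C,1}$; for $2\le i\le n-1$, party $\party_i$ deposits $\deposit$ at $\round_R$ and $(i-1)\deposit$ at $\round_{L,i}$ and is refunded $i\deposit$ at $\round_{C,i}$; and $\party_n$ deposits $(n-1)\deposit$ at $\round_{L,n}$ and is refunded $(n-1)\deposit$ at $\round_{CR}$ (as the sum of the $n-1$ roof claims).

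For the case $\npv\equiv 1$, Equation~\eqref{eq:npv} collapses to $\Cost_i=\sum_{\round}(\deposit_{i,\round}-\return_{i,\round})$. By the tabulation above, total deposits exactly equal total refunds for each party, so $\Cost_i=0$ for every $i$. Hence $\Cost_i=\Cost_j$ for all honest $i,j$, proving financial fairness.

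For the case $\npv\not\equiv 1$, it suffices to exhibit a single non-constant discount function together with one transcript under which two honest parties have unequal costs; the honest transcript above, paired with $\npv(\round)=(1+\delta)^{-\round}$ for any $\delta>0$, does the job. I would compute
\[
\Cost_1 = \deposit\bigl(\npv(\round_R)-\npv(\round_{C,1})\bigr),\qquad \Cost_n = (n-1)\deposit\bigl(\npv(\round_{L,n})-\npv(\round_{CR})\bigr),
\]
and observe that since $\party_n$ locks the larger quantity $(n-1)\deposit$ over the strictly longer interval $[\round_{L,n},\round_{CR}]$ whereas $\party_1$ locks only $\deposit$ over the shorter interval $[\round_R,\round_{C,1}]$, a short calculation yields $\Cost_n>\Cost_1$, so the protocol is not financially fair under this $\npv$.

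The main obstacle is the bookkeeping of which roof/ladder transaction contributes to which party's deposit or refund column at which round; done carefully, it yields the identities above. A secondary subtlety is to extend the unfairness direction to every non-constant $\npv$ rather than only to strictly decreasing ones: this is handled by observing that the rounds $\round_R,\round_{L,n},\round_{C,1},\round_{CR}$ are pairwise distinct and strictly ordered, so any $\npv$ deviating from $1$ at a round used by the protocol must assign distinct values at some subset of these, which—combined with the asymmetry in the coefficients $\deposit$ versus $(n-1)\deposit$—breaks the equality $\Cost_1=\Cost_n$.
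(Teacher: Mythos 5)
Your bookkeeping of the honest run is correct and matches the paper's, which derives $\Cost_{i\ne n} = q\cdot\npv(1) + (i-1)q\cdot\npv(n-i+2) - iq\cdot\npv(n+i)$ and $\Cost_n = (n-1)q\bigl(\npv(2)-\npv(2n)\bigr)$, and your $\npv=1$ bullet is fine. The gap is in the unfairness bullet: the ``short calculation'' yielding $\Cost_n>\Cost_1$ does not exist, because the heuristic ``larger amount over a longer interval costs more'' ignores that $\party_n$'s later deposit is itself more heavily discounted. With $\npv(\round)=x^\round$ where $x=(1+\delta)^{-1}$, one has $\Cost_1 = qx(1-x^n)$ and $\Cost_n = (n-1)qx^2(1-x^{2n-2})$. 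For $n=2$ this gives $\Cost_n = x\,\Cost_1 < \Cost_1$, the opposite of your claimed direction (and note the two lock intervals have equal length there, contradicting ``strictly longer''). Worse, for every $n\ge 3$ there is a rate at which your witness pair is useless: setting $h(x) = (n-1)x(1-x^{2n-2}) - (1-x^n)$, we have $\Cost_1=\Cost_n$ iff $h(x)=0$; since $h(0^+)=-1<0$ while $h(1)=0$ with $h'(1) = n-2(n-1)^2 = -(2n-1)(n-2) < 0$ (so $h>0$ just below $1$), continuity gives some $x^*\in(0,1)$ with $\Cost_1=\Cost_n$ exactly. At that perfectly ordinary exponential discount your only comparison shows equality, so neither your main argument nor the closing paragraph (which asserts that the coefficient asymmetry ``breaks the equality $\Cost_1=\Cost_n$'' for any non-constant $\npv$) goes through. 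A secondary issue: the theorem's second bullet fixes an arbitrary $\npv\ne 1$ and asserts unfairness for that $\npv$, so ``it suffices to exhibit a single non-constant discount function'' proves strictly less than claimed — you must exhibit, for each such $\npv$, \emph{some} pair of honest parties with unequal costs.

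The paper sidesteps all of this by comparing the adjacent pair $(\party_1,\party_2)$ instead of the extremes: from the formulas above, $\Cost_2-\Cost_1 = q\bigl[\npv(n)+\npv(n+1)-2\npv(n+2)\bigr]$, which is strictly positive for \emph{every} strictly decreasing $\npv$ (for exponential discount $\npv(\round)=c^\round$ it equals $qc^n(1-c)(1+2c)>0$), so this single pair witnesses unfairness uniformly in the discount rate; for $n=2$, where $\party_2=\party_n$, one checks $\Cost_2 = c\,\Cost_1 \ne \Cost_1$ directly. To repair your proof, replace the $(\party_1,\party_n)$ comparison by this adjacent-pair computation (or otherwise argue that for each admissible $\npv\ne 1$ at least one of the $n-1$ pairwise differences is nonzero); as written, the step asserting $\Cost_n>\Cost_1$ is false and the conclusion does not follow from it.
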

\begin{proof}
	Consider the hybrid ideal functionality $\Func_{\sf CR}^*$ 
	; 
	intuitively this functionality allows a sender $\party_i$ to \emph{conditionally} send 
	coins to a receiver $\party_j$, where the condition is formalized as a circuit $
	\phi_{i,j}$ with time-lock $\tau$: The receiver $\party_j$ can obtain $\party_i$'s 
	deposit by providing a satisfying assignment $w$ within time $\tau$, otherwise $
	\party_i$ can have his deposit refunded at time $\tau+1$.
	
	$\Func_{\sf CR}^*$ clearly belongs to the family $\Func_{\sf escrow}
	^*$ described in \S\ref{sec:fescrow}: the Deposit Phase 
	corresponds to the $(\mathtt{deposit},*,*)$ commands in $\Func_{\sf escrow}$, 
	whereas the Claim/Refund Phase corresponds to the $(\mathtt{refund},*,*)$ 
	commands in $\Func_{\sf escrow}$.
	
	Given $\Func_{\sf CR}^*$, protocol $\pi_{\sf L}$ proceeds 
	as follows.
	First, each player $\party_i$ (except for $\party_n$) uses $\Func_{\sf 
		CR}^*$ to make a deposit of $\coins(q)$ to $\party_n$, with predicate\footnote{We 
		do not specify the predicates, as those are immaterial for characterizing the financial 
		fairness of the protocol.} $\phi_n$.
	After all these deposits are made, each party $\party_i$ with $i = n,\ldots, 2$ in 
	sequence uses $\Func_{\sf CR}^*$ to make a deposit of $\coins((i-1)q)$ to $\party_{i-1}$ and with predicate $\phi_{i-1}$. Let us call these deposits $\Tx_{i-1}$, 
	and denote by $\Tx_{n,i}$ the initial deposits to $\party_n$.
	Finally, the deposits are claimed in reverse order: First, $\party_1$ claims $\Tx_1$, 
	then $\party_2$ claims $\Tx_2$, until $\party_n$ claims $\Tx_{n,i}$ for each $i\ne 
	n$.
	Aborts are handled as follows: If $\party_{i+1}$ does not make $\Tx_{i}$, each party 
	$\party_{j\le i}$ does not make $\Tx_{j-1}$ and waits to receive the refund from $
	\Tx_{n,j}$, whereas each $\party_{j >i}$ keeps claiming $\Tx_j$ as described above.
	
	For simplicity we consider all parties are honest. The loss 
	of party $\party_i$ is:
	\begin{small}
		\begin{align*}
		\Cost_{i\ne n} &= q\cdot\npv(1) + (i-1)\cdot q\cdot\npv(n-i+2) - i\cdot 
		q\cdot\npv(n+i). \\
		\Cost_{n} &= (n-1)\cdot q\cdot\npv(2) - (n-1)\cdot q\cdot\npv(2n). 
		\end{align*}
	\end{small}
	When $\npv = 1$, we have $\Cost_1 = \ldots = \Cost_n = 0$.
	However, since, {\em e.g.},\ $\Cost_1 \ne \Cost_2$ for any choice of $\npv < 1$, we 
	conclude that $\pi_{\sf L}$ is not financially fair whenever $\npv\ne 1$.
\end{proof}

In contrast, the $\MLMech$ protocol is financially fair.

\begin{theorem}
	For any $n \ge 2$, \MLMech~\cite{KumaresanB14} is financially fair.
\end{theorem}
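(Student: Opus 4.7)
The plan is to interpret \MLMech\ as a protocol in the $\Func_{\sf escrow}^*$-hybrid model, with $\FuncML$ instantiating the escrow, and to show that the arrays $(\calD_i, \calR_i)$ of deposits and refunds defined in \S\ref{sec:fescrow} coincide across all honest parties, no matter what the adversary does. Once this is established, Eq.~\eqref{eq:npv} immediately yields $\Cost_i = \Cost_j$ for every pair of honest $\party_i, \party_j$ and every discount function $\npv$, which is exactly Def.~\ref{def:financial_fairness}.

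First I would observe that, by construction, every honest party plays symmetrically: each honest $\party_i$ submits an identical $\mathtt{lock}$ message carrying $\coins(d)$ to $\FuncML$ at the same round $t_1$, and, whenever the Lock Phase succeeds, each honest party submits the redeem message together with its share at the same round $t_2 \le \timeout_2$. So $\calD_i$ is already the same list across honest parties, with a single entry $(t_1, d)$.

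Next I would perform a case analysis on the refund sequence $\calR_i$ induced by $\FuncML$. (i) If some party fails to lock by $\timeout_1$, then by the atomicity of the Lock Phase of $\FuncML$, every party that did lock, including every honest party, is refunded $\coins(d)$ at round $\timeout_1$; hence every honest $\calR_i = \{(\timeout_1, d)\}$. (ii) If the Lock Phase succeeds and every party reveals a valid opening in time, then $\FuncML$ returns $\coins(d)$ to each party at the same round; hence every honest $\calR_i = \{(\timeout_2, d)\}$. (iii) If the Lock Phase succeeds but a nonempty set $M$ of parties fails to redeem, each honest party nevertheless obtains its own $\coins(d)$ at round $\timeout_2$ (since it revealed correctly), and at round $\timeout_2 + 1$ receives a fixed symmetric fraction of $|M|\cdot d$ coming from the forfeited deposits; the resulting $\calR_i$ is again the same for all honest parties.

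The main obstacle will be case (iii): I must rule out any adversarial strategy that causes two honest parties to receive refunds of different sizes or at different rounds, for instance by selectively aborting so as to single out a victim. This is ruled out by the formal specification of $\FuncML$, which distributes the forfeited deposits atomically and uniformly across all non-aborting parties; a careful inspection of the Compensation Phase of \cite{KumaresanB14} is the step that cements the symmetry argument and completes the proof.
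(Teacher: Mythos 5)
Your proposal is correct and follows essentially the same route as the paper: both cast $\FuncML$ as an instance of $\Func_{\sf escrow}^*$ and observe that every honest party has an identical deposit/refund schedule regardless of adversarial behavior, so Eq.~\eqref{eq:npv} immediately yields equal costs. The paper merely compresses your three-case analysis into the single closed-form expression $\Cost_i = (n-1)q\cdot\npv(1) - (n-1)q\cdot\npv(\round) - s\cdot q\cdot\npv(\round+1)$, where $s \le n-1$ is the number of corrupted parties that failed to redeem, which is visibly independent of $i$.
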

\begin{proof}
	It is easy to see that $\FuncML$ belongs to the family $\Func_{\sf escrow}^*$.
	Then, financial fairness immediately follows by the fact that the loss of the $i$-th player can be computed as follows:

	$\Cost_i = (n-1)q\cdot\npv(1) - (n-1)q\cdot\npv(\round) - s\cdot 
	q\cdot\npv(\round+1)$,
	where $s \le n-1$ is the number of corrupted parties that did not redeem a valid witness in the fair reconstruction phase.
\end{proof}
\IMPC~\cite{BaumDD18} can similarly be shown to be financially fair.

\subsection{Round Robin, Small Collateral and Repeated Games.}\label{sec:colla}
A natural idea to overcome the negative results on financial fairness shown above would be to simply let parties rotate their roles in different executions, or to select the roles randomly in each execution, with the hope of achieving financial fairness on expectation.
Unfortunately, we show here that these approaches are also deemed to fail, except for a finite, very small, numbers of discount rates. 

\textbf{Round Robin.}
	The ``round robin'' approach considers a global protocol 
	which consists of multiple repetitions in a round robin fashion
	of a financially \emph{un}fair penalty-based protocol 
	(such as the Ladder protocol \LMech~\cite{BentovK14}).
	This hopes to fix the unfairness in the penalty-based protocol
	if the same set of parties are to run the protocol more than once: 
	by shifting the party index in each run, 
	{\em e.g.}\ the last party becomes the first party, the first party becomes the second party, and so on.
	This is different from the penalty protocols 
	that support a multi-stage reactive functionality (such as Locked Ladder \LLMech~\cite{kumaresan2015use}),
	as even though those protocols seem to be based on repeated instances of a non-reactive protocol,
	one cannot shift the party index without affecting security.
Unfortunately this solution doesn't work in general even for 
the simple case in which the reward is the same for all parties.
In the following theorem, we will show that to achieve fairness using the round robin approach,
	the parties must be able to obtain a specific limited number of discount rates ({\em e.g.}\ from the banks), 
	depending on the deposit schedule,
	which is not practical (as the discount rates are given by the banks, not asked by the parties).
\begin{theorem} Let an unfair protocol be identified by deposits $\deposit_{i,\round}$ for each party $i\in[n]$ which may be rotated to different parties at round robin step $\rho\in[k]$ thus determining a schedule $\deposit_{i,\round+\rho}$. There are at most $\timeout\cdot k$ specific rates $\rate$ that 
	admit a fair round robin global protocol.
\end{theorem}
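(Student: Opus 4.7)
The plan is to translate the fairness condition into a polynomial equation in the discount rate and then invoke the fundamental theorem of algebra. Assuming the standard fixed-rate model $\npv(\round)=(1+\rate)^{-\round}$ (cf.\ \S\ref{sec:escrow_overview}), the global round-robin protocol is a concatenation of $k$ runs of the unfair base protocol, each of length $\timeout$, so every deposit/refund is stamped with a round index in $[0,\timeout\cdot k]$. By Eq.~\eqref{eq:npv}, the net present cost $\Cost_i$ of party $\party_i$ at the end of the global run is a linear combination of the values $\{(1+\rate)^{-\round'}\}_{\round'=0}^{\timeout k}$, with coefficients determined by the aggregated (rotated) deposits and refunds. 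After clearing the denominator $(1+\rate)^{\timeout k}$, this cost becomes a polynomial $C_i(\rate)\in\mathbb{R}[\rate]$ of degree at most $\timeout\cdot k$.

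Next, I would pick any pair of honest parties $i\ne j$ and form the difference polynomial $\Delta_{i,j}(\rate):=C_i(\rate)-C_j(\rate)$, again of degree at most $\timeout\cdot k$. Financial fairness as per Def.~\ref{def:financial_fairness} requires $\Delta_{i,j}(\rate)=0$ for every such pair. If $\Delta_{i,j}$ is not the zero polynomial, then by the fundamental theorem of algebra it admits at most $\timeout\cdot k$ real roots, so only finitely many rates $\rate$ make $\party_i$ and $\party_j$ financially equivalent. Taking the intersection of the zero sets across all pairs only tightens the count, yielding the claimed bound of $\timeout\cdot k$ fair rates.

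The crux, and the main obstacle, is ruling out that $\Delta_{i,j}$ is identically zero for every pair, since a vanishing difference polynomial would trivially permit infinitely many fair rates. I would argue contrapositively: if $\Delta_{i,j}$ vanished on all of $\mathbb{R}$, matching coefficients against the linearly independent basis $\{(1+\rate)^{-\round'}\}_{\round'}$ would force the aggregated net cash flows of $\party_i$ and $\party_j$ to agree round by round across the $k$ rotations. Because the round-robin rotation is a fixed permutation of roles (the same base schedule $\deposit_{i,\round}$ merely re-indexed at each step $\rho$), this per-round equality propagates back to force the base protocol's deposit schedules themselves to be symmetric across the rotated roles, contradicting the hypothesis that the base protocol is financially \emph{un}fair (witnessed, e.g., by Theorem~\ref{thm:kumaresan} for the Ladder family). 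This non-vanishing step is the technically delicate part, as it requires a combinatorial analysis of how the $k$ rotations interleave per-round contributions, rather than a purely algebraic manipulation.
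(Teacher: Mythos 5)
Your proposal is correct and takes essentially the same route as the paper's proof: the paper likewise reduces pairwise fairness $\sum_{\rho}\Cost_{i,\rho}e^{-\rate\rho}=\sum_{\rho}\Cost_{j,\rho}e^{-\rate\rho}$ to a polynomial identity (via the substitution $x=e^{-\rate(\rho+\round)}$, where you instead clear denominators in $(1+\rate)^{-\round}$), obtains degree at most $k\timeout$ with coefficients $\deposit_{i,\round+\rho}-\deposit_{j,\round+\rho}$, invokes unfairness of the base protocol to rule out the identically-zero polynomial, and counts roots. The non-vanishing step you flag as the crux is exactly where the paper is terse — it asserts in one line that base unfairness yields a nonzero coefficient — and your linear-independence/contrapositive sketch (which becomes immediate once the $k$ runs occupy disjoint round windows, so each exponent $\rho+\round$ decomposes uniquely and the $\rho=0$ block inherits the base schedules' difference) is, if anything, more careful than the paper's own treatment.
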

\begin{proof}We observe that for the round-robin protocol to be fair we need to satisfy the following equation for all pair of parties $i$ and $j$:
	$\sum_{\rho}\Cost_{i,\rho} e^{-\rate \rho} = \sum_{\rho}\Cost_{j,\rho} e^{-\rate \rho}$

	In the simple case where each party have the same reward at the end of the protocol, it can be transformed as follows:
	\[
	\sum_\rho\sum_{(\round,\deposit_{i,\round})} \deposit_{i,\round+\rho} \cdot e^{-\rate (\rho+\round)}
	=\sum_\rho\sum_{(\round,\deposit_{j,\round})} \deposit_{i,\round+\rho} \cdot e^{-\rate (\rho+\round)}
	\]
	By setting $e^{-\rate(\rho+\round)}$ equal to $x^{\rho+\round}$ we obtain a polynomial 
	equation of degree at most $k\timeout$ with integer coefficients equal to $\deposit_{i,
		\round+\rho}-\deposit_{j,\round+\rho}$. We can repeat the procedure for all pairs and we 
	obtain at most $n-1$ independent polynomial equations. It is enough that we consider the 
	pairs $1,i$ for all $i\geq 2$. The remaining equations can be derived from those ones.
	Since the original protocol is unfair, there must be at least one polynomial where at least one of such coefficient for each value of $\rho$ is 
	not zero, hence each polynomial is not identically zero and of rank at least $(k-1)\timeout$ 
	and maximum of $k\timeout$. Hence each polynomial has at most $k\timeout$ zeroes i.e. 
	values of \rate\ that admit a fair round schedule. If the other polynomials are also not identically zero, such values \rate\ must also be zeroes for the other $n-1$ 
	polynomials.
\end{proof}
\textbf{Small Collateral and Repeated Games.}
Another approach to allow the use of financially unfair protocols
in practice might be the use a of a small collateral.
Then all parties will not worry on a small interested rate on
the collateral if they have a choice of a significant 
reward at the end of the protocol. If the game is repeated several thousands
times, e.g. in financial trading, a small collateral might quickly accrue to a significant
values and therefore such solution might only hold for 
games that are not played often.

Unfortunately,  game theoretic considerations makes such proposition (make a 
mall collateral w.r.t. stakes and rewards so that interest is negligible)
less practical than it seems. It only works if \emph{all} parties have a final large reward 
with certainty.
In  cases where a party may win a lot and other 
parties may lose everything, such as poker or financial trading, this is not longer true. It is a variant of the prisoner dilemma \cite[Chap.2]{binmore2007playing}. 

\begin{theorem}
In a game where the reward of the winning parties is significantly larger than the losing party and the collateral is negligible in comparison to the stake,
the strategy `Playing last and abort if unsatisfied' is a strictly dominating strategy.
\end{theorem}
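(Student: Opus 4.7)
The plan is to proceed by a direct payoff comparison in the extensive form of the Ladder-like protocol. First I would fix notation: let $q$ denote the collateral each party deposits, let $R_w>0$ denote the reward of a winning party and $-R_\ell<0$ the loss of a losing party at the end of a non-aborted execution, and assume $q \ll \min\{R_w,R_\ell\}$. I would model the protocol as a sequential game in which the last-acting party $\party_n$ is the one who, by the structure of the Ladder and its variants, learns the final output (and hence whether she is a winner or a loser) \emph{before} she must publish her claim transaction. The two actions available to $\party_n$ at that decision point are $\cont$ (complete the protocol normally) and $\abort$ (withhold her share, thereby forfeiting her collateral $q$ and compensating the others by a negligible amount).

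Next I would enumerate payoffs conditional on $\party_n$'s knowledge. If $\party_n$ observes she is a winner, $\cont$ yields $+R_w$ while $\abort$ yields $-q$, so $\cont$ dominates. If $\party_n$ observes she is a loser, $\cont$ yields $-R_\ell$ while $\abort$ yields $-q$, and since $q \ll R_\ell$ we have $-q > -R_\ell$, so $\abort$ dominates. The combined \emph{conditional} strategy ``continue if winner, abort if loser'' therefore weakly dominates any pure strategy of always continuing or always aborting, and strictly dominates both in at least one information set, giving an expected payoff at least $\Pr[\text{win}]\cdot R_w - \Pr[\text{lose}]\cdot q$ as opposed to the honest-completion payoff $\Pr[\text{win}]\cdot R_w - \Pr[\text{lose}]\cdot R_\ell$.

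Finally I would argue that playing \emph{last} is strictly preferred to playing any earlier position $i<n$. A party acting at an earlier slot does not yet have access to the information that determines the win/lose outcome (by cryptographic hiding of the commitments, a property already invoked for the unfair off-chain MPC phase), so her abort-or-continue decision cannot be conditioned on the outcome and she exposes herself to the full $-R_\ell$ in the losing branch. Hence for each party, the strategy ``request to play last, then continue iff winning'' strictly dominates every strategy that requests an earlier position or that plays unconditionally, which is the claim.

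The main obstacle, I expect, is not the payoff arithmetic but the modelling step: the Ladder-style protocols do not literally allow a party to \emph{choose} her slot, so the ``dominating strategy'' is really a statement about the \emph{preference ordering} over slot assignments together with behaviour at the final decision node. To make this rigorous I would phrase the theorem in a pre-protocol negotiation stage in which parties bid for positions, and show that each party's best response is to bid for the last slot; under symmetric bids this produces a coordination failure analogous to the prisoner's dilemma cited in~\cite{binmore2007playing}, which is the intended interpretation of the theorem.
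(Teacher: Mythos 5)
Your proposal is correct and takes essentially the same route as the paper's own proof: the paper likewise argues by direct payoff comparison that for the last mover, cooperating on an unfavourable outcome (forfeiting the stake) is dominated by aborting (forfeiting only the negligible collateral), concluding that the equilibrium is ``first player cooperates, last player aborts if he doesn't win.'' Your extra care --- explicit $q \ll \min\{R_w,R_\ell\}$ notation, the information argument for why the last slot is the valuable one, and the pre-protocol bidding stage to make ``playing last'' a genuine strategy choice --- goes beyond the paper's informal two-player (Alice/Bob) option enumeration, which silently glosses over the fact that positions are not chosen, but it does not change the underlying argument.
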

\begin{proof} The first 
player (Alice) can decide to (1a) abort and retrieve the initial stake minus the collateral, or
cooperate and --- if the last player cooperate --- (2a) retrieve nothing for herself 
or (3a) grab the reward depending 
on the random outcome of the computation. If the last player aborts  (4a) she will retrieve the initial stake plus the collateral. In contrast
the last player (Bob) can (2a) retrieve the reward if he cooperates and the outcome is
positive or (4a) abort and retrieve the initial stakes minus the collateral if the outcome is zero for him. 
The option (3a) of cooperating --- when the outcome is negligible for him --- is dominated by the option (4a) of retrieving the initial stake minus the collateral. Hence
the Nash equilibrium is first player cooperates, last 
player aborts if he doesn't win.
\end{proof}
In a repeated games with discount rates for later moves (See Section 8.3.3 in \cite{binmore2007playing}) both players may cooperate 
if the discount rate is large enough even if the individual game 
would have a dominating strategy for defecting (i.e. in our case 
going last and abort if unsatisfied).
Unfortunately, this case is not applicable in our scenario
as it requires players to have strategies that are contingent 
on the previous behavior of the game, i.e. players needs to know how
the other players played in the previous instance of the game. Since
players might join with a new pseudonym, one cannot hold them accountable for 
repeated aborts. Therefore the repeated game collapses into a sequence of
independent games where our result holds.

In summary, \emph{participating to MPCs with a 
small collateral is essentially a waste of time as the last partner is likely to defect.} For example, Poker with MPC will require “robust" collateral. Hence, for the remainder
of the paper we will consider serious games where 
the collateral is large. 

\section{Fairness and Concrete Efficiency on Bitcoins}\label{sec:concsim}
In the following, we will analyze the penalty protocols' efficiency in a   more concrete and empirical way. 
\subsection{Setting the Scene: Bloomberg Transactions}

We consider a realistic scenarion on Bloomberg Tradebook which was also used as a benchmark in a decentralized trading system \cite{NgoMKW21}:\footnote{
		Bloomberg Tradebook is an agency broker that serves financial services providers in the US. The brokerage pool is a darkpool (quoting is restricted to the market participants) and only trades (market orders) are reported to the market. There is no central limit order book, but an electronic blind matching algorithm that utilizes a direct market access tool for individual positions. The traders then can match against each other's price-demand schedules (hence a very suitable setup for peer-to-peer trading). The number of trades within this specific market are low, around 55 per day. The number of messages is quite high, at around 6500 messages per day at an average of around 130 messages quoted per trade executed. This is quite typical. Futures markets, such as the Eurodollar, Crude Oil and agricultural commodities, have very similar limit-order (public quotes) to traded securities ratios of between 50 to 300 quotes per trade. Hence, this benchmark provides a realistic test case for our algorithm for a single security.
}
\begin{itemize}
\item The number of parties could be 55 in an average trading day; as such, a two round	protocol would last 110 rounds
\item The number of messages (protocol executions) could reach 6000 in an average trading day
\end{itemize}

We consider the 55-party realistic case with a minimum penalty chip of $q$ to be consistent with the cited papers. The unfairness phenomena are amplified with $n$ parties: {\em e.g.} a large futures trading venue would comprise up to 500 parties.

We simulate the on-chain efficiency for 2-55 parties using the practical case from the Bloomberg dark pool, just to execute one contract in the case of non-reactive functionality, or two in the case of reactive functionality.
If the protocol supports 
reactive functionalities ({\em i.e.}, \LLMech\ and \PLMech), we limit the number of stages 
to 2. 
For each protocol \LMech, \LLMech, \PLMech, \ALMech, \MLMech, we first simulate 
the sequence of deposits $\{\deposit_{i,t}\}$ (with $q$ being the base unit used for 
penalization) and withdraws $\{\return_{i,t}\}$ of each $\party_i$ from an honest 
execution of the protocol.  

For transaction fees we assume the Bitcoin network's commonly used minimum transaction fee of 546 satoshis (1 satoshi = $10^{-8}$ BTC) and a BTC costs approximately 48k USD (by May 2021). For the execution time, we use the standard assumption that 1 BTC round is one hour. For script complexity, we assume 80 bits security with input size (shares) of 128 bits and the commitment scheme is SHA-256 (pre-images of 512 bits and outputs of 256 bits).
\LMech, \MLMech, and \ALMech\ are non-reactive protocols. We assume the reactive \LLMech\ and \PLMech\ are with 2 stages. For the \PLMech\ we evaluate both the naive version and the compact version \CPLMech.

\subsection{Efficiency Analysis}\label{subsec:concr_eff}
Empirical efficiency is measured in terms of transaction fees (based on the number of transactions) in Fig.~\ref{fig:eff_tx}, execution time (based on the number of rounds) in Fig.~\ref{fig:eff_r}, and script complexity (based on input size in bits) in Fig.~\ref{fig:eff_c}. 
In our empirical analysis we do not take into account \IMPC, \CMLMech, and \CPLMech, since concrete efficiency numbers are not provided in the protocol description of \IMPC\footnote{$\MLMech$ has higher script complexity than $\IMPC$, but provides better security guarantees. 
$\IMPC$ is asymptotically better ($O(nm)$), but we do not have enough data to to compare it with \MLMech\ on concrete numbers. Moreover, its security degrades because of the RO assumption.}, while \CMLMech\ is not universally composable and \CPLMech\ has no provable security.

All of the protocols show acceptable transaction fees, even the most expensive protocol \LLMech, costs only 3277\$ for the 55 parties case (which means approximately only 60\$ for each party). However, all protocols except \MLMech\ and \ALMech\ show unacceptable execution time: \MLMech\ concludes in  the simplest (non-reactive) protocol \LMech\ takes 5 days to finish for 55 parties to execute 1 contract while the most complicated (reactive) protocol \LLMech\ requires 23 days to execute 2 contracts; the improved (reactive) protocol \PLMech\ does reduce the requirement to 14 days but it is still too slow. \MLMech\ is the protocol with the lowest script complexity in the case of non-reactive while Compact \PLMech\ is the protocol with the lowest script complexity in the case of reactive. The non-compact version of \PLMech\ yields the highest script complexity.

\begin{figure}[t]
	\centering
			\begin{tikzpicture}
		\begin{axis}[
		height=6cm,
		legend pos=north west,
		ylabel= Transaction Fee in USD,
		xlabel= Number of Parties,
		ymode=log,
		log basis y={10}
		]
		\addplot[color=black,mark=*]
		coordinates{
			(2,2)
			(3,3)
			(4,4)
			(5,5)
			(10,10)
			(15,15)
			(20,20)
			(25,26)
			(50,52)
			(55,57)
		};
		\addlegendentry{\LMech}
		\addplot[color=black, mark=square*]
		coordinates {
			(2,2)
			(3,2)
			(4,3)
			(5,3)
			(10,6)
			(15,8)
			(20,11)
			(25,14)
			(50,27)
			(55,29)
			};
		\addlegendentry{\MLMech}
		\addplot[color=black, mark=otimes*]
		coordinates {
			(2,3)
			(3,4)
			(4,6)
			(5,7)
			(10,15)
			(15,23)
			(20,31)
			(25,39)
			(50,78)
			(55,86)
			};
		\addlegendentry{\ALMech}
		\addplot[color=black, mark=triangle*]
		coordinates {
			(2,5)
			(3,11)
			(4,19)
			(5,30)
			(10,114)
			(15,250)
			(20,439)
			(25,680)
			(50,2672)
			(55,3227)
		};
		\addlegendentry{\LLMech}
		\addplot[color=black, mark=diamond*]
		coordinates {
			(2,4)
			(3,6)
			(4,8)
			(5,10)
			(10,20)
			(15,31)
			(20,41)
			(25,52)
			(50,104)
			(55,115)
		};
		\addlegendentry{\PLMech}
		\end{axis}
		\end{tikzpicture}
	\begin{minipage}{0.9\columnwidth} \footnotesize
		The most expensive \LLMech\ requires 12312 transactions, which costs around 3277\$ for the 55 parties case, approximately 60\$ per party.	\end{minipage}
	\caption{Transaction fees (based on the number of BTC transactions, where 1 transaction costs 546 satoshis, 1 BTC = 48k USD, by May 2021).}\label{fig:eff_tx}
		\vspace*{-\baselineskip}
\end{figure}

\begin{figure}[t]
	\centering
			\begin{tikzpicture}
		\begin{axis}[
		height=6cm,
		legend pos=north west,
		ylabel= Execution Time (Days),
		xlabel= Number of Parties,
		log basis y={10}
		]
		\addplot[color=black,mark=*]
		coordinates{
			(2,1)
			(3,1)
			(4,1)
			(5,1)
			(10,1)
			(15,2)
			(20,2)
			(25,3)
			(50,5)
			(55,5)
		};
		\addlegendentry{\LMech}
		\addplot[color=black, mark=square*]
		coordinates {
			(2,1)
			(3,1)
			(4,1)
			(5,1)
			(10,1)
			(15,1)
			(20,1)
			(25,1)
			(50,1)
			(55,1)
			};
		\addlegendentry{\MLMech\ and \ALMech}
		\addplot[color=black, mark=triangle*]
		coordinates {
			(2,1)
			(3,2)
			(4,2)
			(5,2)
			(10,4)
			(15,7)
			(20,9)
			(25,11)
			(50,21)
			(55,23)
		};
		\addlegendentry{\LLMech}
		\addplot[color=black, mark=diamond*]
		coordinates {
			(2,1)
			(3,1)
			(4,1)
			(5,2)
			(10,3)
			(15,4)
			(20,5)
			(25,7)
			(50,13)
			(55,14)
		};
		\addlegendentry{\PLMech} 
		\end{axis}
		\end{tikzpicture}
	\begin{minipage}{0.9\columnwidth} \footnotesize
		The simplest protocol \LMech\ takes 216 rounds, which is 5 days to finish, while the most complicated protocol \LLMech\ takes 23 days to finish. The improved protocol \PLMech\ takes 14 days to finish.
	\end{minipage}
	\caption{Execution time (based on number of rounds, where 1 round = 1 hour).}\label{fig:eff_r}
\end{figure}

\begin{figure}[t]
	\centering
			\begin{tikzpicture}
		\begin{axis}[
		height=6cm,
		legend pos=north west,
		ylabel= Input Size in Bits,
		xlabel= Number of Parties,
		ymode=log,
		log basis y={10}
		]
		\addplot[color=black,mark=*]
		coordinates{
			(2,1152)
			(3,3456)
			(4,6912)
			(5,11520)
			(10,51840)
			(15,120960)
			(20,218880)
			(25,345600)
			(50,1411200)
			(55,1710720)
		};
		\addlegendentry{\LMech}
		\addplot[color=black, mark=square*]
		coordinates {
			(2,768)
			(3,1152)
			(4,1536)
			(5,1920)
			(10,3840)
			(15,5760)
			(20,7680)
			(25,9600)
			(50,19200)
			(55,21120)
			};
		\addlegendentry{\MLMech}
		\addplot[color=black, mark=otimes*]
		coordinates {
			(2,1920)
			(3,14080)
			(4,37760)
			(5,76800)
			(10,636800)
			(15,2156800)
			(20,5116800)
			(25,9996800)
			(50,79996800)
			(55,106476800)
			};
		\addlegendentry{\ALMech}
		\addplot[color=black, mark=triangle*]
		coordinates {
			(2,2304)
			(3,16896)
			(4,45312)
			(5,92160)
			(10,764160)
			(15,2588160)
			(20,6140160)
			(25,11996160)
			(50,95996160)
			(55,127772160)
		};
		\addlegendentry{\LLMech}
		\addplot[color=black, mark=diamond*]
		coordinates {
			(2,5376)
			(3,12672)
			(4,23040)
			(5,36480)
			(10,149760)
			(15,339840)
			(20,606720)
			(25,950400)
			(50,3820800)
			(55,4625280)
		};
		\addlegendentry{\PLMech}
		\end{axis}
		\end{tikzpicture}
	\begin{minipage}{0.9\columnwidth} \footnotesize
		\MLMech\ is the protocol with the lowest script complexity in the case of non-reactive while \ALMech\ is the protocol with the lowest script complexity in the case of reactive. We do not consider \CLMech\ and \CPLMech\ since they do not satisfy the security requirements needed to be run on an existing blockchain, and \IMPC\ because of lack of concrete efficiency numbers.
	\end{minipage}
	\caption{Script complexity (input size in bits).}\label{fig:eff_c}
\end{figure}

\subsection{The Ladder Protocol for Trading}\label{sec:negative}

If the 55-party Ladder protocol was to be run on Bitcoin (requiring $\tau = 110$ rounds in total), 
a round would last approximately 60 minutes. 
 Now, assume an optimistic scenario: participants could borrow money from NYFed's SOFR to run the protocol (see \S\ref{sec:escrow_overview}, or alternatively that could be their opportunity costs). In essence they are wealthy, risk neutral and worth essentially cheap credit. Normal humans would require much higher interest rates as the empirical evidence shows \cite{benzion1989discount,ahlbrecht1997empirical,brown2015empirical}.
The discount (minute) rate would be $\rate_{\mathsf{m}} = 0.0005$ for each player.
To represent the \emph{net present cost of participation} of each participant as basis points we simply set  $\deposit = 10\,000USD$. If one is to use the Ladder protocol in a real world use case ({\em e.g.},\ dark pool futures trading),
the base deposit $\deposit$ would be necessarily at least the notional value of one futures contract
({\em i.e.},\ 1 million dollars in case of Eurodollar futures).

Thus, using Eq.~\eqref{eq:npv}, we have (in bps):
\begin{small}
	\allowdisplaybreaks
	\begin{align*}
	\Cost_1(110):=  (\deposit\cdot e^{-\drate_{\mathsf{m}\cdot60}} - \deposit\cdot e^{-\drate_{\mathsf{m}\cdot5\cdot60}})  \approx 0.11 (bps) \\
\Cost_{55}(110):= -(-54\deposit\cdot e^{\add{-}\drate_{\mathsf{m}\cdot2\cdot60}} + 54\deposit\cdot e^{\add{-}\drate_{\mathsf{m}\cdot110\cdot60}})  \approx 49
(bps)	
\end{align*}
\end{small}
In financial terms this is a disaster:
\begin{itemize}
	\item $\party_{55}$ would lose 0.49\%  of the deposit in terms of opportunity costs, which is almost \$5K, just to participate to trade a single contract!
\item Combining the numbers of messages (6000 on average), the last party $\party_{55}$ would spend \$30M just to participate to an average trading day, whereas the first party's cost would only be around \$60K (\$10 per contract, as the cost of $\party_1$ is only 0.1 bps).
\end{itemize}




\subsection{Comparative Analysis of Financial Fairness}\label{sec:comparisons}

We experimentally analyze how the different penalty protocols behave in terms of their 
inter-temporal choices.\footnote{While it seems our results could be derived with pencil and paper, 
the simulation shows non-obvious phenomena as even for a small number of parties (55) the numbers of rounds can be very large (300).}\footnote{\add{We assume the protocols are run in the same notational system for the same final outcome as this is the only way to make a sound design decision. So MPC markets could definietly adjust values 
of $q$s as $q$s are in currency units such as Euro or Dollars. But the same phenomenon is true of any multi-agent trading systems and is not the subject of the present paper and could be applied to any temporal investment strategy.}}\ifrebuttal\reviewnotemulti{\ref{C:c1}}\fi
\begin{itemize}
\item \emph{The Multi-Lock Protocol.} \MLMech\ is 
 straightforward in this respect. Every party
	deposits the same amount of coins at the same time, and can withdraw it as 
	soon as s/he has revealed the secret. The same holds for \CMLMech.
	
\item \emph{The Ladder Protocols}. The \LMech, 
	\LLMech, \PLMech, and 
	\ALMech\ protocols have inter-temporal 
	payment schedules which clearly differ in both amount and duration of deposits per 
	party.\footnote{\IMPC\ follows the same deposit/withdrawal blueprint as \MLMech.} To show the difference we implemented a script that simulates the penalty 
	protocol transaction schedule.
\end{itemize}

\begin{figure}[t]
\begin{center}
	\resizebox*{6cm}{!}{
%
%
\begin{tikzpicture}

\begin{axis}[%
width=2.521in,
height=1.566in,
at={(0.758in,0.681in)},
scale only axis,
bar shift auto,
xmin=0.509090909090909,
xmax=5.49090909090909,
xtick={1,2,3,4,5},
xticklabels={{\MLMech},{\LMech},{\ALMech},{\LLMech},{\PLMech}},
xlabel style={font=\color{white!30!black}},
xlabel={Penalty Protocols (55 parties, 2 stages)},
ymin=0,
ymax=350,
ylabel style={align=center,font=\color{white!10!black}},
ylabel={Total Deposit Amount per Party \\ (x times q)},
axis background/.style={fill=white},
legend style={legend cell align=left, align=left, draw=white!30!black, at={(0.95,1.2)}}
]
\addplot[ybar, bar width=0.145, fill=black, draw=black, area legend] table[row sep=crcr] {%
1	54\\
2	1\\
3	55\\
4	110\\
5	168\\
};
\addplot[forget plot, color=white!30!black] table[row sep=crcr] {%
0.509090909090909	0\\
5.49090909090909	0\\
};
\addlegendentry{P1 (min protocol entry requirement)}

%

\addplot[ybar, bar width=0.145, fill=red, draw=black, area legend] table[row sep=crcr] {%
1	54\\
2	54\\
3	108\\
4	216\\
5	327\\
};
\addplot[forget plot, color=white!30!black] table[row sep=crcr] {%
0.509090909090909	0\\
5.49090909090909	0\\
};
\addlegendentry{P55 (max protocol entry requirement)}

\end{axis}
\end{tikzpicture}%
	}
	\end{center}
	\begin{minipage}{\columnwidth}\footnotesize
		This figure shows the total amount of deposit locked by the penalty protocols before the
		withdrawal phase.
		CRYPTO'14 \LMech, CCS'14 \MLMech, and CCS'16 \ALMech\ are non-reactive penalty protocols, while 
		CCS'15 \LLMech\ and CCS'16 \PLMech\ are reactive ones. 
		Among the non-reactive protocols, \LMech\ 
		requires significantly different amounts from the first party (always $q$) and the
		last party which shells 54 times as much and should shell $(n-1)q$ for $n$ parties. 
		The reactive \LLMech\ and 
		\PLMech\ require disproportionate deposits. For each stage of computation, 
		maximum 54 out of 55 parties may be compensated if one party cheats. Therefore one 
		would only expect $118q$ for two stages of computation (to compensate the other 54 honest parties in 2 stages) rather than 216 or 320+.
	\end{minipage}
	\caption{Total Amount of Deposit per Party and Protocol}\label{fig:tda}	
\end{figure}


\begin{figure}[t]
\begin{center}
	\resizebox*{8cm}{!}{
%
%
\begin{tikzpicture}

\begin{axis}[%
width=2.521in,
height=1.566in,
at={(0.758in,0.481in)},
scale only axis,
bar shift auto,
xmin=0.509090909090909,
xmax=5.49090909090909,
xtick={1,2,3,4,5},
xticklabels={{\MLMech},{\LMech},{\ALMech},{\LLMech},{\PLMech}},
xlabel style={font=\color{white!15!black}},
xlabel={Penalty Protocols (55 parties, 2 stages)},
ymin=0,
ymax=548,
ylabel style={font=\color{white!15!black}},
ylabel={Lock Time Window (rounds)},
axis background/.style={fill=white},
legend style={legend cell align=left, align=left, draw=white!15!black, at={(0.25,0.95)}}
]
\addplot[ybar, bar width=0.145, fill=black, draw=black, area legend] table[row sep=crcr] {%
1	1\\
2	55\\
3	1\\
4	437\\
5	275\\
};
\addplot[forget plot, color=white!15!black] table[row sep=crcr] {%
0.509090909090909	0\\
5.49090909090909	0\\
};
\addlegendentry{P1}

\addplot[ybar, bar width=0.145, fill=green, draw=black, area legend] table[row sep=crcr] {%
1	1\\
2	64\\
3	1\\
4	455\\
5	284\\
};
\addplot[forget plot, color=white!15!black] table[row sep=crcr] {%
0.509090909090909	0\\
5.49090909090909	0\\
};
\addlegendentry{P10}

\addplot[ybar, bar width=0.145, fill=blue, draw=black, area legend] table[row sep=crcr] {%
1	1\\
2	79\\
3	1\\
4	485\\
5	299\\
};
\addplot[forget plot, color=white!15!black] table[row sep=crcr] {%
0.509090909090909	0\\
5.49090909090909	0\\
};
\addlegendentry{P25}

\addplot[ybar, bar width=0.145, fill=red, draw=black, area legend] table[row sep=crcr] {%
1	1\\
2	108\\
3	1\\
4	543\\
5	328\\
};
\addplot[forget plot, color=white!15!black] table[row sep=crcr] {%
0.509090909090909	0\\
5.49090909090909	0\\
};
\addlegendentry{P55}

\draw [<-] (axis cs:1,1)-- +(10pt,10pt) node[align=center, above] {All 1 \\Round};
\draw [<-] (axis cs:3,1)-- +(10pt,10pt) node[align=center, above] {All 1 \\Round};

\end{axis}
\end{tikzpicture}%
	}
	\begin{minipage}{\columnwidth}\footnotesize
		Among the non-reactive protocols, CCS'14 \MLMech\ and CCS'16 \ALMech\ both conclude in one round. Moreover, \ALMech\ allows multiple MPCs to be done in the one round period.
		The reactive CCS'15 \LLMech\ and CCS'16 \PLMech\ again require a disproportionate lock time window compared to the non-reactive \LMech, \MLMech, \ALMech.
	\end{minipage}
	\vspace*{-\baselineskip}
	\caption{Maximum lock time window (55 parties, 2 stages)}\label{fig:mltw}
	\end{center}
		\vspace*{-\baselineskip}
\end{figure}

We show only the results for $\party_1$, $\party_{10}$, $\party_{25}$, and $\party_{55}$.\footnote{$\party_1$ and $\party_{55}$ are the first and the last party, which illustrates the maximum difference possible. $\party_{10}$ and $\party_{25}$ are representative of the intermediate parties.}

Fig.~\ref{fig:tda} reports the total deposited amount of  $\party_1$ (the minimum entry requirement of a protocol) 
and $\party_{55}$ (the maximum entry requirement of a protocol).
All protocols except \MLMech\ require a different 
amount of deposit from each party. In terms of total amount, \LMech\ is the 
best protocol... for the first party! The last party has to deposit more than 54x times more. \MLMech\ requires a fixed amount of $(n-1)q$ from each party, 
while \LMech\ requires such amount from only the two last parties $\party_{54}$ and $
\party_{55}$. 
\LLMech, \PLMech\, and \ALMech\ require 
very high 
amounts of deposits (and again largely different): the worst case party $\party_{55}$\footnote{We refer as the “worst case party" the party who has to deposit the most, while as “best case party" the party who has to deposit the least.} has to 
deposit $216q$ in \LLMech\ and $327q$ in \PLMech. Even taking into account the fact 
that \LLMech\ and \PLMech\ consist of 
two stages, such deposits look excessive: one would expect 
to deposit
$118q$ for two stages of computation, since we only need to compensate at most 54
parties per stage when one of the 55 parties aborts.

Fig.~\ref{fig:mltw} shows the maximum time window that a party has to keep his money 
in deposit (starting from the first deposit to the last withdrawal).
\MLMech\ and \ALMech\ have the smallest and 
fairest lock time: only one round. \LMech\ must keep the deposits in more rounds 
and not the same number of rounds: 55 rounds for the best case $\party_1$ and 
108 
rounds for the worst case $\party_{55}$. Again both \LLMech\ and \PLMech\ require 
very high lock windows for the deposits: 543 rounds for \LLMech\ and 328 rounds for 
\PLMech. 

\section{Playing It for Real: Optimistically Unfair.}\label{sec:optunfair}
The observations above refer to the worst 
case but in practice the inter-temporal differences for honest parties might not be 
noticeable. Checking the behavior of a protocol for honest 
parties,  dubbed \emph{Optimistic Computation}~\cite{KumaresanB14},
is important as a protocol can still be fair for all practical purposes. 

\begin{figure}[!t]
	\begin{center}
		\resizebox*{9cm}{!}{
%
%
\begin{tikzpicture}

\begin{axis}[%
width=4.521in,
height=3.566in,
at={(0.758in,0.481in)},
scale only axis,
bar shift auto,
xmin=0.509090909090909,
xmax=5.49090909090909,
xtick={1,2,3,4,5},
xticklabels={{\MLMech},{\LMech},{\ALMech},{\LLMech},{\PLMech}},
xlabel style={font=\color{white!15!black}},
xlabel={Penalty Protocols (55 parties, 2 stages)},
ymin=0,
ymax=2500,
ylabel style={font=\color{white!15!black}},
ylabel={Net present cost of participation (bps)},
axis background/.style={fill=white},
legend style={legend cell align=left, align=left, draw=white!15!black, at={(0.15,0.95)}}
]
\addplot[ybar, bar width=0.145, fill=black, draw=black, area legend] table[row sep=crcr] {%
1	1.4499425203951\\
2	1.47690037588966\\
3	2.95358661574596\\
4	415.227976524088\\
5	830.322117561622\\
};
\addplot[forget plot, color=white!15!black] table[row sep=crcr] {%
0.509090909090909	0\\
5.49090909090909	0\\
};
\addlegendentry{P1}

\addplot[ybar, bar width=0.145, fill=green, draw=black, area legend] table[row sep=crcr] {%
1	1.4499425203951\\
2	6.06905989419815\\
3	3.43690078921099\\
4	626.981141130827\\
5	1003.66691696721\\
};
\addplot[forget plot, color=white!15!black] table[row sep=crcr] {%
0.509090909090909	0\\
5.49090909090909	0\\
};
\addlegendentry{P10}

\addplot[ybar, bar width=0.145, fill=blue, draw=black, area legend] table[row sep=crcr] {%
1	1.4499425203951\\
2	33.0580682915382\\
3	4.24242441170009\\
4	1021.1873718908\\
5	1350.59828130125\\
};
\addplot[forget plot, color=white!15!black] table[row sep=crcr] {%
0.509090909090909	0\\
5.49090909090909	0\\
};
\addlegendentry{P25}

\addplot[ybar, bar width=0.145, fill=red, draw=black, area legend] table[row sep=crcr] {%
1	1.4499425203951\\
2	156.616709770674\\
3	5.79977008158039\\
4	1969.02151928903\\
5	2235.53523360152\\
};
\addplot[forget plot, color=white!15!black] table[row sep=crcr] {%
0.509090909090909	0\\
5.49090909090909	0\\
};
\addlegendentry{P55}

\draw [<-] (axis cs:1,1)-- +(10pt,10pt) node[above] {All 1.45 bps};
\draw [<-] (axis cs:2,200)-- +(0pt,10pt) node[align=center, above] {1.45, 6, 33, 157};
\draw [<-] (axis cs:3,1)-- +(10pt,10pt) node[align=center, above] {Different but all\\ less than 6 bps};

\end{axis}
\end{tikzpicture}%
		}
		\begin{minipage}{\columnwidth}\footnotesize
			Each bar reports the cost of participation for a 
			party in basis point for an optimistic computation (only honest parties and no 
			aborts). 
			CCS'14 \MLMech\ is the fairest protocol with the smallest (and statistically identical) 
			distribution per party. CCS'16 \ALMech\ provides limited financial fairness even though deposits lock 
			in one round (Fig.~\ref{fig:mltw}), because each party deposits a different 
			amount (Fig.~\ref{fig:tda}). 
			CRYPTO'14 \LMech\ is the least fair. 
		\end{minipage}
		\vspace*{-\baselineskip}
	\end{center}
	\caption{Net present costs of participation}\label{fig:ocn}	
	\vspace*{-1.3\baselineskip}
\end{figure}
To check whether this is the case (it is not), we analyze the financial fairness of 
each protocol by simulating the \emph{net present cost of participation} $\Cost_i$ of 
party $\party_i$ (see Eq.~\eqref{eq:npv}) in a large sequence of random executions 
with honest parties:
\begin{itemize}
	\item Use the sequence of deposits $\{\deposit_{i,\round}\}$ and withdraws 
	$\{\return_{i,\round}\}$ of party $i$ obtained from an honest execution of a protocol 
	at each round $\round$; and
	the minute ratio derived from the Secured Overnight Financing 
	rate of the New York Fed (238 bps, Tab.~\ref{tab:overnight}) as this is the going rate 
	among commercial executions, and thus is actually available;
	\item Simulate each protocol execution on Bitcoin. To convert ``rounds'' 
	to ``Bitcoin time'', we use Bitcoin actual network data, {\em i.e.}\ 
	the mean and standard deviation of the block generation 
	time (in minutes) for each day from Dec 29, 2018 to June 26, 2019;
	and consider a round of a protocol to be 6 blocks of the Bitcoin 
	blockchain 
	for a transaction to be confirmed). From the data, a 
	round 
	can take from 47 minutes to 75 minutes.
	\item Compute the net present cost of participation $\Cost_i$ of each $\party_i$ for each 
	of the 180 days using $q = 10000$ (hence $\Cost_i$ is captured as basis points), and plot them in Fig.~\ref{fig:ocn}.\footnote{As the discount rate is small, {\em i.e.} $\drate_{\mathsf{m}} = 0.0005$, the difference due to slight changes (30 minutes) in transaction confirmation times is negligible. Only for a very large number of transactions it becomes significant.}
\end{itemize}

For all cases (both reactive and non-reactive), financial fairness is only achieved in \MLMech, as every party locks and releases the same  deposit at the same time. \MLMech\ is also the best protocol in terms of net present cost of participation.

For non-reactive cases, \LMech\ yields a huge difference in losses between different parties: $\Cost_1$ is around 1.53 bps while $\Cost_4$ is around 162.75 bps. This difference is due to the disparity in both the amount of deposits and the time windows in which the deposits are locked ($\party_1$ deposits only $q$, locked for 55 rounds, while $\party_{55}$ deposits $54q$, locked for 108 rounds). The difference is slighter in \ALMech: all parties' deposits are locked for one round, but differences between amounts of deposits still exists ($110q$ for $\party_1$ and $216q$ for $\party_{55}$).

For reactive cases, in \LLMech\ and \PLMech\, the party $\party_1$ and the party $\party_{55}$ have a large difference in net present costs of participation. Furthermore, the costs for the last party $\party_{55}$ are unacceptable in both protocols: more than 2000 bps in \LLMech, and more than 2300 bps in \PLMech. However, a surprising finding is that \LLMech\ is better than its ``improved'' version \PLMech\ in terms of financial fairness. To explain this phenomenon, let us observe that even though \LLMech\ locks the deposits for a longer time (\LLMech\ concludes in 543 rounds, while \PLMech\ needs 328 rounds), the deposit amount is much less ($\party_{55}$ deposits $216q$ in \LLMech\ but $327q$ in \PLMech).

\section{Conclusions and Open Problems}\label{sec:conclusions}

\subsection{Lesson Learnt}
The main motivation of this work comes from the observation that most 
penalty protocols for cryptographically-fair MPC with penalties might be {\em unfair} when it 
comes to the amount of money each player has to put into escrow in a run of the 
protocol. Hence, the goal is designing penalty protocols that are both cryptographically and financially fair, while at the same time having good efficiency in terms of round complexity, number of transactions, and script complexity.

State-of-the art protocols either achieve low script complexity (heuristically) but not financial fairness~\cite{kumaresan2016improvements}, or achieve financial fairness but either have high script complexity~\cite{KumaresanB14} or require trusted third parties~\cite{kosba2016hawk}.
Alternatively, we also showed that efficiency, cryptographic fairness, and financial fairness are all achievable at the same time and under standard assumptions, so long as one settles for sequential (rather than universal) composability using rewinding-based proofs. The latter might indeed be an option under certain restrictions~\cite{ChoudhuriGJ19}, or using redactable blockchains~\cite{AtenieseMVA17}.

\subsection{Open Problems}\label{subsec:openproblems}
\emph{Beyond the Pro-Rata Compensation.}
We make no assumption on the function used to 
compute the net present value, but in some settings we may want to consider financial fairness only w.r.t.\ specific discount rates. 
An extension would be to drop the assumptions, 
present in the entire literature so far (including our paper) that: (i) all parties 
are compensated equally; 
and (ii) the adversary compensates all honest parties 
who do not receive the result of the computation.\footnote{These two assumptions are apparent, respectively, in the compensation step of $\FuncML$ (where $(\payout,i,j,\coins(\frac{\deposit}{n-1}))$ is sent to every $\party_j \neq 
\party_i$), and in the third step of $\Func_\func^*$ (the $\mathtt{extrapay}$ step).} This approach, known as the ``pro-rata'' approach 
for the restitution of mingled funds~\cite{Burrows-2011-restitution}, however, is not the only possible one. For example,
one could use Clayton's rule where ``first withdrawals from an account
are deemed to be made out of first payments in''~\cite{Chamorro-Courtland-2014-commingled}, and return the funds only
to the first $k$ parties who deposited. Adjusting to these rules
requires simple modifications to the functionalities and the corresponding protocols.

Taking into account that the valuation of coins can change over time is a subject of a paper by itself and will not change the present result of purely deterministic deposits in which the possibility of aborts is studied from a economic game theoretic perspective. At any given time one decide whether to keep participating or not in the protocol based on a model of the rationality of other participants. 
Further complications into the model can be imported from both socio-economical (e.g. mutual trust within the participants, their risk appetites, the popularity of specific MPC protocol, market anticipations for the price of coins/tokens used by each MPC protocol, how early in time the protocol was introduced to the public, etc.) and technical (e.g. validity of assumptions, requirements to the software/hardware running MPC protocol, usability etc.) domains. 
We consider those possible future work.

\emph{Better Efficiency with Reasonable Assumptions.}
The $\CMLMech$ Protocol only achieves standalone security.
We leave it as an open problem to construct a penalty protocol that achieves UC security, with the same efficiency as $\CMLMech$ and while retaining provable security in the plain model.
This question is wide open even in the Random Oracle Model.

\emph{Compiling unfair protocols into fair ones.}
A natural open question is whether a financially unfair protocol like \LMech\ can be compiled into a fair protocol while keeping comparable efficiency and security.

\section{Acknowledgements}

This work was partly supported by the European Union under the H2020 Programme Grant No. 830929 CyberSec4Europe (\url{http:\\cybersec4europe.eu}), the Dutch Government under the Sectorplan Grant at the VU  (H/295212.501) and the University of Rome La Sapienza under the SPECTRA Grant.

\subsection{CRediT author statement}
\emph{Conceptualization} MF; \emph{Methodology} FD (security protocols), MF (fairness, game theory), NCN (security protocols), VD (security proof); \emph{Software Programming} FD, NCN; 
\emph{Formal analysis} FD (security), MF (fairness, game Theory), NCN (security protocols), VD (security  proof);
\emph{Investigation} NCN; \emph{Data Curation}  NCN (Blockchain generation time); \emph{Writing - Original Draft} FD, MF, NCN, VD;
\emph{Writing - Review \& Editing} FD, MF, NCN, VD; 
\emph{Visualization} FD, NCN; \emph{Supervision} MF, DV;
\emph{Project administration} MF; \emph{Funding acquisition} MF, VD.




\bibliographystyle{IEEEtran}
\bibliography{short-names,FinancialFairness}

\begin{thebibliography}{10}
\providecommand{\url}[1]{#1}
\csname url@samestyle\endcsname
\providecommand{\newblock}{\relax}
\providecommand{\bibinfo}[2]{#2}
\providecommand{\BIBentrySTDinterwordspacing}{\spaceskip=0pt\relax}
\providecommand{\BIBentryALTinterwordstretchfactor}{4}
\providecommand{\BIBentryALTinterwordspacing}{\spaceskip=\fontdimen2\font plus
\BIBentryALTinterwordstretchfactor\fontdimen3\font minus
  \fontdimen4\font\relax}
\providecommand{\BIBforeignlanguage}[2]{{%
\expandafter\ifx\csname l@#1\endcsname\relax
\typeout{** WARNING: IEEEtran.bst: No hyphenation pattern has been}%
\typeout{** loaded for the language `#1'. Using the pattern for}%
\typeout{** the default language instead.}%
\else
\language=\csname l@#1\endcsname
\fi
#2}}
\providecommand{\BIBdecl}{\relax}
\BIBdecl

\bibitem{LindellP08a}
Y.~Lindell and B.~Pinkas, ``Secure multiparty computation for
  privacy-preserving data mining,'' \emph{IACR-CPA}, vol. 2008, p. 197, 2008.

\bibitem{Cleve86}
R.~Cleve, ``Limits on the security of coin flips when half the processors are
  faulty (extended abstract),'' in \emph{STOC}, 1986, pp. 364--369.

\bibitem{GordonHKL11}
S.~D. Gordon, C.~Hazay, J.~Katz, and Y.~Lindell, ``Complete fairness in secure
  two-party computation,'' \emph{J.\ of ACM}, vol.~58, no.~6, pp. 24:1--24:37,
  2011.

\bibitem{GordonK12}
S.~D. Gordon and J.~Katz, ``Partial fairness in secure two-party computation,''
  \emph{J. Cryptology}, vol.~25, no.~1, pp. 14--40, 2012.

\bibitem{MoranNS16}
T.~Moran, M.~Naor, and G.~Segev, ``An optimally fair coin toss,'' \emph{J.
  Cryptology}, vol.~29, no.~3, pp. 491--513, 2016.

\bibitem{GarayMPY11}
J.~A. Garay, P.~D. MacKenzie, M.~Prabhakaran, and K.~Yang, ``Resource fairness
  and composability of cryptographic protocols,'' \emph{J. Cryptology},
  vol.~24, no.~4, pp. 615--658, 2011.

\bibitem{CachinC00}
C.~Cachin and J.~Camenisch, ``Optimistic fair secure computation,'' in
  \emph{Proc.\ of CRYPTO}, 2000, pp. 93--111.

\bibitem{AsharovLZ13}
G.~Asharov, Y.~Lindell, and H.~Zarosim, ``Fair and efficient secure multiparty
  computation with reputation systems,'' in \emph{Proc.\ of ASIACRYPT}, 2013,
  pp. 201--220.

\bibitem{AndrychowiczDMM14}
M.~Andrychowicz, S.~Dziembowski, D.~Malinowski, and L.~Mazurek, ``Secure
  multiparty computations on bitcoin,'' in \emph{Proc.\ of IEEE SSP}, 2014, pp.
  443--458.

\bibitem{BentovK14}
I.~Bentov and R.~Kumaresan, ``How to use bitcoin to design fair protocols,'' in
  \emph{Proc.\ of CRYPTO}, 2014, pp. 421--439.

\bibitem{KumaresanB14}
R.~Kumaresan and I.~Bentov, ``How to use bitcoin to incentivize correct
  computations,'' in \emph{Proc.\ of ACM CCS}, 2014, pp. 30--41.

\bibitem{kumaresan2015use}
R.~Kumaresan, T.~Moran, and I.~Bentov, ``How to use bitcoin to play
  decentralized poker,'' in \emph{Proc.\ of ACM CCS}, 2015, pp. 195--206.

\bibitem{kumaresan2016improvements}
R.~Kumaresan, V.~Vaikuntanathan, and P.~N. Vasudevan, ``Improvements to secure
  computation with penalties,'' in \emph{Proc.\ of ACM CCS}, 2016, pp.
  406--417.

\bibitem{KumaresanB16}
R.~Kumaresan and I.~Bentov, ``Amortizing secure computation with penalties,''
  in \emph{Proc.\ of ACM CCS}, 2016, pp. 418--429.

\bibitem{KiayiasZZ16}
A.~Kiayias, H.~Zhou, and V.~Zikas, ``Fair and robust multi-party computation
  using a global transaction ledger,'' in \emph{Proc.\ of EUROCRYPT}, 2016, pp.
  705--734.

\bibitem{BentovKM17}
I.~Bentov, R.~Kumaresan, and A.~Miller, ``Instantaneous decentralized poker,''
  in \emph{Proc.\ of ASIACRYPT}, 2017, pp. 410--440.

\bibitem{david2017kaleidoscope}
B.~David, R.~Dowsley, and M.~Larangeira, ``Kaleidoscope: An efficient poker
  protocol with payment distribution and penalty enforcement,''
  \emph{IACR-CPA}, vol. 2017, p. 899, 2017.

\bibitem{brown2015empirical}
J.~R. Brown, Z.~Ivkovi{\'c}, and S.~Weisbenner, ``Empirical determinants of
  intertemporal choice,'' \emph{J.\ of Financ.\ Econ.}, vol. 116, no.~3, pp.
  473--486, 2015.

\bibitem{lee2016myopic}
B.~Lee and Y.~Veld-Merkoulova, ``Myopic loss aversion and stock investments: An
  empirical study of private investors,'' \emph{J. of Banking \& Fin.},
  vol.~70, pp. 235--246, 2016.

\bibitem{benzion1989discount}
U.~Benzion, A.~Rapoport, and J.~Yagil, ``Discount rates inferred from
  decisions: An experimental study,'' \emph{Mgmt. Sci.}, vol.~35, no.~3, pp.
  270--284, 1989.

\bibitem{ahlbrecht1997empirical}
M.~Ahlbrecht and M.~Weber, ``An empirical study on intertemporal decision
  making under risk,'' \emph{Mgmt. Sci.}, vol.~43, no.~6, pp. 813--826, 1997.

\bibitem{angeletos2001hyperbolic}
G.-M. Angeletos, D.~Laibson, A.~Repetto, J.~Tobacman, and S.~Weinberg, ``The
  hyperbolic consumption model: Calibration, simulation, and empirical
  evaluation,'' \emph{J.\ of Econ.\ Persp.}, vol.~15, no.~3, pp. 47--68, 2001.

\bibitem{hatch2009reforming}
R.~Hatch, ``{Reforming the Murky Depths of Wall Street: Putting the Spotlight
  on the Security and Exchange Commission's Regulatory Proposal Concerning Dark
  Pools of Liquidity},'' \emph{The George Washington Law Review}, vol.~78, p.
  1032, 2009.

\bibitem{massacci2018fmex}
F.~Massacci, C.~N. Ngo, J.~Nie, D.~Venturi, and J.~Williams, ``Futuresmex:
  Secure, distributed futures market exchange,'' in \emph{Proc.\ of IEEE SSP},
  2018, pp. 453--471.

\bibitem{kalodner2018arbitrum}
H.~Kalodner, S.~Goldfeder, X.~Chen, S.~M. Weinberg, and E.~W. Felten,
  ``Arbitrum: Scalable, private smart contracts,'' in \emph{Proc.\ of USENIX
  Security}, 2018, pp. 1353--1370.

\bibitem{BaumDD18}
C.~Baum, B.~David, and R.~Dowsley, ``Insured {MPC:} efficient secure
  computation with financial penalties,'' in \emph{Proc.\ of FC}, 2020, pp.
  404--420.

\bibitem{Canetti20}
R.~Canetti, ``Universally composable security,'' \emph{J. {ACM}}, vol.~67,
  no.~5, pp. 28:1--28:94, 2020.

\bibitem{Canetti01}
------, ``Universally composable security: {A} new paradigm for cryptographic
  protocols,'' in \emph{Proc.\ of IEEE FOCS}, 2001, pp. 136--145.

\bibitem{canetti2002universally}
R.~Canetti, Y.~Lindell, R.~Ostrovsky, and A.~Sahai, ``Universally composable
  two-party and multi-party secure computation,'' in \emph{Proc.\ of ACM STOC},
  2002, pp. 494--503.

\bibitem{DCanettiGH98}
R.~Canetti, O.~Goldreich, and S.~Halevi, ``The random oracle methodology,
  revisited (preliminary version),'' in \emph{Proc.\ of ACM STOC}, 1998, pp.
  209--218.

\bibitem{CanettiGH04}
------, ``The random oracle methodology, revisited,'' \emph{J. {ACM}}, vol.~51,
  no.~4, pp. 557--594, 2004.

\bibitem{BelenkiyCEJKLR07}
M.~Belenkiy, M.~Chase, C.~C. Erway, J.~Jannotti, A.~K{\"{u}}p{\c{c}}{\"{u}},
  A.~Lysyanskaya, and E.~Rachlin, ``Making p2p accountable without losing
  privacy,'' in \emph{Proc.\ of ACM WPES}, 2007, pp. 31--40.

\bibitem{Lindell09}
Y.~Lindell, ``Legally enforceable fairness in secure two-party communication,''
  \emph{Chicago J. Theor. Comput. Sci.}, vol. 2009, 2009.

\bibitem{KupcuL12}
A.~K{\"{u}}p{\c{c}}{\"{u}} and A.~Lysyanskaya, ``Usable optimistic fair
  exchange,'' \emph{Computer Networks}, vol.~56, no.~1, pp. 50--63, 2012.

\bibitem{miller2017zero}
A.~Miller and I.~Bentov, ``Zero-collateral lotteries in bitcoin and ethereum,''
  in \emph{EuroS{\&}P Workshops}, 2017, pp. 4--13.

\bibitem{kosba2016hawk}
A.~Kosba, A.~Miller, E.~Shi, Z.~Wen, and C.~Papamanthou, ``Hawk: The blockchain
  model of cryptography and privacy-preserving smart contracts,'' in
  \emph{Proc.\ of IEEE SSP}, 2016, pp. 839--858.

\bibitem{AzouviHM18}
S.~Azouvi, A.~Hicks, and S.~J. Murdoch, ``Incentives in security protocols,''
  in \emph{Proc.\ of SPW}, 2018, pp. 132--141.

\bibitem{GordonIMOS10}
S.~D. Gordon, Y.~Ishai, T.~Moran, R.~Ostrovsky, and A.~Sahai, ``On complete
  primitives for fairness,'' in \emph{Proc.\ of TCC}, 2010, pp. 91--108.

\bibitem{CanettiFGN96}
R.~Canetti, U.~Feige, O.~Goldreich, and M.~Naor, ``Adaptively secure
  multi-party computation,'' in \emph{Proc.\ of ACM STOC}, 1996, pp. 639--648.

\bibitem{Nielsen02}
J.~B. Nielsen, ``Separating random oracle proofs from complexity theoretic
  proofs: {T}he non-committing encryption case,'' in \emph{Proc.\ of CRYPTO},
  2002, pp. 111--126.

\bibitem{AtenieseMVA17}
G.~Ateniese, B.~Magri, D.~Venturi, and E.~R. Andrade, ``Redactable blockchain -
  or - rewriting history in bitcoin and friends,'' in \emph{{IEEE} EuroS{\&}P},
  2017, pp. 111--126.

\bibitem{BlumSMP91}
M.~Blum, A.~D. Santis, S.~Micali, and G.~Persiano, ``Noninteractive
  zero-knowledge,'' \emph{{SIAM} J. Comput.}, vol.~20, no.~6, pp. 1084--1118,
  1991.

\bibitem{CanettiF01}
R.~Canetti and M.~Fischlin, ``Universally composable commitments,'' in
  \emph{Proc.\ of CRYPTO}, 2001, pp. 19--40.

\bibitem{ChoudhuriGJ19}
A.~R. Choudhuri, V.~Goyal, and A.~Jain, ``Founding secure computation on
  blockchains,'' in \emph{Proc.\ of EUROCRYPT}, 2019, pp. 351--380.

\bibitem{binmore2007playing}
K.~Binmore \emph{et~al.}, \emph{Playing for real: a text on game theory}.\hskip
  1em plus 0.5em minus 0.4em\relax {Oxford Uni. Press}, 2007.

\bibitem{NgoMKW21}
C.~N. Ngo, F.~Massacci, F.~Kerschbaum, and J.~Williams, ``Practical
  witness-key-agreement for blockchain-based dark pools financial trading,'' in
  \emph{Proc.\ of FC}, 2021, pp. 579--598.

\bibitem{Burrows-2011-restitution}
A.~Burrows, \emph{The Law of Restitution}.\hskip 1em plus 0.5em minus
  0.4em\relax Oxford Uni. Press, 2011.

\bibitem{Chamorro-Courtland-2014-commingled}
C.~Chamorro-Courtland, ``Demystifying the lowest intermediate balance rule: The
  legal principles governing the distribution of funds to beneficiaries of a
  commingled trust account for which a shortfall exists,'' \emph{Banking and
  Finance Law Review}, vol.~39, 2014.

\bibitem{CanettiDPW07}
R.~Canetti, Y.~Dodis, R.~Pass, and S.~Walfish, ``Universally composable
  security with global setup,'' in \emph{TCC}, 2007, pp. 61--85.

\bibitem{GoldreichK96}
O.~Goldreich and A.~Kahan, ``How to construct constant-round zero-knowledge
  proof systems for {NP},'' \emph{J. Cryptology}, vol.~9, no.~3, pp. 167--190,
  1996.

\bibitem{Lindell16}
Y.~Lindell, ``How to simulate it -- {A} tutorial on the simulation proof
  technique,'' \emph{IACR-CPA}, vol. 2016, p.~46, 2016.

\bibitem{nakamoto2008bitcoin}
S.~Nakamoto, ``Bitcoin: A peer-to-peer electronic cash system,'' 2008.

\bibitem{HazayL10}
C.~Hazay and Y.~Lindell, \emph{Efficient Secure Two-Party Protocols -
  Techniques and Constructions}, ser. IS\&C.\hskip 1em plus 0.5em minus
  0.4em\relax Springer, 2010.

\bibitem{IshaiKKLP11}
Y.~Ishai, J.~Katz, E.~Kushilevitz, Y.~Lindell, and E.~Petrank, ``On achieving
  the ``best of both worlds'' in secure multiparty computation,'' \emph{SIAM
  J.\ on Comp.}, vol.~40, no.~1, pp. 122--141, 2011.

\end{thebibliography}

\ifIEEE

%
%

\appendices
\else
\appendix
\fi
\textbf{Conversion of interest rate.}

\vspace*{-0.5\baselineskip}
\begin{framed}\footnotesize
	\vspace*{-0.5\baselineskip}
	\noindent \textbf{Conversion of Interest Rate}
	
	\noindent To convert the Overnight Rate $\rate_{\mathsf{d}}$ (per annum) to Hourly Rate $\rate_{\mathsf{h}}$ and 
	Minute Rate $\rate_{\mathsf{m}}$, and using those to compute the corresponding payment interest, one needs to follow several intermediate steps:
	\begin{enumerate}
		\item Convert the Overnight Rate $\rate_{\mathsf{d}}$ into \emph{continuous} time, 
		{\em i.e.} $\drate = \ln(1 + \rate_{\mathsf{d}})$ (where $\ln(\cdot)$ is the natural logarithm); 
		using the NYFed's Secured Overnight Financing Overnight Rate $\rate_{\mathsf{d}} = 238 = 2.38\% = 0.0238$,
		$\drate := \ln(1 + 0.0238) = 0.0235$;
		\item Multiply the continuous rate by $\frac{1}{365\cdot 24}$ for the Hourly Rate $\rate_{\mathsf{h}}$ or $\frac{1}{365\cdot 24 \cdot 60}$ for the Minute Rate $\rate_{\mathsf{m}}$;
		then convert back to \emph{discrete} time by taking $e^{\round\drate}$ to obtain the payment interest factor, 
		\item Hence $\npv_i(\round) := e^{-\round\rate_{\mathsf{h}}}$ if we are using the Hourly Rate,
or $\npv_i(\round) := e^{-\round\rate_{\mathsf{m}}}$ if we are using the Minute Rate.
	\end{enumerate}
	\vspace*{-\baselineskip}	
\end{framed}
\vspace*{-0.5\baselineskip}

\textbf{Instances of Escrow.}\label{sec:building_blocks}
\vspace*{-0.5\baselineskip}
\begin{framed}\footnotesize
\vspace*{-0.5\baselineskip}
	\noindent The \textbf{Claim-or-Refund Functionality $\Func_{\sf CR}^*$}  runs with security parameter $1^\secpar$, parties $
	\party_1, \ldots, \party_n$, and ideal adversary $\Sim$.
	\begin{description}
		\item[Deposit Phase:] Upon receiving the tuple $(\mathtt{deposit},sid,ssid,
		\allowbreak i,\allowbreak j,\allowbreak \phi_{i,j},\allowbreak \timeout, \allowbreak\coins(\deposit))$ from $\party_i$, record the message $(\mathtt{deposit},sid,ssid, \allowbreak i,\allowbreak j,\allowbreak \phi_{i,j},\allowbreak \timeout, \deposit)$ and send it to all parties. Ignore any future 
		deposit messages from $\party_i$ to $\party_j$.
		\item[Claim Phase:] After round $\timeout$, upon receiving $(\mathtt{claim},sid,ssid,\allowbreak i, j, \phi_{i,j}, \timeout, \deposit, \wit)$ from
		$\party_j$, check if: (1) a tuple $(\mathtt{deposit},sid,ssid, i, j, \phi_{i,j}, \timeout, 
		\deposit)$ was recorded, and (2) if $\phi_{i,j}(\wit) = 1$. If both checks pass, send $
		(\mathtt{claim},sid,ssid, \allowbreak i, j, \phi_{i,j}, \timeout, \deposit, \wit)$ to all parties, send $
		(\mathtt{claim}, sid,ssid,\allowbreak i, j, \phi_{i,j}, \timeout, \coins(\deposit))$ to $\party_j$, and delete 
		the record $(\mathtt{deposit},sid,ssid,\allowbreak i, j, \phi_{i,j}, \timeout, \deposit)$.
		\item[Refund Phase:] In round $\timeout+1$, if the record $
		(\mathtt{deposit},sid,ssid,\allowbreak i, j, \phi_{i,j}, \timeout, \deposit)$
		was not deleted, then send $(\mathtt{refund},sid,ssid,\allowbreak i, j,  
		\phi_{i,j}, \timeout, \coins(\deposit))$ to $\party_i$, and
		delete the record $(\mathtt{deposit},sid,ssid,\allowbreak i, j, \phi_{i,j}, \timeout, \deposit)$.
	\end{description}
	\vspace*{-\baselineskip}
\end{framed}

\vspace*{-1\baselineskip}

\begin{framed}\footnotesize
	\vspace*{-0.5\baselineskip}
	\noindent The \textbf{Multi-Lock Functionality $\FuncML$} runs with security parameter $1^\lambda$, parties $\party_1,\ldots,\party_n$, and adversary $\Sim$.
	\begin{description}
		\item[Lock Phase:] Wait to receive $(\lock,i,D_i = (\deposit,sid,ssid,
		\allowbreak\phi_1,\allowbreak\ldots,\allowbreak\phi_n,\allowbreak\timeout),\allowbreak\coins(\deposit))$ 
		from each $\party_i$ and record $(\locked,sid,ssid,\allowbreak i,D_i)$. Then, if $\forall i,j: D_i 
		= D_j$ send message $(\locked,sid,ssid)$ to all parties and proceed to the Redeem Phase. 
		Otherwise, for all $i$, if the message $(\locked,sid,ssid,i,D_i)$ was recorded, then delete it, 
		send message $(\abort,sid,ssid,i,\coins(\deposit))$ to $\party_i$ and terminate.
		\item[Redeem Phase:] In round $\timeout$, upon receiving a message $
		(\redeem,sid,ssid,i,\wit_i)$ from $\party_i$, if $\phi(\wit_i) = 1$ then delete $(\locked,sid,ssid,i,D_i)$, 
		send $(\redeem,sid,ssid,\coins(\deposit))$ to $\party_i$ and $(\redeem,sid,ssid,i,\wit_i)$ to all 
		parties.
		\item[Compensation Phase:] In round $\timeout+1$, for all $i \in [n]$, if $
		(\locked,sid,ssid,i,D_i)$ was recorded but not yet deleted, then delete it and send the 
		message $(\payout,sid,ssid,i,j,\coins(\frac{\deposit}{n-1}))$ to every party $\party_j \neq 
		\party_i$.
	\end{description}
	\vspace*{-\baselineskip}
\end{framed}
\ifJOUR

\begin{IEEEbiography}[{\includegraphics[width=1in,height=1.25in,clip,keepaspectratio]{DF}}]{Daniele Friolo}
is a PostDoc at the Department of Computer Science at Sapienza University of Rome. He received his PhD, MSc and BSc in Computer Science at the Department of Computer Science at Sapienza University of Rome. 
He also worked as a research fellow at DIEM, University of Salerno during his PhD.
His research is focused on theoretical and applied cryptography, in particular public-key cryptography, zero-knowledge, multi-party computation and blockchain applications.
\end{IEEEbiography}
\begin{IEEEbiography}[{\includegraphics[width=1in,height=1.25in,clip,keepaspectratio]{FM}}]{Fabio Massacci} is a professor at the University of Trento, Trento, 38123, Italy, and Vrije Universiteit, Amsterdam, 1081 HV, The Netherlands. He participates in the CyberSec4Europe pilot and leads the H2020 AssureMOSS project. Massacci received a Ph.D. in computing from the University of Rome “La Sapienza.” For his work on security and trust in sociotechnical systems, he received the Ten Year Most Influential Paper Award at the 2015 IEEE International Requirements Engineering Conference. He is a Member of IEEE. Contact him at fabio.massacci@ieee.org.
\end{IEEEbiography}
\begin{IEEEbiography}[{\includegraphics[width=1in,height=1.25in,clip,keepaspectratio]{CNN}}]{Chan Nam Ngo}
is an Applied Cryptography Researcher at Kyber Network, Vietnam. Prior to that he was a Postdoctoral Researcher at the University of Trento, Italy and University of Warsaw, Poland. His research interests focus on applied cryptography and its application to distributed financial systems. Ngo received a Ph.D. in computing from the University of Trento, Italy. Contact him at nam.ngo@kyber.network.
\end{IEEEbiography}
\begin{IEEEbiography}[{\includegraphics[width=1in,height=1.25in,clip,keepaspectratio]{dv}}]{Daniele Venturi}
is a Full Professor at the Department of Computer Science at Sapienza University of Rome. He received his PhD in Information and Communication Engineering and his MSc in Telecommunication Engineering from Sapienza University of Rome, and his BSc in Electrical Engineering from Roma Tre University. Prior to joining Sapienza as a Professor, he was a Postdoctoral Researcher at the Department of Computer Science at Aarhus University and Sapienza University, and an Assistant Professor at the University of Trento. His research is focused on theoretical and applied cryptography at large, in particular information-theoretic cryptography, public-key cryptography, non- malleability, zero-knowledge and multi-party computation. He is a co-recipient of the Best Paper Award at EUROCRYPT 2011.
\end{IEEEbiography}
\fi
\clearpage

\ifCONF
{\Huge\centering Supplementary Material}

\clearpage

\section{Cryptographic Primitives}\label{sec:primitives}

\subsection{Notation}
Throughout the paper, we denote the security parameter by $\secpar\in\NN$. A 
function $\nu(\secpar)$ is negligible in $\secpar$ (or just negligible) if it decreases 
faster than the inverse of every polynomial in $\secpar$, {\em i.e.}\ $\nu(\secpar)\in O(1/
p(\secpar))$ for every positive polynomial $p(\cdot)$.

Given an integer $n$, we let $[n] = \{1,\ldots,n\}$. 
If $x$ is a string, $\abs{x}$ denotes its length; if $\cX$ is a set, $\abs{\cX}$ is 
the number of elements in $\cX$. When $x$ is chosen randomly in $\cX$, we write 
$x\getsr\cX$. When $\alg$ is an algorithm, we write $y \getsr \alg(x)$ to denote a 
run of $\alg$ on input $x$ and output $y$; if $\alg$ is randomized, then $y$ is a 
random variable and $\alg(x;\omega)$ denotes a run of $\alg$ on input $x$ and 
random coins $\omega\in\bin^*$. An algorithm is in probabilistic poly time 
(PPT) if it is randomized, and its number of steps is polynomial in its inputs and $1^\secpar$ (in unary).

If $\func:(\bin^*)^n \rightarrow (\bin^*)^n$ is a function, then $\func_i(\inp_1,\allowbreak\ldots,\allowbreak \inp_n)$ is the $i$-th element of $\func(\inp_1,\ldots,\inp_n)$ for $i\in [n]$, and $(\inp_1,\ldots,\inp_n) \mapsto (\out_1,\ldots,\out_n)$ is its input-output behavior.

A probability ensemble $\rv{X} = \{\rv{X}(\secpar)\}_{\secpar\in \NN}$ is an infinite 
sequence of random variables indexed by security parameter $\secpar \in \NN$. Two 
distribution ensembles $\rv{X} = \{\rv{X}(\secpar)\}_{\secpar\in \NN}$ and $\rv{Y} = \{\rv{Y}(\secpar)\}_{\secpar\in \NN}$ are said to be \emph{computationally 
	indistinguishable}, denoted $\rv{X} \cind \rv{Y}$ if for every non-uniform PPT 
algorithm $\advD$ there exists a negligible function $\nu(\cdot)$ such that
$
\left\lvert\Prob{\advD(\rv{X}(\secpar)) = 1} - \Prob{\advD(\rv{Y}(\secpar)) = 1}\right\rvert \leq \nu(\secpar).
$

\subsection{Secret Sharing Schemes} 
An $n$-party secret sharing scheme $(\share,\reconstruct)$ is a pair of poly-time algorithms specified as follows.
(i) The randomized algorithm $\share$ takes as input a message $\msg\in\calM$ and outputs $n$ shares $\shares = (\shares_1,\ldots,\shares_n)\in\calS_1\times\cdots\times\calS_n$;
(ii) The deterministic algorithm $\reconstruct$ takes as input a subset of the shares, say $\shares_\calI$ with $\calI \subseteq [n]$, and outputs a value in $\calM \cup \{\bot\}$. 
\begin{definition}[Threshold secret sharing]
Let $n\in\NN$. For any $t \le n$, we say that $(\share,\reconstruct)$ is an $(t,n)$-threshold secret sharing scheme if it satisfies the following properties.
	\begin{itemize}
		\item \textbf{Correctness}: For any message $\msg\in\calM$, and for any $\calI \subseteq [n]$ such that $|\calI| \geq t$, we have that $\reconstruct(\share(\msg)_\calI)=\msg$ with probability one over the randomness of $\share$.
		\item \textbf{Privacy}: For any pair of messages $\msg_0,\msg_1\in\calM$, and for any $\calU \subset [n]$ such that $|\calU| < t$, we have that $$\{\share(1^\secpar,\msg_0)_\calU\}_{\secpar\in\NN} \cind \{\share(1^\secpar,\msg_1)_\calU\}_{\secpar\in\NN}.$$
	\end{itemize}
\end{definition}

\subsection{Secret-Key Encryption}
A secret-key encryption (SKE) scheme over key space $\calK$ is a pair of polynomial-time algorithms $(\Enc,\Dec)$ specified as follows.
(i) The randomized algorithm $\Enc$ takes as input a key $\key\in\calK$ and a message $\msg\in\calM$, and outputs a ciphertext $c\in\calC$;
(ii) The deterministic algorithm $\Dec$ takes as input a key $\key\in\calK$ and a ciphertext $c\in\calC$, and outputs a value in $\calM \cup \{\bot\}$.
Correctness says that for every key $\key\in\calK$, and every message $\msg\in\calM$, it holds that $\Dec(\key,\Enc(\key,\msg))=\msg$ with probability one over the randomness of $\Enc$.
\begin{definition}[Semantic security]
We say that $(\Enc,\Dec)$ satisfies semantic security if for all pairs of messages $\msg_0,\msg_1\in\calM$ it holds that $$\{\Enc(\key,\msg_0):\key\getsr\calK\}_{\secpar\in\NN} \cind \{\Enc(\key,\msg_1):\key\getsr\calK\}_{\secpar\in\NN}.$$
\end{definition}

\subsection{Non-Interactive Commitments}
A non-interactive commitment is a PPT algorithm $\Commit$ taking as input a message $\msg\in\bin^k$ and outputting a value $\com = \Commit(\msg,\rndcom)\in\bin^l$ where $\rndcom\in\bin^*$ is the randomness used to generate the commitment. The pair $(\msg,\rndcom)$ is also called the {\em opening}. 
A non-interactive commitment typically satisfies two properties known as binding and hiding; we review these properties (in the flavor we need them) below.

\begin{definition}[Perfect binding]
We say that $\Commit$ satisfies perfect binding if for all $\com\in\bin^l$ there do not exist values $(\msg_0,\msg_1,\rndcom_0,\rndcom_1)$, with $\msg_0 \ne \msg_1$, s.t.\ $\Commit(\msg_0,\rndcom_0) = \Commit(\msg_1,\allowbreak\rndcom_1)=\com$.	
\end{definition}	

\begin{definition}[Computational hiding]
We say that $\Commit$ satisfies computational hiding if for all pairs of message $\msg_0,\msg_1\in\bin^k$, it holds that $$\{\Commit(1^\secpar,\msg_0)\}_{\secpar\in\NN} \cind \{\Commit(1^\secpar,\msg_1)\}_{\secpar\in\NN}.$$
\end{definition}

%
%


\section{MPC Definitions}\label{app:def}
\subsection{Sequential Composability}
\paragraph*{The Real Execution.} 
In the real world, protocol $\pi$ is run by a set of parties $\party_1,\ldots,\party_n$ in the presence of a PPT adversary $\advA$ coordinated by a non-uniform distinguisher $\advD = \{\advD_\secpar\}_{\secpar \in \NN}$. 
At the outset, $\advD$ chooses the inputs $(1^\secpar,\inp_i)_{i\in[n]}$ for each player $\party_i$, the set of corrupted parties $\calI \subset [n]$, auxiliary information $\aux\in\bin^*$, and gives $((\inp_i)_{i\in\calI},\aux)$ to $\advA$.
Hence, the  protocol $\pi$ is run with the honest parties following their instructions (using input $\inp_i$), and with the attacker $\advA$ sending all messages of the corrupted players by following any polynomial-time strategy.  Finally, $\advA$ passes an arbitrary function of its view to $\advD$, who is also given the output of the honest parties, and returns a bit.

\paragraph*{The Ideal Execution.}
The ideal process for the computation of $\func$ involves a set of dummy parties $
\party_1,\ldots,\party_n$, an ideal adversary $\Sim$ (a.k.a.\ the simulator), and an 
ideal functionality $\Func_\func$.
At the outset, $\advD$ chooses $(1^\secpar,\inp_i)_{i\in[n]}$, $\calI$, $\aux$ and 
sends $((\inp_i)_{i\in\calI},\aux)$ to $\advS$.
Hence, each player $\party_i$ sends its input $\inp_i'$ to the ideal functionality---where $\inp_i' = \inp_i$ if $\party_i$ is honest, and otherwise $\inp_i'$ is chosen 
arbitrarily by the simulator---which returns to the parties their respective outputs $
\func_i(\inp_1',\ldots,\inp_n')$.  Finally, $\advS$ passes an arbitrary function of its 
view to $\advD$, who is also given the output of the honest parties, and returns a 
bit.

\paragraph*{Securely Realizing an Ideal Functionality.}
Intuitively, an MPC protocol is secure if whatever the attacker can learn in the real 
world can be emulated by the simulator in the ideal execution.
Formally, if we denote by $\REAL_{\pi,\advA,\advD}(\secpar)$ the random variable 
corresponding to the output of $\advD$ in the real execution, and by $
\IDEAL_{\func,\Sim,\advD}(\secpar)$ the random variable corresponding to the 
output of $\advD$ in the ideal execution, the two probability ensembles have to be 
computationally indistinguishable.

Let us write $\IDEAL_{\func_\bot,\Sim,\advD}(\secpar)$ for the output distribution in 
the above ideal execution.
\begin{definition}[Simulation-based security]\label{def:uc}
	Let $n\in\NN$.
	Let $\Func_\func$ be an ideal functionality for $\func:(\bin^*)^n \rightarrow 
(\bin^*)^n$, and let $\pi$ be an $n$-party protocol.
	We say that $\pi$ $s$-securely computes $\Func_\func$ if for any PPT 
adversary $\advA$ there exists a PPT simulator $\Sim$ such that all PPT non-
uniform distinguishers $\advD$ corrupting at most $s$ parties, we have
	\begin{equation*}
	\left\{\IDEAL_{\func,\Sim,\advD}(\secpar)\right\}_{\secpar\in\NN} 
	\cind 
	\left\{\REAL_{\pi,\advA,\advD}(\secpar)\right\}_{\secpar\in\NN}.
	\end{equation*}
\end{definition}
When replacing $\IDEAL_{\func,\Sim,\advD}(\secpar)$ with $\IDEAL_{\func_\bot,
\Sim,\advD}(\secpar)$ in Def.~\ref{def:uc}, we say that $\pi$ $s$-securely computes 
$\func$ with aborts.

\paragraph*{The Hybrid Model.}
Let $\HYBRID^\gunc_{\pi,\advA,\advD}(\secpar)$ denote the random variable 
corresponding to the output of $\advD$ in the $\Func_\gunc$-hybrid model.
We say that a protocol $\pi_f$ for computing $\func$ is secure in the $\Func_\gunc$-hybrid model if the ensembles $\HYBRID^\gunc_{\pi_f,\advA,\advD}(\secpar)$ and $
\IDEAL_{\func,\Sim,\advD}(\secpar)$ are computationally close.

\begin{definition}[MPC in the hybrid model]\label{def:uc_hyb}
	Let $n\in\NN$.
	Let $\Func_\func$, $\Func_\gunc$ be ideal functionalities for $\func,\gunc:
(\bin^*)^n \rightarrow (\bin^*)^n$, and let $\pi$ be an $n$-party protocol.
	We say that $\pi$ $s$-securely realizes $\Func_\func$ in the $\Func_\gunc$-
hybrid model if for all PPT adversaries $\advA$ there exists a PPT simulator $\Sim$ 
such that for all PPT non-uniform distinguishers $\advD$ corrupting at most $s$ 
players, we have
	\begin{equation*}
	\left\{\IDEAL_{\func,\Sim,\advD}(\secpar)\right\}_{\secpar\in\NN} 
	\cind 
	\left\{\HYBRID^\gunc_{\pi,\advA,\advD}(\secpar)\right\}_{\secpar\in\NN}.
	\end{equation*}	
\end{definition}
\subsection{Universal Composability}

\paragraph*{The Real Execution}
In the real world, the protocol $\pi$ is run in the presence of an adversary $\advA$ coordinated by a non-uniform environment $\advZ = \{\advZ_\secpar\}_{\secpar\in\NN}$.
At the outset, $\advZ$ chooses the inputs $(1^\secpar,\inp_i)$ for each player $\party_i$, and gives $\calI$, $\{\inp_i\}_{i\in\calI}$ and $z$ to $\advA$, where $\calI \subset [\num]$ represents the set of corrupted players and $z$ is some auxiliary input. For simplicity, we only consider static corruptions (i.e.,\ the environment decides who is corrupt at the beginning of the protocol).
The parties then start running $\pi$, with the honest players $\party_i$ behaving as prescribed in the protocol (using input $x_i$), and with malicious parties behaving arbitrarily (directed by $\advA$). The attacker may delay sending the messages of the corrupted parties in any given round until after the honest parties send their messages in that round; thus, for every $\round$, the round-$\round$ messages of the corrupted parties may depend on the round-$\round$ messages of the honest parties. $\advZ$ can interact with $\advA$ throughout the course of the protocol execution.\\
 Additionally, $\advZ$ receives the outputs of the honest parties, and must output a bit. We denote by $\REAL_{\pi,\advA,\advZ}(\secpar)$ the random variable corresponding to $\advZ$'s guess.

\paragraph*{The Ideal Execution}
In the ideal world, a trusted third party evaluates the function $\func$ on behalf of a set of dummy players $(\party_i)_{i \in [\num]}$. As in the real setting, $\advZ$ chooses the inputs $(1^\secpar,\inp_i)$ for each honest player $\party_i$, and gives $\calI$, $\{\inp_i\}_{i\in\calI}$ and $z$ to the ideal adversary $\advS$, corrupting the dummy parties $(\party_i)_{i \in \calI}$. Hence, honest parties send their input $x_i' = x_i$ to the trusted party, whereas the parties controlled by $\advS$ might send an arbitrary input $x_i'$. The trusted party computes $(y_1,\ldots,y_\num) = \func(x_1',\ldots,x_\num')$, and sends $y_i$ to $\party_i$. During the simulation, $\advS$ and $\advA$ can interact with $\advZ$ throughout the course of the protocol execution.
 Additionally, $\advZ$ receives the outputs of the honest parties and must output a bit. We denote by $\IDEAL_{\func,\advS,\advZ}(\secpar)$ the random variable corresponding to $\advZ$'s guess.

\begin{definition}[UC-Secure MPC]\label{def:mpc}
Let $\pi$ be an $\num$-party protocol for computing a function $\func:(\bin^*)^\num\allowbreak\rightarrow(\bin^*)^\num$. We say that $\pi$ $\thr$-securely UC-realizes $\func$ in the presence of malicious adversaries if such that for every PPT adversary $\advA$ there exists a PPT simulator $\advS$ such that for every non-uniform PPT environment $\advZ$ corrupting at most $t$ parties the following holds:
\[
\left\{\REAL_{\pi,\advA,\advZ}(\secpar)\right\}_{\secpar\in\NN} \cind \left\{\IDEAL_{\func,\advS,\advZ}(\secpar)\right\}_{\secpar\in\NN}.
\]
When replacing $\IDEAL_{\func,\advS,\advZ}(\secpar)$ with $\IDEAL_{\func_\bot,\advS,\advZ}(\secpar)$ we say that $\pi$ $\thr$-securely computes $\func$ {\em with aborts} in the presence of malicious adversaries.
\end{definition}

Hence, we also consider a weakening of the ideal process involving a functionality $\Func_{\func}^\bot$ which behaves identically to $\Func_\func$ except that:
(i) When the trusted party computes the outputs $(\out_1,\ldots,\out_n)$, it first gives $(\out_i)_{i\in\calI}$ of the corrupted players 
to the simulator; and
(ii) The simulator may send either a message $\mathtt{continue}$ or $\mathtt{abort}$ to the trusted party: in the former case, the trusted party sends the 
output $\out_i$ to all honest players, in the latter case they
receive an abort symbol $\bot$. All of the provably secure constructions that we will present we assume that the adversaries are static, i.e. that are fixed at the beginning of the execution.

An even stronger composability guarantee is that of \emph{generalized} UC~\cite{CanettiDPW07}. Roughly, in this setting, entities have black-box access to one or more global functionalities. A global functionality is an external functionality accessible by both the real and ideal world entities. The same instance of a global functionality can be accessed by multiple protocols (either simultaneously or sequentially depending on the model).

\paragraph*{The MPC Hybrid Model.}
Let $\HYBRID^\gunc_{\pi,\advA,\advZ}(\secpar)$ denote the random variable 
corresponding to the output of $\advZ$ in the $\Func_\gunc$-hybrid model.
We say that a protocol $\pi_f$ for computing $\func$ is secure in the $\Func_\gunc$-hybrid model if the ensembles $\HYBRID^\gunc_{\pi_f,\advA,\advZ}(\secpar)$ and $
\IDEAL_{\func,\Sim,\advZ}(\secpar)$ are computationally close.
\begin{definition}[UC-Secure MPC in the hybrid model]
	Let $n\in\NN$.
	Let $\Func_\func$, $\Func_\gunc$ be ideal functionalities for $\func,\gunc:
(\bin^*)^n \rightarrow (\bin^*)^n$, and let $\pi$ be an $n$-party protocol.
	We say that $\pi$ $\thr$-securely realizes $\Func_\func$ in the $\Func_\gunc$-
hybrid model if for all PPT adversaries $\advA$ there exists a PPT simulator $\Sim$ 
such that for all PPT non-uniform environments $\advZ$ corrupting at most $\thr$ 
players, we have
	\begin{equation*}
	\left\{\IDEAL_{\func,\Sim,\advZ}(\secpar)\right\}_{\secpar\in\NN} 
	\cind 
	\left\{\HYBRID^\gunc_{\pi,\advA,\advZ}(\secpar)\right\}_{\secpar\in\NN}.
	\end{equation*}	
\end{definition}
The same definitions are applicable to the stand-alone setting by replacing the environment with a distinguisher $\advD$. He can only handle inputs to the parties and receive back an arbitrary function of $\advA$'s view, and then outputs a bit.

\subsection{Security with Penalties}
The formal definition of secure MPC with penalties (in the hybrid model) is identical to the above, except that $\Func_\func^\bot$ is replaced by the ideal functionality $\Func_\func^*$ depicted in Fig.~\ref{fig:fair}. 
At the outset, the distinguisher $\advD$ (or the environment $\advZ$) initializes each 
party's wallet with some number of coins.
$\advD$ (or $\advZ$) may read or modify ({\em i.e.},\ add coins to or retrieve coins from) 
the wallet (but not the safe) of each honest party, whereas in the 
hybrid (resp.\ ideal) process, the adversary $\advA$ (resp.\ $\advS$) has complete 
access to both wallets and safes of corrupt parties.

At the end of the protocol, honest parties release the coins locked in the protocol to 
their wallet, the distinguisher  is 
given the distribution of coins and outputs its final output.

\begin{figure*}[!h]
		\begin{framed}\footnotesize
		\begin{center}
			\textbf{Functionality $\Func_\func^*$}
		\end{center}
		The functionality runs with security parameter $1^\secpar$, minimum penalty amount $q$, parties $\party_1,\ldots,\party_n$, and adversary $\Sim$ that corrupts parties $\{\party_i\}_{i\in\calI}$ for some $\calI \subseteq [n]$. Let $\calH = [n] \setminus \calI$ with $h = |\calH|$, and $d$ be the amount of the safety deposit.
		\begin{description}
			\item[Input Phase:]
			Wait to receive a message $(\mathtt{input},sid,ssid,i,x_i,\coins(d))$ from $\party_i$ for all $i \in \calH$, and message $(\mathtt{input},\allowbreak\{x_i\}_{i \in \calI},\coins(hq))$ from $\Sim$.  
			\item[Output Phase:] Proceed as follows.
			\begin{itemize}
				\item Send $(\mathtt{return},sid,ssid,\coins(d))$ to $\party_i$ for all $i\in\calH$, and let $(y_1,\ldots,y_n) = \func(x_1,\ldots,x_n)$.
				\item Send $(\mathtt{output},sid,ssid,\{y_i\}_{i \in \calI})$ to $\Sim$.
				\item If $\Sim$ returns $(\mathtt{continue},sid,ssid,\calH_{\sf out})$, then send $(\mathtt{output},sid,ssid,y_i)$ to $\party_i$ for each $i\in\calH$, send $(\mathtt{payback},sid,ssid,\coins((h-|\calH_{\sf out}|)q)$ to $\Sim$, and send $(\mathtt{extrapay},sid,ssid,\coins(q))$ to $\party_i$ for each $i\in\calH_{\sf out}$.
				\item Else, if $\Sim$ returns $(\mathtt{abort},sid,ssid)$ send $(\mathtt{penalty},sid,ssid,\coins(q))$ to $\party_i$ for each $i\in\calH$.
			\end{itemize}
		\end{description}	
		\vspace*{-\baselineskip}
	\end{framed}
	\vspace*{-\baselineskip}
	\caption{The functionality $\Func^*_\func$ for secure computation with penalties~\cite{BentovK14}.}\label{fig:fair}
	\vspace*{-\baselineskip}
\end{figure*}

\section{Proof of Theorem \ref{thm:multilock_security}}\label{app:multilock_security}
In this section we prove Thm.~\ref{thm:multilock_security}
A more detailed description of the protocol can be found in Fig. \ref{fig:newseesaw}.
\begin{figure}[!t]\footnotesize
	\begin{framed}
		\begin{center}
			\textbf{The Compact Multi-Lock ($\CMLMech$) Penalty Protocol}
		\end{center}
		\vspace*{-0.2\baselineskip}
		Let $(\share,\reconstruct)$ be an $n$-party secret sharing scheme, $(\Enc,\Dec)$ be a secret-key encryption scheme, $\Commit$ be a non-interactive commitment, and $\func:(\bin^*)^n\rightarrow(\bin^*)^n$ be the function being computed. 
		The protocol runs with security parameter $1^\secpar$, parties $\party_1,\ldots,\party_n$ holding inputs $\inp_1,\allowbreak\ldots,\allowbreak\inp_n$, penalty amount $q$, timeout $\timeout$, and hybrid functionalities $\FuncML$ and $\Func^\bot_{\tilde f}$ for the following derived function $\tilde\func:(\bin^*)^n\rightarrow(\bin^*)^n$:
		\[
		\tilde\func(\inp_1,\ldots,\inp_n) \mapsto (((\com_i)_{i\in[n]},\ctx,\key_1,\rndcom_1), \ldots, ((\com_i)_{i\in[n]},\ctx,\key_n,\rndcom_n)),
		\]
		where $\key\getsr\bin^\secpar$, $(\key_1,\ldots,\key_n) \getsr \share(\key)$, $\ctx \getsr \Enc(\key,\allowbreak\func(\inp_1,\allowbreak\ldots,\allowbreak\inp_n))$, and $\com_i = \Commit(\key_i;\allowbreak\rndcom_i)$ for uniformly random $\rndcom_i \in \bin^*$.
		\begin{description}
			\item[Shares Distribution Phase:]
			Each player $\party_i$ hands its input $\inp_i$ to the ideal functionality $\Func^\bot_{\tilde f}$ and receives back $((\com_j)_{j \in [n]},\ctx,\key_i,\rndcom_i)$. If an $\abort$ message is received, then abort the protocol.
			
			\item[Fair Reconstruction Phase:] 
			Let $\phi(\com_i; \wit_i = (\key_i,\rndcom_i))=1$ iff $\Commit(\key_i;\allowbreak\rndcom_i) = \com_i$, and set $D_i = (\deposit,\allowbreak(\phi(\com_i;\cdot),\allowbreak\ldots,\phi(\com_n;\cdot)),\timeout)$ with $\deposit = q(n-1)$. Player $\party_i$ proceeds as follows:
			\begin{enumerate}[leftmargin=0.4cm]
				\item Send $(\lock,i,D_i,\coins(\deposit))$ to $\FuncML$.
				\item\label{step:lock} Wait to receive one of two possible messages from $\FuncML$:
				\begin{itemize}[leftmargin=0.2cm]
					\item Upon  $(\abort,i,\coins(\deposit))$, output $\bot$ and terminate.
					\item Upon  $(\locked)$, send $(\redeem,i,(\key_i,\rndcom_i))$ to $\FuncML$ receiving back $(\redeem,\coins(\deposit))$.
				\end{itemize}
				\item Initialize a counter $\ell = 0$ and wait until round $\timeout$ for the other players to reveal their witness:
				\begin{itemize}[leftmargin=0.2cm]
					\item Upon  $(\redeem,j,\key'_j)$ from $\FuncML$, set $\ell \gets \ell + 1$.
					\item In case $\ell = n-1$, compute $\key' = \reconstruct(\key'_1,\ldots,\key'_n)$ and $\out' = \Dec(\key',\ctx)$, and terminate.
				\end{itemize}
				\item At round $\timeout+1$, wait to receive $n-1-\ell$ messages from $\FuncML$:
				\begin{itemize}[leftmargin=0.2cm]
					\item Upon  $(\payout,i,j,\coins(\frac{\deposit}{n-1}))$ for each unrevealed witness $\key_j'$, let $\ell \gets \ell+1$.
					\item When $\ell = n-1$, then stop the execution.
				\end{itemize}
			\end{enumerate}
		\end{description}
		\vspace*{-\baselineskip}
	\end{framed}
	\vspace*{-\baselineskip}
	\caption{Our alternative financially and cryptographically fair protocol for general-purpose MPC with penalties.}
	\label{fig:newseesaw}
	\vspace*{-\baselineskip}
\end{figure}

	We begin by describing the simulator $\Sim$ in the 
ideal world. Let $\calI$ be the set of corrupted parties (recall that this set is fixed 
before the protocol starts), and $\calH = [n] \setminus \calI$ be the set of honest 
parties (with $h = |\calH|$).
\begin{enumerate}
	\item\label{step:simsetup} Acting as $\Func^\bot_{\tilde\func}$, wait to receive the inputs $\{\inp_i\}_{i \in \calI}$ from $\advA$.
	Hence, sample $\key,\tilde\key\getsr\bin^\secpar$, let $(\key_1,\allowbreak\ldots,\key_n) \getsr \share(\tilde\key)$, $\ctx \getsr \Enc(\key,0^m)$ and $\com_i \allowbreak\getsr\allowbreak \Commit(0^k)$ for all $i \in \calH$, and send $((\com_j)_{j\in[n]},\allowbreak\ctx,\allowbreak \key_i,\rndcom_i)_{i\in\calI}$ to $\advA$.
	\item\label{step:unfair} Acting as $\FuncML$, wait to receive from $\advA$ the message $(\lock,\allowbreak i,D_i)$ for each $i \in \calI$, where $D_i = (\deposit,\allowbreak(\phi(\com_1;\cdot),\allowbreak\ldots,\phi(\com_n;\cdot)),\timeout)$.
	\begin{itemize} 
		\item If for some $j \in \calI$ the message is not received, or if $\exists i,j\in\calI$ 
		such that $D_i \ne D_j$, then return message $(\abort,\allowbreak i,\coins(\deposit))$ to $\advA$ for each 
		corrupted $\party_i$ that deposited, and terminate the simulation.
		\item Else, send $(\locked)$ to $\advA$ for each corrupted party.
	\end{itemize}
	\item Send $(\cominput,\{\inp_i\}_{i \in\calI},\coins(hq))$ to $\Func_f^*$, receiving $(\comoutput,\allowbreak\out)$ back. Hence, rewind the execution of $\advA$ to step~\ref{step:simsetup}, change the distribution of $((\com_j)_{j\in[n]},$ $\ctx,\key_i,\rndcom_i)_{i\in\calI}$ to that of the real protocol $\pi$, and repeat step~\ref{step:unfair} of the simulation, except that, in case $\advA$ now aborts, the rewinding is repeated with fresh randomness and step~\ref{step:unfair} is run again.
	\item At round $\timeout$, acting as $\FuncML$, send the message $(\redeem,\allowbreak i,(\key_i,\rndcom_i))$ to $\advA$ for each $i\in\calH$. Set $\ell = 0$. Hence,
	upon receiving message $(\redeem,i,(\key'_i,\rndcom_i'))$ from $\advA$ (on behalf of each corrupted $\party_i$):
	\begin{itemize}
		\item If $\ell < n-h$, check that $\phi(\com_i;(\key'_i,\rndcom'_i))=1$; if the check 
		passes and also $\key'_i = \key_i$, send $(\redeem,\allowbreak i,\coins(d))$ to $\party_i$ and $(\redeem,i,(\key'_i,\rndcom_i'))$ to every other corrupted party, and 
		update $\ell \gets \ell + 1$.
		\item If $\ell = n-h$, send $(\cont,\emptyset)$ to $\Func^*_f$, receive back $(\payback,\coins(hq))$ as answer, and terminate the simulation.
	\end{itemize}
	\item At round $\timeout+1$, if $\ell < n-h$, send message $(\payout,\allowbreak i,j,
	\coins(\tfrac{\deposit}{n-1}))$ to each corrupted $\party_j \ne \party_i$ on behalf of each 
	corrupted player $\party_i$ that did not redeem its witness in the previous step of 
	the simulation. Hence, send $(\abort)$ to $\Func^*_\func$ and terminate the 
	simulation.
\end{enumerate}
We remark that it is not immediate that the simulator runs in (expected) polynomial time. This is because in case $\advA$ does not abort during the {\em first run} before the rewinding, the simulator needs to make sure that $\advA$ does not abort during the {\em second run} either, as otherwise the probability of abort would be different in the ideal and real world.
This issue can be solved by slightly adjusting the way the simulator deals with aborts, using a technique originally described in~\cite{GoldreichK96} (see also~\cite[\S 5.4]{Lindell16}).

To conclude the proof, we consider a sequence of hybrid experiments and show that each pair of hybrid distributions derived from the experiments are computationally close.
\begin{description}
	\item[$\hyb_0(\secpar)$:] Identical to $\HYBRID^{\FuncML,\Func^\bot_{\tilde\func}}_{\pi,\advA,\advD}(\secpar)$, except that during the fair 
	reconstruction phase, in case the attacker does not provoke an abort during 
	step~\ref{step:lock} in Fig.~\ref{fig:newseesaw}, we rewind the adversary to the 
	share distribution phase and re-run the entire protocol.
	Clearly, such an artificial rewind does not change the distribution of the hybrid 
	execution in which protocol $\pi$ is originally run.
	\item[$\hyb_1(\secpar)$:] As above except that during 
	the {\em first run} of the share reconstruction phase (before the rewinding) we 
	switch the distribution of the commitments to $\com_i \getsr \Commit(0^k)$ for each $i \in \calH$. 
	During the {\em second run} (after the rewind) of the share distribution phase, if any, 
	the honest commitments are reset to the original distribution $\com_i \getsr \Commit(\key_i)$.
	\item[$\hyb_2(\secpar):$] As above, except that during the {\em first run} of the share reconstruction phase (before the rewinding) we switch the distribution of the shares to $(\key_1,\ldots,\key_n) \getsr \share(\tilde\key)$ for random $\tilde\key\getsr\bin^\secpar$ independent from $\key\getsr \allowbreak \bin^\secpar$.
	\item[$\hyb_3(\secpar):$] As above, except that during the {\em first run} of the share reconstruction phase (before the rewinding) we switch the distribution of the ciphertext to $\ctx \getsr \Enc(\key,0^m)$.
	\item[$\hyb_4(\secpar):$] Identical to $\IDEAL_{\func^*,\Sim,\advD}(\secpar)$ for the above defined simulator $\Sim$.
\end{description}
\begin{lemma} \label{lem:hyb01}
	$\{\hyb_0(\secpar)\}_{\secpar \in \setN} \cind \{\hyb_1(\secpar)\}_{\secpar \in \setN}$:
\end{lemma}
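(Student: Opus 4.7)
The only difference between $\hyb_0$ and $\hyb_1$ lies in the commitments handed to $\advA$ during the \emph{first} run of the share reconstruction phase: in $\hyb_0$ each honest commitment $\com_i$ (for $i\in\calH$) is generated as $\Commit(\key_i;\rndcom_i)$, while in $\hyb_1$ it is generated as $\Commit(0^k;\rndcom_i)$. Since $|\calH|=h$ commitments change at once, the natural plan is to go through a standard hybrid argument and reduce to the computational hiding of $\Commit$ one honest party at a time.

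More concretely, define intermediate hybrids $\hyb_{0,\ell}(\secpar)$ for $\ell=0,\ldots,h$, where the first $\ell$ honest commitments (in some fixed ordering of $\calH$) are sampled as $\Commit(0^k)$ and the remaining $h-\ell$ as $\Commit(\key_i;\rndcom_i)$ during the first run; in all cases the second run (after the rewind) uses the original distribution $\Commit(\key_i;\rndcom_i)$. Clearly $\hyb_{0,0}\equiv\hyb_0$ and $\hyb_{0,h}\equiv\hyb_1$, so it suffices to argue that $\hyb_{0,\ell}\cind\hyb_{0,\ell+1}$ for every $\ell$. The reduction $\mathsf{B}$ to the hiding game receives a challenge $\com^*$ that is either $\Commit(\key_{\ell+1})$ or $\Commit(0^k)$, internally samples $\key,\{\key_j\}_{j\in[n]}$ honestly, generates all other honest commitments as prescribed by $\hyb_{0,\ell}$, plants $\com^*$ in position $\ell+1$, and simulates the entire first run of the protocol against $\advA$ (forwarding all $\Func^\bot_{\tilde\func}$ and $\FuncML$ messages on behalf of the honest players). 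If $\advA$ aborts during the lock phase, $\mathsf{B}$ halts the simulation; otherwise $\mathsf{B}$ rewinds $\advA$ and completes the second run using freshly sampled \emph{real} honest commitments (whose openings it knows), up to and including the redeem phase. Finally $\mathsf{B}$ outputs whatever $\advD$ outputs on the resulting transcript.

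The key observation making the reduction go through is that in the first run the honest parties never need to \emph{open} their commitments---$\FuncML$ only requires the $(\key_i,\rndcom_i)$ during the redeem phase, which lives entirely within the \emph{second} run, where $\mathsf{B}$ generates the commitments itself and therefore knows valid openings. Hence $\mathsf{B}$'s simulation is perfect: when $\com^*=\Commit(\key_{\ell+1};\cdot)$ the view of $\advD$ is distributed exactly as in $\hyb_{0,\ell}$, and when $\com^*=\Commit(0^k;\cdot)$ it is distributed exactly as in $\hyb_{0,\ell+1}$. Any non-negligible distinguishing gap for $\advD$ therefore translates into an identical advantage against the hiding game, contradicting computational hiding of $\Commit$.

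The main obstacle I anticipate is a subtle book-keeping issue rather than a cryptographic one: one must check that the (expected) polynomial running time of the rewinding-based simulator does not blow up across the hybrid, and that the abort probability in the first run is not shifted by the switch of commitments---this follows because $\advA$'s view up to the abort decision is completely determined by the lock-phase transcript, which the reduction produces identically to the corresponding hybrid. Up to this routine argument (using standard techniques such as those in~\cite{GoldreichK96,Lindell16} and already invoked for the simulator in the main proof), the lemma follows.
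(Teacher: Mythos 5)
Your plan coincides with the paper's own proof: it uses the identical per-honest-party hybrid chain $\hyb_{0,\ell}$ interpolating between $\hyb_0$ and $\hyb_1$, and the identical reduction to computational hiding that plants the challenge commitment in position $\ell+1$ and relies on the fact that honest openings $(\key_i,\rndcom_i)$ are only ever revealed in the second (post-rewind) run, where the reduction samples real commitments with known openings. The running-time/abort-probability bookkeeping you flag is exactly what the paper defers to the main theorem via the technique of~\cite{GoldreichK96}, so nothing is missing.
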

\begin{proof}
Let $h = |\calH|$. For the rest of the proof, we will refer to the keys and the commitments of the honest players as $\key_H^j$ and $\com_H^j$ for each $j \in [h]$, meaning that we are considering the key/commitment of the $j$-th honest player in $\calH$. When we write $\key_j$ or $\com_j$ we are referring to the key and the commitment of the player $\party_j$ instead.
 For $i \in [0,h]$, consider a hybrid argument in which $\hyb_{0,i}(\secpar)$ is modified as in $\hyb_1(\secpar)$ for the  first $i$ honest parties, meaning that except that during the {\em first run} of the share distribution phase we set $\com_H^j \getsr \Commit(0^k)$ for all $j\le i$, whereas $\com_H^j \getsr \Commit(\key_H^j)$ for all $j > i$. Then, it suffices to show that for every $i\in[0,h-1]$: 
\[
\{\hyb_{0,i}(\secpar)\}_{\secpar \in \setN} \cind \{\hyb_{0,i+1}(\secpar)\}_{\secpar \in \setN}.
\]
The proof is by contradiction: assume that there exists a PPT attacker $\advA$ and a non-uniform PPT 
distinguisher  $\advD$ that can distinguish $\hyb_{0,i}(\secpar)$ and $\hyb_{0,i+1}\allowbreak(\secpar)$ with non-negligible probability. Then we can build a non-uniform PPT distinguisher $\hat\advD$ breaking the hiding 
property of the commitment. 
%
Fix $i\in[0,h-1]$. 
We  build a non-uniform PPT distinguisher $\hat\advD$ breaking the hiding property of the 
commitment. A description of $\hat\advD$ follows.
\begin{itemize}
	\item Run $\dist$, obtaining $\calI$, $\{x_i\}_{i \in [n]}$, and $z$; forward $\{\inp_i\}_{i\in\calI}$ and $\aux$ 
	to $\advA$.
	\item Sample $\key \getsr \bin^\secpar$ and $(\key_1,\ldots,\key_n) \getsr \share(\key)$.
	  For each $j \le i$
	define $\com_H^j \getsr \Commit(0^k)$, whereas for each $j > i+1$ let $\com_H^{j} \getsr \Commit(\key_H^j)$.
	\item Forward $(0^k,\key_{H}^{i+1})$ to the challenger. Upon receiving a challenge commitment $\hat\com$, 
	let $\com_{i+1} = \hat\com$. 
	\item Let $\ctx \getsr \Enc(\key;\func(\inp_1,\ldots,\inp_n))$, and forward $((\com_j)_{j\in[n]},
	\allowbreak\ctx,\key_i,\allowbreak\rndcom_i)_{i\in\calI}$ to $\advA$.
	\item Continue the simulation with $\advA$ exactly as described in $\hyb_{0,i}(\secpar)$ or $\hyb_{0,i+1}\allowbreak
	(\secpar)$ (in fact, note that the two experiments are identical after the {\em first run} of the share 
	distribution phase); when $\advA$ terminates pass its output to $\advD$ together with the output of the 
	honest parties, and return the same guess as that of $\advD$.
\end{itemize}
By inspection, the simulation done by $\hat\advD$ is perfect, in the sense that when $\hat\com$ is a 
commitment to $0^k$ the view of $\advA$ is identical to that in a run of $\hyb_{0,i}(\secpar)$, whereas 
when $\hat\com$ is a commitment to $\key_H^{i+1}$ the view of $\advA$ is identical to that in a run of $
\hyb_{0,i+1}(\secpar)$. Hence, the view of $\advD$ is also perfectly simulated, so that $\hat\advD$ 
breaks the hiding property with non-negligible probability.
\end{proof}

\begin{lemma}\label{lem:hyb12}
	$\{\hyb_1(\secpar)\}_{\secpar \in \setN} \cind \{\hyb_2(\secpar)\}_{\secpar \in \setN}$.
\end{lemma}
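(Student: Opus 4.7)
The plan is to reduce computational indistinguishability between $\hyb_1$ and $\hyb_2$ to the privacy of the underlying $(n,n)$-threshold secret sharing scheme $(\share,\reconstruct)$. The only change between the two experiments is that, during the \emph{first run} of the fair reconstruction phase (before any rewinding), the shares $(\key_1,\ldots,\key_n)$ are a sharing of the actual key $\key$ used to encrypt the output in $\hyb_1$, and a sharing of an independent random key $\tilde\key$ in $\hyb_2$. Crucially, in both hybrids the honest commitments during the first run are already $\Commit(0^k)$ (by the modification done in $\hyb_1$), so the only information about the sharing that leaks to $\advA$ during the first run consists of the corrupted parties' shares $\{\key_i\}_{i\in\calI}$ together with their opening randomness $\{\rndcom_i\}_{i\in\calI}$. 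Since $|\calI| < n$, the privacy property of $(\share,\reconstruct)$ is applicable.

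I would construct, by contradiction, a non-uniform PPT distinguisher $\hat\advD$ breaking privacy of the secret sharing scheme, given a PPT adversary $\advA$ and distinguisher $\advD$ that tell $\hyb_1$ and $\hyb_2$ apart with non-negligible advantage. First, $\hat\advD$ runs $\advD$ to obtain the set of corrupted parties $\calI$, the inputs $\{\inp_i\}_{i\in[n]}$, and the auxiliary input $\aux$, and forwards $(\{\inp_i\}_{i\in\calI},\aux)$ to $\advA$. Second, $\hat\advD$ samples two independent keys $\key,\tilde\key\getsr\bin^\secpar$ and submits the pair $(\key,\tilde\key)$ as its challenge to the secret-sharing privacy experiment, receiving back the restriction $\{\key_i^*\}_{i\in\calI}$ of either $\share(\key)$ or $\share(\tilde\key)$ to the index set $\calI$. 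Third, for each corrupted $i\in\calI$, $\hat\advD$ freshly samples $\rndcom_i$ and sets $\com_i:=\Commit(\key_i^*;\rndcom_i)$; for each honest $i\in\calH$, it lets $\com_i\getsr\Commit(0^k)$. Finally, it computes $\ctx\getsr\Enc(\key,\func(\inp_1,\ldots,\inp_n))$ using the key $\key$ that it knows, and hands $((\com_j)_{j\in[n]},\ctx,\key_i^*,\rndcom_i)_{i\in\calI}$ to $\advA$.

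From this point $\hat\advD$ continues the simulation exactly as prescribed by the hybrids. If $\advA$ aborts during the lock step of the first run, $\hat\advD$ terminates and outputs $\advD$'s guess. Otherwise, $\hat\advD$ rewinds to the share distribution phase and re-runs it with a freshly sampled sharing of the actual key $\key$ and honest commitments of the form $\Commit(\key_i)$; this is possible because $\key$ is known to $\hat\advD$, and this second-run distribution is identical in both $\hyb_1$ and $\hyb_2$. At the end, $\hat\advD$ passes $\advA$'s output and the honest parties' output to $\advD$ and echoes its bit.

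By inspection, the simulation is perfect: when the secret sharing challenger returned shares of $\key$, $\advA$'s view coincides with its view in $\hyb_1$, whereas when it returned shares of $\tilde\key$, the view matches that of $\hyb_2$. Hence any non-negligible distinguishing advantage of $\advD$ translates into a non-negligible advantage of $\hat\advD$ against privacy of $(\share,\reconstruct)$, a contradiction. The main subtlety that must be kept in mind is the expected-polynomial-time handling of the rewind step, which is addressed globally in the proof of Thm.~\ref{thm:multilock_security} via the Goldreich--Kahan technique and does not introduce any new difficulty in this reduction, since $\hat\advD$ only needs to faithfully mirror the rewinding schedule of the hybrids.
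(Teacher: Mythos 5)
Your proposal is correct and follows essentially the same route as the paper's proof: a reduction by contradiction to the privacy of the $(n,n)$-threshold secret sharing scheme, in which $\hat\advD$ submits $(\key,\tilde\key)$ with challenge set $\calU=\calI$, embeds the challenge shares into the corrupted parties' outputs while the honest commitments are already $\Commit(0^k)$, encrypts the true output under the known key $\key$, and then continues (and, after a non-abort, rewinds) exactly as in the hybrids, which are identical from that point on. Your explicit observations that the second run is samplable because $\hat\advD$ knows $\key$, and that the expected-polynomial-time concern is handled globally via the Goldreich--Kahan technique, are consistent with (and slightly more detailed than) the paper's treatment.
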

\begin{proof}
We prove this claim by contradiction: if there exists a PPT attacker $\advA$ and a 
non-uniform PPT distinguisher $\advD$ that can tell apart the two experiments with 
better than negligible probability, then we can build a PPT distinguisher $\hat\advD$ 
attacking privacy of the underlying threshold secret sharing scheme. 
A description of $\hat\advD$ follows.
\begin{itemize}
	\item Run $\dist$, obtaining $\calI$, $\{x_i\}_{i \in [n]}$, and $z$; forward $
	\{x_i\}_{i\in\calI}$ and $z$ to $\advA$.
	\item Sample $\key,\tilde\key \getsr \bin^\secpar$ and forward $(\calU,
	\key,\tilde\key)$ to the challenger, where $\calU = \calI$.
	\item Upon receiving $(\hat\key_i)_{i\in\calI}$ from the challenger, for each $j \in \calH$ let $\com_j \getsr \allowbreak \Commit(0^k)$, compute $\ctx \getsr 
	\allowbreak\Enc(\key,\allowbreak f(x_1,\ldots,\allowbreak x_n))$, and pass $((\com_j)_{j\in[n]},
	\ctx,\hat\key_i,\rndcom_i)_{i\in\calI}$ to $\advA$.
	\item Continue the simulation with $\advA$ exactly as described in $
	\hyb_1(\secpar)$ or $\hyb_2(\secpar)$ (in fact, note that the two experiments are 
	identical after the {\em first run} of the share distribution phase); when $\advA$ 
	terminates pass its output to $\advD$ together with the output of the honest parties, 
	and return the same guess as that of $\advD$.
\end{itemize}
By inspection, the simulation done by $\hat\advD$ is perfect, in the sense that 
when 
$(\hat\key_i)_{i\in\calI}$ are taken from a secret sharing of $\key$ the view of 
$\advA$ is identical to that in a run of $\hyb_{1}(\secpar)$, whereas when 
$(\hat\key_i)_{i\in\calI}$ are taken from a secret sharing of $\tilde\key$ the view 
of 
$\advA$ is identical to that in a run of $\hyb_{2}(\secpar)$. Hence, the view of 
$\advD$ is also perfectly simulated, so that $\hat\advD$ breaks the privacy 
property 
with non-negligible probability.
\end{proof}

\begin{lemma}\label{lem:hyb23}
	$\{\hyb_2(\secpar)\}_{\secpar \in \setN} \cind \{\hyb_3(\secpar)\}_{\secpar \in \setN}$:
\end{lemma}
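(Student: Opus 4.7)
The plan is to reduce Lemma~\ref{lem:hyb23} to the semantic security of the secret-key encryption scheme $(\Enc,\Dec)$, closely mirroring the structure of the proofs of Lemmas~\ref{lem:hyb01} and \ref{lem:hyb12}. The key observation is that, going from $\hyb_2$ to $\hyb_3$, the only change during the {\em first run} of the share distribution phase is the plaintext encrypted in $\ctx$: it switches from $\func(\inp_1,\ldots,\inp_n)$ to $0^m$, while the encryption key $\key$ remains uniformly random. Crucially, in both hybrids the shares handed to the adversary encode a fresh $\tilde\key\getsr\bin^\secpar$ independent from $\key$, and the honest commitments commit to $0^k$, so $\key$ never appears in $\advA$'s view except through the challenge ciphertext itself.

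First I would build a non-uniform PPT distinguisher $\hat\advD$ against semantic security of $(\Enc,\Dec)$ that internally runs $\advA$ and uses $\advD$ as a black box. The reduction $\hat\advD$ invokes $\dist$ to obtain $(\calI,\{x_i\}_{i\in[n]},z)$ and forwards $(\{x_i\}_{i\in\calI},z)$ to $\advA$. It then samples $\tilde\key\getsr\bin^\secpar$, computes $(\key_1,\ldots,\key_n)\getsr\share(\tilde\key)$, and sets $\com_j\getsr\Commit(0^k)$ for every $j\in\calH$. It submits the pair $(m_0,m_1)=(0^m,\func(\inp_1,\ldots,\inp_n))$ to its own challenger, receives back a challenge ciphertext $\hat\ctx$, and hands $((\com_j)_{j\in[n]},\hat\ctx,\key_i,\rndcom_i)_{i\in\calI}$ to $\advA$. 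It then carries out the remainder of the first run of the fair reconstruction phase exactly as prescribed in $\hyb_2$ / $\hyb_3$.

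Next I would handle the {\em second run} carefully: after the rewind, the hybrids reset the commitments, shares, and ciphertext back to their honest distribution (as in Fig.~\ref{fig:newseesaw}), where the ciphertext is computed under a freshly sampled key. Since this second-run key is independent of the challenger's key, $\hat\advD$ can simply sample it by itself and simulate the entire second run without any further oracle access. Finally, $\hat\advD$ passes $\advA$'s output and the honest parties' outputs to $\advD$ and echoes $\advD$'s bit. By inspection, when $\hat\ctx$ is an encryption of $m_1$ the joint view of $(\advA,\advD)$ is distributed identically to $\hyb_2(\secpar)$, whereas when $\hat\ctx$ encrypts $m_0$ it is distributed identically to $\hyb_3(\secpar)$; hence any non-negligible distinguishing advantage of $\advD$ translates directly into a non-negligible semantic-security advantage of $\hat\advD$, yielding a contradiction.

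The main obstacle I anticipate is not the core reduction, which is standard, but arguing correctness across the rewind: one must verify that the key used in the second-run ciphertext is sampled afresh and never correlated with the challenger's key, so that single-query semantic security suffices and the reduction never has to ``explain'' the challenge ciphertext after the rewind. Once this independence is established, the reduction is tight and the lemma follows.
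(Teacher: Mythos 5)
Your proposal is correct and matches the paper's own proof essentially step for step: the same reduction $\hat\advD$ to semantic security of $(\Enc,\Dec)$, sampling $\tilde\key$ and sharing it, committing to $0^k$ for honest parties, submitting $\bigl(\func(\inp_1,\ldots,\inp_n),0^m\bigr)$ to the challenger, embedding the challenge ciphertext in $\advA$'s first-run view, and observing that the two hybrids coincide after the first run. Your extra care about the rewind—that the second run resamples a fresh key independent of the challenger's, so the reduction never needs to explain the challenge ciphertext—is a point the paper leaves implicit in its remark that the experiments are identical after the first run of the share distribution phase, and making it explicit is a welcome clarification rather than a deviation.
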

\begin{proof}
Once again, we prove this claim by assuming that there exists a PPT attacker 
$\advA$ and a non-uniform PPT distinguisher $\advD$ that can tell apart the two 
experiments with better than negligible probability, from which we can  build a PPT distinguisher 
$\hat\advD$ attacking the semantic security of the underlying secret key encryption 
scheme. 
A description of $\hat\advD$ follows.
\begin{itemize}
	\item Run $\dist$, obtaining $\calI$, $\{x_i\}_{i \in [n]}$, and $z$; forward $
	\{x_i\}_{i\in\calI}$ and $z$ to $\advA$.
	\item Sample $\key,\tilde\key \getsr \bin^\secpar$ and $(\key_1,\ldots,
	\key_n) \getsr \share(\tilde\key)$
	\item Forward $(\hat\msg_0 = f(x_1,\ldots,x_n),\hat\msg_1 = 0^m)$ to the 
	challenger.
	\item Upon receiving a challenge ciphertext $\hat\ctx$ from the 
	challenger, for each $j \in \calH$ let $\com_j \getsr \Commit(0^k)$ and pass $
	((\com_j)_{j\in[n]},\ctx,\key_i,\rndcom_i)_{i\in\calI}$ to $\advA$.
	\item Continue the simulation with $\advA$ exactly as described in $
	\hyb_2(\secpar)$ or $\hyb_3(\secpar)$ (in fact, note that the two experiments are 
	identical after the {\em first run} of the share distribution phase); when $\advA$ 
	terminates pass its output to $\advD$ together with the output of the honest parties, 
	and return the same guess as that of $\advD$.
\end{itemize}
By inspection, the simulation done by $\hat\advD$ is perfect, in the sense that 
when $\hat\ctx$ is an encryption of $f(x_1,\ldots,x_n)$ the view of $\advA$ is 
identical to that in a run of $\hyb_{2}(\secpar)$, whereas when $\hat\ctx$ is an 
encryption of $0^m$ the view of $\advA$ is identical to that in a run of $\hyb_{3}
(\secpar)$. Hence, the view of $\advD$ is also perfectly simulated, so that $
\hat\advD$ breaks the privacy property with non-negligible probability.
\end{proof}

\begin{lemma}
	$\{\hyb_3(\secpar)\}_{\secpar \in \setN} \equiv \{\hyb_4(\secpar)\}_{\secpar \in \setN}$.
\end{lemma}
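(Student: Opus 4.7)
The plan is to show that $\hyb_3(\secpar)$ and $\hyb_4(\secpar)$ are in fact identically distributed, so no cryptographic reduction is needed; the argument is purely combinatorial, relying on perfect binding of $\Commit$ and perfect correctness of $(\Enc,\Dec)$. First I would carefully line up, step by step, what $\hyb_3$ produces against what the ideal execution with $\Sim$ produces. In both experiments, the first run of the share distribution phase uses honest commitments $\com_i \getsr \Commit(0^k)$, shares $(\key_1,\ldots,\key_n)\getsr\share(\tilde\key)$ with $\tilde\key$ independent of $\key$, and ciphertext $\ctx\getsr\Enc(\key,0^m)$; the view handed to $\advA$ is therefore sampled from the exact same distribution. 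This matches step~\ref{step:simsetup} of $\Sim$, which is the only difference from the ideal world.

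Next I would argue that the lock/abort branching during the first run is identical in both. In $\hyb_3$ the acting $\FuncML$ behaves by the spec, and in $\hyb_4$ the simulator in step~\ref{step:unfair} mimics exactly the same check (receiving $(\lock,i,D_i)$ from every corrupt party, verifying the $D_i$ agree, and otherwise refunding). If the adversary aborts here, both worlds produce the same transcript and refund $\coins(\deposit)$ to the corrupted parties, and nothing is sent to the honest ones; the output distributions coincide. If the adversary does not abort, both worlds rewind, resample the share-distribution transcript from the real-protocol distribution, and rerun the lock phase with fresh randomness until it succeeds; this gives identical distributions by construction (the resampling rule is the same on both sides).

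The crux is the redeem/compensation phase after the successful rewind. Here I would use perfect binding: each commitment $\com_i$ fixes a unique share, so the condition $\phi(\com_i;(\key_i',\rndcom_i'))=1$ checked by $\FuncML$ in $\hyb_3$ is equivalent to the condition $\phi(\com_i;(\key_i',\rndcom_i'))=1 \wedge \key_i'=\key_i$ enforced by $\Sim$ in $\hyb_4$. Hence the sets of corrupted parties that successfully redeem, and the ensuing $\payout$ distributions, coincide round by round. If every corrupt party redeems, the honest parties in $\hyb_3$ reconstruct $\key$ and decrypt $\ctx$; by perfect correctness of $(\Enc,\Dec)$ they obtain exactly $\func(\inp_1,\ldots,\inp_n)$, which is precisely the output delivered to honest parties by $\Func_f^*$ after $\Sim$ sends $\cont$. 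Otherwise, both worlds deliver the $\penalty$ coins to honest parties in lieu of the output, again in the same amounts.

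The main obstacle I anticipate is bookkeeping rather than a conceptual leap: one has to verify that the coin flows into and out of wallets/safes align perfectly across the two worlds, including the case where some (but not all) corrupt parties redeem within the deadline $\timeout$, and that the resampling strategy used to keep the simulator in expected polynomial time does not skew the joint distribution of transcripts and coin balances. Once this alignment is done explicitly for each branch (full abort at lock, partial redeem, full redeem), equality of the two ensembles follows, and combining this lemma with Lemmas~\ref{lem:hyb01}, \ref{lem:hyb12}, and \ref{lem:hyb23} concludes the proof of Thm.~\ref{thm:multilock_security}.
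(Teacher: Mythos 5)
Your proposal is correct and follows essentially the same route as the paper's proof: both establish that $\hyb_3$ and $\hyb_4$ are identically distributed by (i) using perfect binding of $\Commit$ to argue that a corrupted party can redeem only by opening to the true share, so the redeem/compensation branching and honest outputs (via perfect correctness of $(\Enc,\Dec)$) coincide, and (ii) checking that $\Sim$ perfectly emulates $\FuncML$ with the coin flows through $\Func_\func^*$ matching in every branch. Your extra case-by-case bookkeeping (full abort, partial redeem, full redeem) and the deferred expected-polynomial-time remark are consistent with how the paper handles these points outside the lemma itself.
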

\begin{proof}
	We claim that the distribution of $\hyb_3(\secpar)$ and $\hyb_4(\secpar)$ are identical.
	First, note that the distribution of the honest parties' output is the same in the two experiments.
	This follows by perfect binding of the commitments and perfect correctness of the encryption scheme. In fact, in $\hyb_3(\secpar)$, the commitments $(\com_j)_{j\in[n]}$ can be opened only to a valid share $\key_j$ of the secret key $\key$ under which the value $f(x_1,\ldots,x_n)$ is encrypted, and this invariant is maintained by the simulator.
	
	Second, the simulator $\advS$ perfectly emulates the functionality $\FuncML$.
	This is because, $\advS$ sends $\Func_f^*$ a penalty amount of $\coins(hq)$ that are returned to $\Sim$ if and only if all corrupted parties exhibit a valid opening during the redeem phase of the protocol. In particular, in case all corrupted parties reveal a valid witness, $\Sim$ sends $\cont$ to $\Func^*_f$, receiving $\coins(hq)$ back, which shows that the coins are correctly redistributed among the parties.
\end{proof}
\noindent The theorem  follows by combining the above lemmas.

\section{Further Examples of Ladder Protocols}\label{apdx:ladderexamples}
Below, 
we illustrate the 3-party case of the \LLMech\ protocol (without considering Bootstrap Deposits).

\begin{center}\footnotesize
	\textbf{ROOF}: $ P_j \xrightarrow[q,\tau_4]{\hspace*{1cm} TT_3 \hspace*{1cm}} P_3$ (for $j \in \{1,2\}$)\\
	\hspace*{0.1cm}\textbf{LADDER}:
	$ P_3 \xrightarrow[q,\tau_3]{\hspace*{1cm} TT_2 \land U_{3,2} \hspace*{1cm}} P_2$ (Rung Unlock)\\
	\hspace*{0.7cm} $ P_3 \xrightarrow[2q,\tau_2]{\hspace*{1cm} TT_2 \hspace*{1cm}} P_2$ (Rung Climb)\\
	\hspace*{1.1cm} $ P_2 \xrightarrow[q,\tau_2]{\hspace*{1cm} TT_1 \land U_{2,3} \hspace*{1cm}} P_3$ (Rung Lock)\\
	\hspace*{-1.8cm}\textbf{FOOT}:
	$ P_2 \xrightarrow[q,\tau_1]{\hspace*{1cm} TT_1 \hspace*{1cm}} P_1$ 
\end{center}

The tokens $U_{i,j}$ are necessary to avoid specific attacks, for further details look at \cite{kumaresan2015use}.

Fig.~\ref{fig:lockedladder4} and Fig.~\ref{fig:plantedladder4} show the amount of coins locked during the Deposit and Claim Phase of the 4-party $\LMech$ and $\PLMech$ Protocols. Since $\LMech$ and $\PLMech$ are for reactive functionalities, in the figures we assume a 2-stages functionality for each protocol. Fig.~\ref{fig:amortizedladder4} illustrates the amount of coins locked for the $\ALMech$ Protocol.

\begin{figure}[!t]
\centering
	\resizebox*{7cm}{!}{\begin{tikzpicture}
\begin{groupplot}[
group style={group size=4 by 1,vertical sep = 1cm},
height=5cm,width=5cm, ymin = -13, 
ymax = 0]
\nextgroupplot[title=$P_1$,xlabel=Time (Prot.\ Steps),ylabel=(Multiples of $\deposit$),legend to name = curves2]
\addplot[mark = *,red] coordinates { 
	(0,0)
	(1,-1)
	(2,-1)
	(3,-1)
	(4,-1)
	(5,-1)
	(6,-1)
	(7,-1)
	(8,-1)
	(9,-1)
	(10,-4)
	(11,-5)
	(12,-5)
	(13,-5)
	(14,-5)
	(15,-5)
	(16,-5)
	(17,-5)
	(18,-5)
	(19,-5)
	(20,-8)
};
\addlegendentry{Deposits}
\addplot[mark = *,blue] coordinates { 
	(20,-8)
	(21,-5)
	(22,-5)
	(23,-4)
	(24,-4)
	(25,-4)
	(26,-4)
	(27,-4)
	(28,-4)
	(29,-1)
	(30,0)
	(31,0)
	(32,0)
	(33,0)
	(34,0)
	(35,0)
};
\addlegendentry{Withdrawals}
\coordinate (top1) at (axis cs:20,\pgfkeysvalueof{/pgfplots/ymax});
\coordinate (bot1) at (axis cs:20,\pgfkeysvalueof{/pgfplots/ymin});
%
%
\nextgroupplot[title=$P_4$](b)
\addplot[mark = *,red] coordinates { 
	(0,0)
	(1,0)
	(2,-1)
	(3,-4)
	(4,-4)
	(5,-5)
	(6,-5)
	(7,-5)
	(8,-5)
	(9,-6)
	(10,-6)
	(11,-6)
	(12,-7)
	(13,-10)
	(14,-10)
	(15,-11)
	(16,-11)
	(17,-11)
	(18,-11)
	(19,-12)
	(20,-12)
};
\addplot[mark = *,blue] coordinates {
	(20,-12)
	(21,-12)
	(22,-11)
	(23,-11)
	(24,-10)
	(25,-10)
	(26,-9)
	(27,-9)
	(28,-6)
	(29,-5)
	(30,-5)
	(31,-4)
	(32,-4)
	(33,-3)
	(34,-3)
	(35,0)
};
\coordinate (bot) at (rel axis cs:1,0);
\coordinate (top4) at (axis cs:20,\pgfkeysvalueof{/pgfplots/ymax});
\coordinate (bot4) at (axis cs:20,\pgfkeysvalueof{/pgfplots/ymin});
\end{groupplot}
\path (top)--(bot) coordinate[midway] (group center);
\node[above,rotate=90] at (group center -| current bounding box.west) {Coins};
\draw [dashed] (bot1) -- (top1);
\draw [dashed] (bot4) -- (top4);
\end{tikzpicture}}
		\vspace*{-0.8\baselineskip}
	\caption{Coins locked in a run of the 4-party 2-stages Locked Ladder Protocol during the Deposit Phase (in red) and the Claim Phase (in blue).}\label{fig:lockedladder4}
\end{figure}

\begin{figure}[!t]
\centering
\resizebox*{7cm}{!}{\begin{tikzpicture}
\begin{groupplot}[
group style={group size=4 by 1,vertical sep = 1cm},
height=5cm,width=5cm, ymin = -22, 
ymax = 0]
\nextgroupplot[title=$P_1$,xlabel=Time (Prot.\ Steps),ylabel=(Multiples of $\deposit$),legend to name = curves2]
\addplot[mark = *,red] coordinates { 
	(0,0)
	(1,-3)
	(2,-3)
	(3,-3)
	(4,-3)
	(5,-11)
	(6,-11)
	(7,-11)
	(8,-11)
	(9,-15)
	(10,-15)
	(11,-15)
	(12,-15)
	
};
\addlegendentry{Deposits}
\addplot[mark = *,blue] coordinates { 
	(12,-15)
	(13,-14)
	(14,-14)
	(15,-14)
	(16,-14)
	(17,-9)
	(18,-9)
	(19,-9)
	(20,-9)
	(21,0)
	(22,0)
	(23,0)
	(24,0)
	
};
\addlegendentry{Withdrawals}
\coordinate (top1) at (axis cs:12,\pgfkeysvalueof{/pgfplots/ymax});
\coordinate (bot1) at (axis cs:12,\pgfkeysvalueof{/pgfplots/ymin});
%
%
%
%
%
%
\nextgroupplot[title=$P_4$](b)
\addplot[mark = *,red] coordinates { 
	(0,0)
	(1,0)
	(2,-11)
	(3,-11)
	(4,-11)
	(5,-11)
	(6,-18)
	(7,-18)
	(8,-18)
	(9,-18)
	(10,-21)
	(11,-21)
	(12,-21)
	
};
\addplot[mark = *,blue] coordinates {
	(12,-21)
	(13,-21)
	(14,-21)
	(15,-21)
	(16,-17)
	(17,-17)
	(18,-17)
	(19,-17)
	(20,-9)
	(21,-9)
	(22,-9)
	(23,-9)
	(24,0)
	
};
\coordinate (bot) at (rel axis cs:1,0);
\coordinate (top4) at (axis cs:12,\pgfkeysvalueof{/pgfplots/ymax});
\coordinate (bot4) at (axis cs:12,\pgfkeysvalueof{/pgfplots/ymin});
\end{groupplot}
\path (top)--(bot) coordinate[midway] (group center);
\node[above,rotate=90] at (group center -| current bounding box.west) {Coins};
\draw [dashed] (bot1) -- (top1);
\draw [dashed] (bot4) -- (top4);
\end{tikzpicture}}
		\vspace*{-0.8\baselineskip}
	\caption{Coins locked in a run of the 4-party 2-stages Planted Ladder Protocol during the Deposit Phase (in red) and the Claim Phase (in blue).}\label{fig:plantedladder4}
\end{figure}

\begin{figure}[!t]
\centering
	\resizebox*{7cm}{!}{\begin{tikzpicture}
\begin{groupplot}[
group style={group size=4 by 1,vertical sep = 1cm},
height=5cm,width=5cm, ymin = -13, 
ymax = 0]
\nextgroupplot[title=$P_1$,xlabel=Time (Prot.\ Steps),ylabel=(Multiples of $\deposit$),legend to name = curves2]
\addplot[mark = *,red] coordinates { 
	(0,0)
	(1,-8)

};
\addlegendentry{Deposits}
\addplot[mark = *,blue] coordinates { 
	(1,-8)
	(2,0)
	
};
\addlegendentry{Withdrawals}
\coordinate (top1) at (axis cs:1,\pgfkeysvalueof{/pgfplots/ymax});
\coordinate (bot1) at (axis cs:1,\pgfkeysvalueof{/pgfplots/ymin});
%
%
%
%
%
%
\nextgroupplot[title=$P_4$](b)
\addplot[mark = *,red] coordinates { 
	(0,0)
	(1,-12)
	
};
\addplot[mark = *,blue] coordinates {
	(1,-12)
	(2,0)
	
};
\coordinate (bot) at (rel axis cs:1,0);
\coordinate (top4) at (axis cs:1,\pgfkeysvalueof{/pgfplots/ymax});
\coordinate (bot4) at (axis cs:1,\pgfkeysvalueof{/pgfplots/ymin});
\end{groupplot}
\path (top)--(bot) coordinate[midway] (group center);
\node[above,rotate=90] at (group center -| current bounding box.west) {Coins};
\draw [dashed] (bot1) -- (top1);
\draw [dashed] (bot4) -- (top4);
\end{tikzpicture}}
		\vspace*{-0.8\baselineskip}
	\caption{Coins locked in a run of the 4-party Amortized Ladder Protocol during the Deposit Phase (in red) and the Claim Phase (in blue).}\label{fig:amortizedladder4}
\end{figure}

\section{Bitcoin}\label{sec:bitcoin-apdx}
\subsection{Bitcoin Transaction System and Its Malleability Problem}\label{sec:bitcointx}
A coin $\btc$ in Bitcoin is represented by a tuple
$(\btchtx,\btcidx,v,\btcscript(\cdot))$, where 
$\btchtx = \crh{\btctx}$ is the \emph{hash}
of an existing transaction $\btctx$,
$\btcidx$ is the index of the coin in the list of $\btctx$'s outputs,
$v$ is the value of $\btc$, 
and $\btcscript(\cdot)$ is the circuit in which a user, to spend $\btc$, must specify a \emph{witness} $\btcinscript$ s.t.\ $\btcscript(\btcinscript) = 1$.
The coin $\btc$ is called \emph{unspent} if there is no \emph{confirmed transaction}
that refers to $\btc$ as input.  
A multiple inputs/outputs transaction $\btctx'$ that spends $N$ input coins $\{\btc_i = (\btchtx_i,\btcidx_i,v_i,\btcscript_i(\cdot))\}_{i=1}^N$ to $M$ output coins $\{\btc'_j = (\btchtx',j,v'_j,\btcscript'_j(\cdot))\}_{j=1}^M$ is  of the form
$\btctx' = (\{(\btchtx_i,\btcidx_i,v_i,\btcinscript_i)\}_{i=1}^N,\allowbreak\{j,v'_j,\allowbreak\btcscript'_j(\cdot)\}_{j=1}^M,\kt)$,
where $\sum_i v_i \ge \sum_j v'_j$; $\forall i: \btcscript_i(\btcinscript_i) = 1$; and $\kt$ refers to the \texttt{lock\_time} field of a  transaction, which tells the miners \emph{not to confirm} $\btctx'$ until time $\kt$.

The \emph{simplified} form of a transaction that \emph{excludes} the {witness} $\btcinscript_i$ is of the form
	$\btctxsimp' = (\{(\btchtx_i,\btcidx_i,v_i)\}_{i=1}^N,\{j,v'_j,\btcscript'_j(\cdot)\}_{j=1}^M,\kt)$.
This is useful as typically $\btcscript(\btcinscript)$ is simply a signature verification, 
i.e. $\btcinscript = \sigma$ and $\btcscript(\btcinscript) = 1$ iff $\ver_S(\btctxsimp',\sigma) = 1$ where 
$\sigma$ is the signature of the sender on $\btctxsimp'$. 


To briefly recall, the malleability problem happens when a new transaction refers to a coin in a \emph{unconfirmed} transaction as an input, e.g.
\begin{itemize}
	\item A transaction $\btctx_1$ is \emph{created} (but unconfirmed), and spends the input coin $\btc_0$ to a new coin $\btc_1$.
	\item  A new transaction $\btctx_2$ can be created \emph{immediately} to spend the new coin $\btc_1$ to another coin $\btc_2$.
\end{itemize}

Since the validity of $\btctx_2$ now depends on the existence of $\btctx_1$, there is a chance that $\btc_2$ is \emph{invalid}. 
	
	(1) The attacker can simply eavesdrop the blockchain, and upon seeing\footnote{As the Bitcoin blockchain is public, anyone can see a transaction before it is confirmed.} $\btctx_1$, slightly modify $\btctx_1$'s witness $\btcinscript_0$ to obtain a modified $\btctx'_1$ with the new witness $\btcinscript'_0$ by (1) keeping the valid $\sigma_0$ (on \emph{only} the simplified form that does not contains witness $\btcinscript_0$) and (2) adding into $\btcinscript_0$ some Bitcoin op codes that do not affect the circuit evaluation, e.g. adding some additional data to be pushed on the stack prior to the required signatures following with \texttt{OP\_DROP} to leave the stack exactly as before afterwards.
	
	(2) The attacker can now broadcast the $\btctx'_1$ and the two transactions ($\btctx_1$ and $\btctx'_1$) enter a race to be confirmed on the blockchain. If $\btctx'_1$ gets to the blockchain then $\btctx_1$ will not be confirmed and $\btctx_2$ is not valid as it refers to the \emph{unconfirmed and now invalid}
	$\btchtx_1 = \crh{\btctx_1}$ but not the confirmed $\btchtx'_1 = \crh{\btctx'_1}$, and $\btchtx_1 \neq \btchtx'_1$ due to $\btcinscript_0 \neq \btcinscript'_0$ (e.g., $\btcinscript'_0$ contains the harmless additional data and \texttt{OP\_DROP}).

\paragraph*{The Malleability Problem in Bitcoin-based $\FuncCR$.}
Let us recall that $\FuncCR$'s goal is to let a sender $\btcs$ to \emph{conditionally} send some money to a receiver $\btcr$: if $\btcr$ reveals a pre-image
$\kk$ of a commitment $\hk$ (i.e., $\crh{\kk} = \hk$) before a time $\kt$, the receiver $\btcr$ can claim the money otherwise it is refunded to the sender $\btcs$ after time $\kt$. 

To do so, the sender ${\btcs}$ creates a transaction $\btctx_{\mathsf{CR}} = (\btchtx_0,\btcidx_0,v,\allowbreak\sigma_0,1,v,\btcscript_{\mathsf{CR}}(\cdot))$
that spends a coin $\btc_0 = (\btchtx_0, \btcidx_0, v, \btcscript_0(\cdot))$ 
to a new coin $\btc_{\mathsf{CR}}$ with the circuit $\btcscript_{\mathsf{CR}}(\cdot)$ such that $\btcscript_{\mathsf{CR}}(\btcinscript_{\mathsf{CR}}) = 1$ iff
either 
(1) the receiver claims the money with a transaction $\btctx_{\mathsf{C}}$ with the witness $\btcinscript_{\mathsf{CR}} = (\sigma_{\btcr}, \kk)$ such that $(\ver_{\btcr}(\sigma_{\btcr})) \land (\crh{\kk} = \hk)$;
or (2) the sender gets refunded with a transaction $\btctx_{\mathsf{R}}$ with the witness $\btcinscript_{\mathsf{CR}} = (\sigma_{\btcs}, \sigma_{\btcr})$ such that $(\ver_{\btcs}(\sigma_{\btcs})) \land (\ver_{\btcr}(\sigma_{\btcr}))$. 

Indeed the refund transaction $\btctx_{\mathsf{R}}$ \emph{should only need} the signature $\sigma_{\btcs}$ of the sender $\btcs$ but to enforce the time lock $\kt$, the signature of both parties must be involved, i.e. if the $\btctx_{\mathsf{R}}$ only asks for the signature $\sigma_{\btcs}$, the sender $\btcs$ can just create a transaction with time lock $\kt' < \kt$, and thus get refund earlier. Thus, concretely:
\[\btctx_{\mathsf{R}} = (\btchtx_{\mathsf{CR}},1,v,(\sigma_{\btcs}, \sigma_{\btcr}),\btcscript_{\mathsf{R}}(\cdot),\kt).\]
At the same time, if the sender $\btcs$ broadcasts the transaction $\btctx_{\mathsf{CR}}$ immediately and it gets confirmed, s/he suffers from an attack where the receiver $\btcr$ neither reveals the pre-image $\kk$ to claim the money nor allows the sender to get the refund. This attack is possible because the refund transaction $\btctx_{\mathsf{R}}$ asks also for the signature of $\btcr$. If the receiver refuses to sign $\btctx_{\mathsf{R}}$ the money will be locked. As a result the refund transaction $\btctx_{\mathsf{R}}$, which \emph{refers to $\btctx_{\mathsf{CR}}$ as input}, has to be \emph{created before the transaction $\btctx_{\mathsf{CR}}$ is confirmed}. And this is the root of the malleability problem: $\btctx_{\mathsf{R}}$ can be invalidated using the Bitcoin Malleability attack described above.

\subsection{Solutions to Malleability in Protocols Implementations}\label{sec:bitcoin-mall}
A general solution to this problem was proposed in a Bitcoin fork that introduced the Segregated Witness 
transaction, which separates the malleable witness $\btcinscript$ from the computation of the 
transaction id $\btchtx$, {\em i.e.}\ $\btchtx = \crh{\btctxsimp}$ in a \emph{segwit} transaction.\footnote{See 
	\url{https://bitcoincore.org/en/segwit_wallet_dev/}} Thus, a new transaction can ``safely'' 
refer to a previous (possibly unconfirmed) transaction. Yet, a segwit transaction only avoids 
malleability if all inputs are segwit spends, and it is only applicable to the original Bitcoin but not Bitcoin Cash.\footnote{See \url{https://coincentral.com/what-is-bitcoin-cash-vs-bitcoin/}. According to 
	CoinMarketCap (\url{https://coinmarketcap.com/}), as of Aug 2020, Bitcoin Cash is still among the top 10 cryptocurrencies.} Further, referring to an 
unconfirmed transaction is a bad protocol design practice as unconfirmed transactions are 
not guaranteed to be valid later. Hence, a more generally applicable solution would be preferable.

An alternative solution would be to make use of Bitcoin's new op code 
\texttt{OP\_CHECKLOCKTIMEVERIFY},\footnote{See~\url{https://github.com/bitcoin/bips/blob/master/bip-0065.mediawiki}.} which allows a 
transaction output to be made unspendable until some point in the future~\cite{KumaresanB16}. This allows to 
circumvent the malleability problem by letting $\btctx_{\mathsf{R}}$ be generated \emph{after} $\btctx_{\mathsf{CR}}$ is confirmed. 
We refer the reader to \S\ref{sec:bitcoin-fcrfix} for 
further details.
However, this solution only works for $\FuncCR$ but not for $\FuncML$.
In fact, in the implementation from~\cite{KumaresanB14}, all $n$ parties agree on a 
\emph{simplified} lock transaction $\btctxsimp_{\mathsf{lock}} = (\{\btchtx_i,\btcidx_i,x\},\{x,
\btcscript_{\mathsf{lock},i}(\cdot)\})$. The circuit $\btcscript_{\mathsf{lock},i}(\cdot)$ intuitively ensures that
either $\party_i$ reveals its share and claims back its 
deposit (of value $\deposit$) before time $\kt$, or the deposit will be used to compensate all remaining parties 
$\party_{j \neq i}$ (with $\tfrac{\deposit}{n-1}$ coins) after time $\kt$. 
All parties then compute a global $\btchtx_{\mathsf{lock}} = \crh{\btctx_{\mathsf{lock}}}$ using 
MPC,\footnote{To prevent the parties to learn the full form $\btctx_{\mathsf{lock}}$ thus the 
	transaction cannot be posted and lock the money until a later step where all parties agree on all 
	transactions required for $\FuncML$.} where the secret input to each player consists of the witness $\btcinscript_i$.

Unfortunately, to enforce the time lock $\kt$, the compensation transaction $\btctx_{\mathsf{pay},i}$ must be signed by all parties, and must refer to the unconfirmed transaction $\btctx_{\mathsf{lock}}$, which makes the malleability attack possible.
The source of the problem is not just the technique ({\em i.e.}\ the \texttt{lock\_time} field), but also the 
protocol itself: it requires that the lock transaction $\btchtx_{\mathsf{lock}}$ is only 
completed and posted \emph{after} all parties have agreed on the set of 
transactions $\{\btctx_{\mathsf{pay},i}\}_{i \in [n]}$, otherwise a malicious party $\party_j$ 
can just refuse to sign a $\btctx_{\mathsf{pay},i}$ for $\party_i$, while the honest but 
unfortunate $\party_i$ has signed $\btctx_{\mathsf{pay},j}$, which means that $\party_j$ can claim 
back the money but $\party_i$ cannot. 
\vspace{-0.3\baselineskip}
\subsection{Fixing the Bitcoin-based $\FuncCR$}\label{sec:bitcoin-fcrfix}
Using \texttt{OP\_CHECKLOCKTIMEVERIFY}, the lock time $\kt$ can now be embedded directly into the circuit $\btcscript_{\mathsf{CR}}(\cdot)$: when verifying the refund transaction $\btctx_{\mathsf{R}}$ with the witness $\btcinscript_{\mathsf{CR}} = (\sigma_{\btcs})$, all miners (automatically) take an additional input $t$ that is the current block height, and verify that $\ver_{\btcs}(\sigma_{\btcs})$ and $t \ge \kt$.\footnote{This also technically enforces the $lock\_time$ field of the transaction $\btctx_{\mathsf{R}}$ to have a value at least $\kt$.}

Concretely, a high-level view of the Bitcoin script for $\btcscript_{\mathsf{CR}}(\cdot)$ is as follows~\cite{KumaresanB16}.

\begin{framed}\footnotesize

	\noindent\texttt{<$\kt$> CHECKLOCKTIMEVERIFY 
	\\\noindent IF HASH256 <$\hk$> EQUALVERIFY 
	 <$\pk_{\btcr}$> CHECKSIGVERIFY 
	\\\noindent ELSE <$\pk_{\btcs}$> CHECKSIGVERIFY 
	\\\noindent ENDIF}
\end{framed}
Since the lock time is now directly in the circuit $\btcscript_{\mathsf{CR}}(\cdot)$ it allows the refund transaction $\btctx_{\mathsf{R}}$ to be created \emph{after} $\btctx_{\mathsf{CR}}$ is \emph{confirmed} as there is no longer a need for the signature of $\btcr$ on $\btctx_{\mathsf{R}}$.

\paragraph*{Bitcoin Realization of $\FuncML$}\label{sec:bitcoin-impl}
We show how to implement $\FuncML$ using the new op 
code \texttt{OP\_CHECKLOCKTIMEVERIFY}.

	\begin{framed}\footnotesize
		\begin{center}
			\textbf{Bitcoin Realization of $\FuncML$}
		\end{center}
		To transact in Bitcoin, a sender $\btcs$ generates a pair of public and private keys $\pk_{\btcs}$ and $\sk_{\btcs}$ (the so-called \emph{addresses}). Similarly, we write $(\pk_{\btcr},\sk_{\btcr})$ for the key of the receiver $\btcr$.
		Let us denote by $\sigma \getsr \sig_{\btcs}(\msg)$  the signature of $\btcs$ on $\msg$ (using $\sk_{\btcs}$), and by $\ver_{\btcs}(\msg,\sigma)$  the signature verification function (using $\pk_{\btcs}$).
		Since the witness $\btcinscript$ contains the signature $\sigma$, it is not possible to include $\btcinscript$ in $\msg$, and thus signing a transaction $\btctx$ means signing only its simplified form ({\em i.e.},\ $\msg = \btctxsimp$).
		We assume that each $\party_i$ owns a coin $\btc_i = (\btchtx_i, \btcidx_i, (n-1)q, \btcscript_i(\cdot))$ that is spendable only by $\party_i$ signing the spending transaction.
		\begin{description}
			\item[Lock Phase:] The players proceed as follows.
			\begin{enumerate}[leftmargin=-0.3cm]
				\item Each $\party_i$ ($i \in [n]$) creates $\btcidx_{\mathsf{lock}^i_j} = (i(n-1)+j$
				and
								\begin{align*}
				\btctxsimp_{\mathsf{lock}} = & (\{\btchtx_i, \btcidx_i, (n-1)q\}_{i \in [n]},\\ 
				& 
				 \{\btcidx_{\mathsf{lock}^i_j}, q,
				\btcscript^i_j(\cdot)\}_{i,j \in [n] \land i \neq j})
								\end{align*}
				where $\btcscript^i_j(\cdot)$ takes as input either (i) before time $\kt$ the witness 
				$\btcinscript^i_j = (\kk_i, \sigma_i)$
				and returns 1 iff $\crh{\kk_i} = \hk_i$ and $\ver_i(\sigma_i) = 1$; or (ii) after time $\kt$ the witness 
				$\btcinscript^i_j = \sigma_j$ 
				and returns 1 iff $\ver_j(\sigma_j)$.
				\item Each $\party_i$ ($i \in [n]$) generates $\btcinscript_i \getsr \sig_{i}(\btctxsimp_{\mathsf{lock}})$.
				\item Each $\party_i$ ($i \in [n)$) sends $\btcinscript_i$ to $\party_n$.
				\item $\party_n$ creates 
				$\btctx_{\mathsf{lock}} = (\{\btchtx_i, \btcidx_i, \allowbreak(n-1)q, \btcinscript_i\}_{i \in [n]},\allowbreak \{\btcidx_{\mathsf{lock}^i_j},q,\allowbreak\btcscript^i_j\}_{i,j \in [n] \land i \neq j})$ and sends it to the ledger.
			\end{enumerate}
			\item[Redeem Phase:] Before time $\kt$:
			\begin{enumerate}[leftmargin=-0.3cm]
				\item \noindent  Each $\party_i$ ($i \in [n]$) obtains $\btchtx_{\mathsf{lock}}$ from the ledger.
				\item Each $\party_i$ ($i \in [n]$) creates and sends to the ledger
								\begin{align*}
				\btctx_{\mathsf{redeem},i} = & (\{\btchtx_{\mathsf{lock}}, \btcidx_{\mathsf{lock}^i_j}, q,\\ & (\kk_i, \sigma_i)\}_{j \in [n] \land j \neq i},
				(1, (n-1)q, \btcscript_\mathsf{new}^i(\cdot))),
								\end{align*}
				where $\sigma_i \getsr \sig_i(\btctxsimp_{\mathsf{redeem},i})$ and $\btcscript_\mathsf{new}^i(\cdot)$ is any Bitcoin script (typically just a signature verification).
			\end{enumerate}
			\item[Compensation Phase:] After time $\kt$:
			\begin{enumerate}[leftmargin=-0.3cm]
				\item Each $\party_j$ ($j \in [n]$) obtains $\btchtx_{\mathsf{lock}}$ from the ledger.
				\item For the unclaimed $\{\btc_{\mathsf{lock}^i_j}\}$ by $\party_i$ in the redeem phase, each $\party_j$ creates and send to the ledger
				\begin{align*}
				\btctx_{\mathsf{compensate}^j} &= (\{\btchtx_{\mathsf{lock}},\btcidx_{\mathsf{lock}^i_j}, q, (\sigma_j)\}, \\
				&\qquad(1, q, \btcscript_\mathsf{new'}^j(\cdot))),
				\end{align*}
				where $\sigma_j \getsr \sig_j(\btctxsimp_{\mathsf{compensate}^j})$ and $\btcscript_\mathsf{new'}^j(\cdot)$ is any Bitcoin script (typically just a signature verification).
			\end{enumerate}
		\end{description}
		\vspace*{-\baselineskip}
	\end{framed}

Our implementation is inspired by both the Bitcoin-based Timed Commitment 
scheme~\cite{AndrychowiczDMM14} and the original Bitcoin-based $\FuncML$.\footnote{We also ignore transaction fees. 
	However one can replace $q$ with $q' < q$ in our protocol to include 
	any reasonable transaction fee $q - q'$.}
, but 
we avoid the malleability problem by \emph{not referring to any unconfirmed 
transaction} in our protocol design.
Note that, in our implementation, \emph{all transactions refers only to the confirmed transactions}, hence our Bitcoin realization of $\FuncML$ does not suffer from the malleability problem as in~\cite{KumaresanB14}.

It is easy to see that our construction correctly implements $\FuncML$. In fact, a total of $n(n-1)q$ coins are locked during the Lock Phase. An honest party can redeem his own $(n-1)q$ coins during the Redeem Phase by revealing the committed secret $\kk_i$ before time $\kt$. Otherwise, after time $\kt$, in the Compensation Phase, an honest party gets $q$ coins as compensation (for each dishonest player that did not open).
Security relies on the honest majority of the Bitcoin network to enforce script validation and resist double spending~\cite{nakamoto2008bitcoin}. 
For simplicity, we assume a party is rational in the sense that it always redeems a coin as soon as it becomes possible.

In the Lock Phase, transaction $\btctx_{\mathsf{lock}}$ is co-created by all parties (who also make sure that they sign the correct transaction $\btctxsimp_{\mathsf{lock}}$), and takes the spendable coins $\{\btc_i\}_{i \in [n]}$ as input. Once the $\btctx_{\mathsf{lock}}$ is confirmed by the miners, the coins $\{\btc_i\}_{i \in [n]}$ will be locked and not spendable.
A high-level Bitcoin script for $\btcscript^i_j(\cdot)$ would be (see the appendix~\ref{apdx:demo} for more details on the code):

	\texttt{IF HASH256 <$\hk_i$> EQUALVERIFY <$\pk_{i}$> CHECKSIG
		\\\indent ELSE <$\kt$> CHECKLOCKTIMEVERIFY DROP <$\pk_{j}$> CHECKSIG 
		ENDIF}

In the Redeem Phase, only a $\party_i$ that holds the committed secret $\kk_i$ and the corresponding signing key $\sk_i$ can redeem the $n-1$ outputs (at indices $\btcidx_{\mathsf{lock},i,j}$) of $\btctx_{\mathsf{lock}}$ (for a total of $q(n-1)$ coins). This is done by following the IF path of $\btcscript_{i,j}(\cdot)$, {\em i.e.}\ revealing $\kk_i$ and signing the transaction $\btctx_{\mathsf{redeem},i}$ as specified in each circuit $\btcscript_{i,j}$. 
Now, if another party $\party_j$ wants to redeem the same coins as $\party_i$ (at indices $\btcidx_{\mathsf{lock},i,j}$ of $\btctx_{\mathsf{lock}}$), it can only follow the ELSE path. Even though $\party_j$ can create the $\btctx_{\mathsf{compensate},j}$ by itself, the transaction will fail at \texttt{CHECKLOCKTIMEVERIFY} as $\kt$ has not been reached.

In the Compensation Phase, once time $\kt$ has passed, each $\party_j$ can claim the compensation (index $\btcidx_{\mathsf{lock},i,j}$ of $\btctx_{\mathsf{lock}}$) from each dishonest $\party_i$, by signing  $\btctx_{\mathsf{compensate},j}$. (This transaction will pass \texttt{CHECKLOCKTIMEVERIFY}, as $\kt$ is reached.)
$\party_j$ cannot claim the coins from the honest $\party_i$, as $\party_i$ has already redeemed them.

\subsection{Snapshots from the Demo}\label{apdx:demo}
In what follow we show some transactions obtained from our proof of concept implementation of our Bitcoin-based $\FuncML$ for the two party case, running in our own Bitcoin test net. 

Alice, whose receiving address\footnote{Ideally, Alice should use two separate addresses for the two receiving cases, i.e. redeem and compensate, but for simplicity of exposition we use the same address in this demo. The same applies to Bob.} is \texttt{2Mxb\ldots eXuq} and Bob, whose receiving address is \texttt{mrsU\ldots Ke8T}, will together create the Pay-To-Script-Hash (P2SH) lock transaction $\btctx_{\mathsf{lock}}$ that consists of two script-hashes as outputs: \texttt{2NGY\ldots WypP} and \texttt{2MvN\ldots DoBd}.

\texttt{2NGY\ldots WypP} is the hash of the first output script
\begin{footnotesize}
\begin{verbatim}
OP_IF 
OP_HASH256 
32 0xbcdc...f514 OP_EQUALVERIFY 
33 0x0399...03be OP_CHECKSIG 
OP_ELSE 
2 0x9600 OP_NOP2 OP_DROP 
33 0x024e...3c10 OP_CHECKSIG 
OP_ENDIF
\end{verbatim}
\end{footnotesize}
while \texttt{2MvN\ldots DoBd} is the hash of the second output script
\begin{footnotesize}
\begin{verbatim}
OP_IF 
OP_HASH256 
32 0xbba2...76e4 OP_EQUALVERIFY 
33 0x024e...3c10 OP_CHECKSIG 
OP_ELSE 
2 0x9600 OP_NOP2 OP_DROP 
33 0x0399...03be OP_CHECKSIG 
OP_ENDIF
\end{verbatim}
\end{footnotesize}

Notice that \texttt{CHECKLOCKTIMEVERIFY} is implemented using \texttt{OP\_NOP2}.\footnote{See \url{https://github.com/bitcoin/bips/blob/master/bip-0065.mediawiki\#Detailed_Specification}.}
As a result of posting $\btctx_{\mathsf{lock}}$ both parties will obtain the transaction id \texttt{a21e\ldots 3a84}.

To redeem \texttt{2NGY\ldots WypP}, Alice can simply post a signed redeem transaction $\btctx_{\mathsf{redeem},A}$ that spends the first output of \texttt{a21e\ldots 3a84} and includes the secret `alicesSecret' (whose double-SHA256 is \texttt{0xbcdc...f514}) in the witness. To redeem \texttt{2MvN\ldots DoBd} Bob can similarly post (the signed transaction) $\btctx_{\mathsf{redeem},B}$ that spends the second output of \texttt{a21e\ldots 3a84} and includes the secret `bobsSecret' (whose double-SHA256 is \texttt{0xbba2...76e4}) in the witness.

If either Alice or Bob (or both) is dishonest, the remaining party can simply send a signed transaction after the specified lock time (as an example we specify it as the 150-th block, \texttt{0x9600}) to get compensated.
\paragraph*{Practical Limitations.}
In our realization of $\FuncML$ using Bitcoin, only the  
transaction $\btctx_{\mathsf{lock}}$ is a \emph{standard} P2SH transaction, {\em i.e.}\ 
the small set of believed-to-be-safe templates.\footnote{\url{https://bitcoin.org/en/transactions-guide\#standard-transactions}.} The 
other two transactions, {\em i.e.}\ $\btctx_{\mathsf{redeem},*}$ and 
$\btctx_{\mathsf{compensate},i}$, are \emph{non-standard} 
and thus might not be accepted by Bitcoin miners. 

Another practical limitation is w.r.t.\ the number of parties running $
\FuncML$. As of Bitcoin core 0.9.3: ``\ldots The transaction must be smaller than 
100,000 bytes. That is around 200 times larger than a typical single-input, single-output P2PKH transaction \ldots''.
Hence, our Bitcoin-based $\FuncML$ 
can support up to 14 parties in an MPC. This is sufficient for poker, but 
barely enough for more challenging applications such as
secure distributed trading \cite{massacci2018fmex}. Yet, practical MPC with a large number 
of traders was also unreachable for the distributed trading protocol itself. Scaling up is a challenge for the 
community.

\section{Protocol Extensions}\label{app:ext}
\subsection{Reactive \CMLMech}
In the reactive setting, we have to securely 
realize ideal functionalities whose computation proceeds in stages. 
After each phase, the players receive a 
partial output and a state, which both influence the choice of the inputs for the next 
stage. Normally, simulation-based security for non-reactive 
functionalities implies the same flavor of security for reactive ones~\cite{HazayL10}.
However, for secure computation with penalties the naive approach ({\em i.e.}\ run 
an independent instance of the penalty protocol for each stage of the 
functionality) fails in case of aborts after a given stage is concluded 
(and before the next stage starts) 
as one needs a mechanism to force the parties to continue to the next stage.

Let $\pi$ be a protocol for securely realizing (with aborts) a reactive functionality:
during each 
stage, every player independently computes its next message as a function of its 
current input and transcript so far, and broadcasts this message to all other parties.  
Kumaresan {\em et al.}~\cite{kumaresan2016improvements} show how to deal with 
such protocols
using \PLMech. We can do the same for \CLMech\ by leveraging 
the 
power of $\FuncML$. In particular, we consider an invocation of $\FuncML$ 
for each stage of $\pi$, where during the Lock Phase each player $\party_i$ specifies a circuit $\phi_i$ that checks the correctness of $\party_i$'s next 
message w.r.t.\ the protocol transcript so far. 
This is possible so long as the 
underlying MPC is publicly verifiable, a property also used 
in~\cite{kumaresan2016improvements}. The price to pay  is 
a larger communication/round complexity.

The above requires to augment the $\FuncML$ functionality so that each player 
can deposit $\coins(d\cdot \bar\tau)$, where $\bar\tau$ is 
the maximum number of stages in a run of $\pi$, and claim back at most $\coins(d)$ 
for each 
stage ({\em i.e.},\ after revealing its next message).

\subsection{BoBW \CMLMech}
A drawback of cryptographic fairness with penalties is that a {\em single} corrupted player 
can cause the protocol to abort (at the price of compensating the other players).
Ideally, we would like to have a protocol such that when $s_1 < n/2$ players 
are corrupted the protocol achieves full security, whereas in case the number of 
corrupted parties is $s_2 \ge n/2$, the protocol achieves security with aborts or, even 
better, fairness with penalties.
This yields so-called best-of-both-worlds (BoBW) security~\cite{IshaiKKLP11}, which is known to be impossible in the parameters regime $s_1 + s_2 \ge n$.

Kumaresan {\em et al.}~\cite{kumaresan2016improvements} provide a dual mode 
protocol achieving BoBW security for any  $s_1 + s_2 < n$. We can 
easily adapt their approach---and, in fact, even simplify it thanks to the power of $\FuncML$---to our \CMLMech\ protocol, by using the following 
modifications:
(i) The function $\tilde\func$ now computes an $(s_2+1,n)$-threshold secret sharing 
of the symmetric key $\key$;
(ii) At the end of the Reconstruction Phase, all honest parties broadcast their share.
Modification (i) ensures that, at the end of the Share Distribution Phase, an attacker 
controlling at most $s_2$ parties has no information on the output. The dual-mode protocol of~\cite{kumaresan2016improvements} relies on \PLMech,\ as \LMech\ does not guarantee compensation to honest 
parties that did not reveal their share (say, due to an abort during the protocol); this eventuality is not possible given the atomic nature of $
\FuncML$, and thus no further change to $\CMLMech$ is required 
to obtain fairness with penalties in the presence of $s_2$ corrupted parties.
Modification (ii) is needed because when $s_1 < n - s_2$ parties are corrupted, 
there are at least $n - s_2 \ge s_1 + 1$ honest parties, which allows everyone to 
 obtain the output by correctness of the secret sharing scheme.

\clearpage
\fi

\end{document}